\definecolor{lightblue}{RGB}{173,216,230}
\newcolumntype{L}{>{\hsize=1\hsize\raggedright\arraybackslash}X}
\newcolumntype{Y}{>{\hsize=.4\hsize\raggedright\arraybackslash}X}
\newcolumntype{Z}{>{\hsize=1.6\hsize\raggedright\arraybackslash}X}
\newcommand{\undentLength}{-10pt}
\newcommand{\undent}{\hspace*{\undentLength}}
\newcommand{\Exp}{\mathbb{E}}
\newcommand{\expit}{\mathrm{expit}}
\newcommand{\logit}{\mathrm{logit}}
\newcommand{\CI}{\mathrel{\perp\mspace{-10mu}\perp}}
\newcommand{\nCI}{\centernot{\CI}}
\newcounter{append}
\renewcommand{\theappend}{\Alph{append}}
\newenvironment{myAppendix}[1][]{\refstepcounter{append}\section*{\Large{\textsc{Appendix~\theappend#1}}}}{}
\newcounter{assump}
\newenvironment{myAssumption*}[1][]{\begin{quotation}\textit{Assumption\ifthenelse{\equal{#1}{}}{.}{~#1.}\quad}}{\end{quotation}}
\newcounter{defin}
\newenvironment{myDefinition*}[1][]{\begin{quotation}\textit{Definition\ifthenelse{\equal{#1}{}}{.}{~#1.}\quad}}{\end{quotation}}
\newtheorem{theorem}{Theorem}
\newtheorem{corollary}{Corollary}
\newtheoremstyle{named}{}{}{\itshape}{}{\bfseries}{.}{.5em}{\thmnote{#3}}
\theoremstyle{named}
\newtheorem*{remark}{Remark}
\newcommand{\TableMatching}{Table 2}
\newcommand{\TableNoMatching}{Table 1}
\newcommand{\BaseSamplingNoMatching}{S1}
\newcommand{\SurvivorSamplingNoMatching}{S2}
\newcommand{\RiskSetSamplingNoMatching}{S3}
\newcommand{\PPRiskSetSamplingNoMatching}{S4}
\newcommand{\HomogeneousRiskRatios}{H1}
\newcommand{\HomogeneousRates}{H2}
\newcommand{\HomogeneousConditionalRates}{H3}
\newcommand{\PPHomogeneousRates}{H4}
\newcommand{\BaseSamplingMatching}{M1}
\newcommand{\SurvivorSamplingMatching}{M2}
\newcommand{\RiskSetSamplingMatching}{M3}
\newcommand{\PPRiskSetSamplingMatching}{M4}
\newcommand{\HomogeneousOddsRatios}{H5}
\newcommand{\HomogeneousRateRatios}{H6}
\newcommand{\PPHomogeneousRateRatios}{H7}
\newcommand{\mytitle}{Identification of causal effects in case-control studies}
\newcommand{\Item}{{\small $\bullet$}~}
\begin{document}

\allowdisplaybreaks








\newpage

\articletype{Research Article{\hfill}}

\author*[1]{Bas B.L. Penning de Vries}

\author[2]{Rolf H.H. Groenwold}

\affil[1]{Department of Clinical Epidemiology, Leiden University Medical Center, PO Box 9600, 2300 RC, Leiden, The Netherlands; telephone: +31~71 526~5639; e-mail: B.B.L.Penning\_de\_Vries@lumc.nl}

\affil[2]{Departments of Clinical Epidemiology and Biomedical Data Sciences, Leiden University Medical Center, Leiden, The Netherlands}

\title{\huge \mytitle}

\runningtitle{Identification in case-control studies}


\begin{abstract}
{Case-control designs are an important tool in contrasting the effects of well-defined treatments. In this paper, we reconsider classical concepts, assumptions and principles and explore when the results of case-control studies can be endowed a causal interpretation. Our focus is on identification of target causal quantities, or estimands. We cover various estimands relating to intention-to-treat or per-protocol effects for popular sampling schemes (case-base, survivor, and risk-set sampling), each with and without matching. Our approach may inform future research on different estimands, other variations of the case-control design or settings with additional complexities.}
\end{abstract}
\keywords{Causal inference, case-control designs, identifiability}

  \journalname{Journal of Causal Inference}

  \journalyear{2019}
  \journalvolume{1}

\maketitle

\thispagestyle{empty}



\newpage




\section{Introduction}

In causal inference, it is important that the causal question of interest is unambiguously articulated \citep{Hernan2020book}. The causal question should dictate, and therefore be at the start of, investigation. When the target causal quantity, the estimand, is made explicit, one can start to question 
how
it relates to the available data distribution and, as such, form a basis for estimation with finite samples from this distribution.

The counterfactual framework offers a language rich enough to articulate a wide variety of causal claims that can be expressed as what-if statements \citep{Hernan2020book}. Another, albeit closely related, approach to causal inference is target trial emulation, an explicit effort to mitigate departures from a study (the `target trial') that, if carried out, would enable one to readily answer the causal what-if question of interest \citep{Hernan2016}. While it may be too impractical or unethical to implement, making explicit what a target trial looks like has particular value in communicating the inferential goal and offers a reference against which to compare studies that have been or are to be conducted.

The counterfactual framework and emulation approach have become increasingly popular in 
observational cohort studies. Case-control studies, however, have not yet enjoyed this trend. A notable exception is given by \citeauthor{Dickerman2020} (\citeyear{Dickerman2020}), who recently outlined an application of trial emulation with case-control designs to statin use and colorectal cancer.



In this paper, we give an overview of how observational data obtained with case-control designs can be used to identify
a number of 
causal estimands  
and, in doing so, recast historical case-control concepts, assumptions and principles in a modern and formal 
framework.

\section{Preliminaries}
\subsection{Identification versus estimation}
An estimand is said to be identifiable if the distribution of the available data is compatible with exactly one value of the estimand, or therefore, if the estimand can be expressed as a function of the available data distribution. Identification forms a basis for estimation with finite samples from this distribution \citep{Petersen2014}. Once the estimand has been made explicit and an identifiability expression established, estimation is a purely statistical problem. 
While the expression will often naturally translate into a plug-in estimator, there is, however, generally more than one way to translate an identifiability result into an estimator and different estimators may have important differences in their statistical properties. 
Here, our focus is on identification, so that the purely statistical issues of the next step in causal inference, estimation, can be momentarily put aside.

\subsection{Case-control study nested in cohort study}
To facilitate understanding, it is useful to consider every case-control study as being ``nested'' within a cohort study. A case-control study is effectively a cohort study with missingness governed by the control sampling scheme. Therefore, when the observed data distribution of a case-control study is compatible with exactly one value of a given estimand, then so is the available or observed data distribution of the underlying cohort study. In other words, identifiability of an estimand with a case-control study implies identifiability of the estimand with the cohort study within which it is nested. 
The converse is not evident and in fact may not be true. In this paper, the focus is on sets of conditions or assumptions that are sufficient for identifiability in case-control studies.

\begin{figure}[t]\centering
\caption{Illustration of possible courses of follow-up of an individual for a study with baseline $t_0$ and administrative study end $t_{12}$. }\label{fig:setupIllustration}
\includegraphics{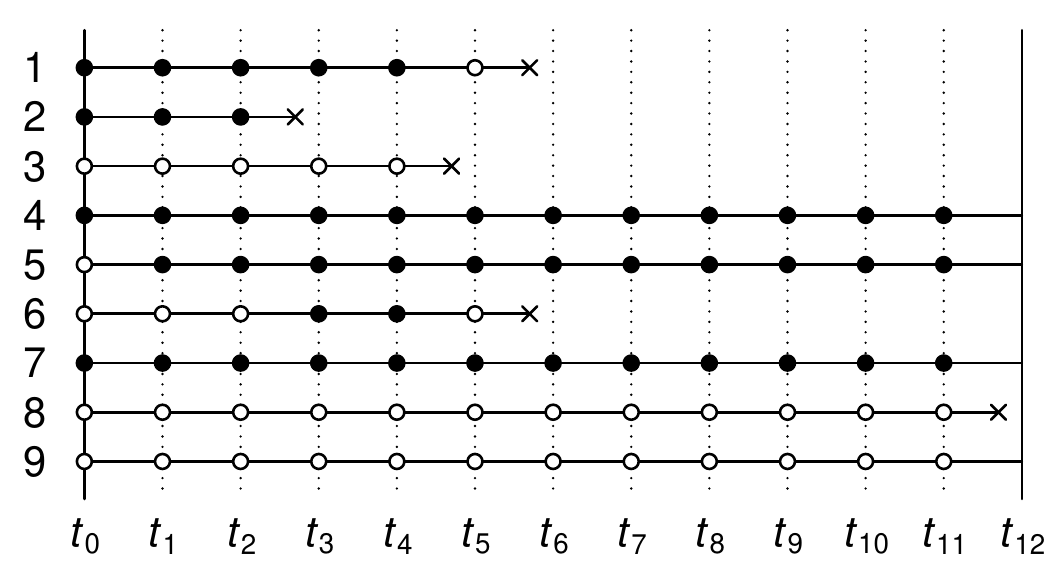}
\caption*{Solid bullets indicate `exposed'; empty bullets indicate `not exposed'. The incident event of interest is represented by a cross.}
\end{figure}

\subsection{Set-up of underlying cohort study}
Consider a time-varying exposure $A_k$ that can take one of two levels, 0 or 1, at $K$ successive time points $t_k$ ($k=0,1,...,K-1$), where $t_0$ denotes baseline (cohort entry or time zero). Study participants are followed over time until they sustain the event of interest or the administrative study end $t_K$, whichever comes first. We denote by $T$ the time elapsed from baseline until the event of interest and let $Y_k=I(T<t_k)$ indicate whether the event has occurred by $t_k$. The lengths between the time points are typically fixed at a constant (e.g., of one day, week, or month). Figure~\ref{fig:setupIllustration} depicts twelve equally spaced time points over, say, twelve months with several possible courses of follow-up of an individual. As the figure illustrates, individuals can switch between exposure levels during follow-up, as in any truly observational study. Apart from exposure and outcome data, we also consider a (vector of) covariate(s) $L_k$, which describes time-fixed individual characteristics or time-varying characteristics typically relating to a time window just before exposure or non-exposure at $t_k$, $k=0,1,...,K-1$.

\subsection{Causal contrasts}
Although there are many possible contrasts, particularly with time-varying exposures, for simplicity we consider only two pairs of mutually exclusive interventions: (1) setting baseline exposure $A_0$ to 1 versus 0; and (2) setting all of $A_0,A_1,...,A_{K-1}$ to 1 (`always exposed') versus all to 0 (`never exposed'). For $a=0,1$, we let counterfactual outcome $Y_k(a)$ indicate whether the event has occurred by $t_k$ under the baseline-only intervention that sets $A_0$ to $a$. By convention, we write $\overline{1}=(1,1,...,1)$ and $\overline{0}=(0,0,...,0)$, and let $Y_k(\overline{1})$ and $Y_k(\overline{0})$ indicate whether the event has occurred by $t_k$ under the intervention that sets $(A_0,A_1,...,A_{K-1})$ all to 1 and all to 0, respectively.
Further details about the notation and set-up are given in Supplementary Appendix~\ref{app:notation}.

\subsection{Case-control sampling}
The fact that each time-specific exposure variable can take only one value per time point means that at most one counterfactual outcome can be observed per individual. This type of missingness is common to all studies. Relative to the cohort studies within which they are nested, case-control studies have additional missingness, which is governed by the control sampling scheme. In this paper, we focus on three well-known sampling schemes: case-base sampling, survivor sampling, and risk-set sampling. The next sections gives an overview of conditions under which intention-to-treat and always-versus-never-exposed per-protocol effects can be identified with the data that are observed under these sampling schemes.

\section{Case-control studies without matching}
\TableNoMatching{} summarises a number of identification results for case-control studies without matching. More formal statements and proofs are given in Supplementary Appendix~\ref{app:noMatching}. In all case-control studies that we consider in this section, cases are compared with controls with regard to their exposure status via an odds ratio, even when an effect measure other than the odds ratio is targeted. 
An individual qualifies as a case if and only if they sustain the event of interest by the administrative study end (i.e., $Y_K=1$) and adhered to one of the protocols of interest until the time of the incident event. In Figure~\ref{fig:setupIllustration}, 
the individual represented by row 1 is therefore regarded as a case (an exposed case in particular) in our investigation of intention-to-treat effects but not in that of per-protocol effects.
Whether an individual (also) serves as a control depends on the control sampling scheme.

\subsection{Case-base sampling}
The first result in \TableNoMatching{} describes how to identify the intention-to-treat effect as quantified by the marginal risk ratio 
\begin{align*}
\frac{\Pr(Y_K(1)=1)}{\Pr(Y_K(0)=1)}
\end{align*}
under case-base sampling.
(For identification of a conditional risk ratio, see Theorem~\ref{th:BaseSamplingConditionalITT} of Supplementary Appendix~\ref{app:noMatching}.)
Case-base sampling, also known as case-cohort sampling, means that no individual who is at risk at baseline of sustaining the event of interest is precluded from selection as a control. Selection as a control, $S$, is further assumed independent of baseline covariate $L_0$ and exposure $A_0$.  
Selecting controls from survivors only (e.g., rows 4, 5, 7 and 9 in Figure~\ref{fig:setupIllustration}) violates this assumption when survival depends on $L_0$ or $A_0$. 

To account for baseline confounding, inverse probability weights could be derived from control data according to
\begin{align} \label{eq:weights1}
W &= \frac{A_0}{\Pr(A_0=1|L_0,S=1)}+\frac{1-A_0}{1-\Pr(A_0=1|L_0,S=1)}.
\end{align}
We then compute the odds of baseline exposure among cases and among controls in the pseudopopulation that is obtained by weighting everyone by subject-specific values of $W$. The ratio of these odds coincides with the target risk ratio under the three key identifiability conditions of consistency, baseline conditional exchangeability and positivity \citep{Hernan2020book}.

The identification result for case-base sampling suggests a plug-in estimator: 
replace all functionals of the theoretical data distribution with sample analogues. For example, to obtain the weight for an individual with baseline covariate level $l_0$, replace the theoretical propensity score $\Pr(A_0=1|L_0=l_0,S=1)$ with an estimate $\widehat{\Pr}(A_0=1|L_0=l_0,S=1)$ derived from a fitted model (e.g., a logistic regression model) that imposes parametric constraints on the distribution of $A_0$ given $L_0$ among the controls.

\pagebreak
\global\pdfpageattr\expandafter{\the\pdfpageattr/Rotate 90} 

\rotatebox{90}{\hspace{3pt}\begin{minipage}{\textheight}
\textbf{\TableNoMatching}. Overview of (non-parametric) identification results for case-control studies without matching.\end{minipage}}
\rotatebox{90}{
\begin{tabularx}{\textheight}{YLLL}
Sampling scheme & Estimand & Assumptions & Identification strategy \\ \midrule
Case-base & Risk ratio for intention-to-treat effect

\medskip\centering{$\displaystyle\frac{\Pr(Y_K(1)=1)}{\Pr(Y_K(0)=1)}$} & 
\Item 
Control selection $S$ independent of baseline covariates $L_0$ and exposure $A_0$
\newline\Item Consistency \newline\Item Baseline exchangeability given $L_0$\newline\Item Positivity \newline (Theorem~\ref{th:BaseSamplingMarginalITT})
& 1. Derive time-fixed IP weights $W$ from control data\newline 2. Compute the baseline exposure odds among cases, weighted by $W$\newline 3. Compute the baseline exposure odds among controls, weighted by $W$
\\\cmidrule{1-4}
Survivor & Odds ratio for intention-to-treat effect 

\medskip\centering{$\displaystyle\frac{\mathrm{Odds}(Y_K(1)=1|L_0)}{\mathrm{Odds}(Y_K(0)=1|L_0)}$}
& \Item Control selection $S$ independent of baseline exposure $A_0$ given baseline covariates $L_0$ and survival until $t_K$ ($Y_K=0$) 
\newline\Item Consistency \newline\Item Baseline exchangeability given $L_0$ \newline\Item Positivity \newline (Theorem~\ref{th:SurvivorSamplingConditionalITT})
& 1. Derive the conditional baseline exposure odds given $L_0$ among cases \newline 2. Derive the conditional baseline exposure odds given $L_0$ among controls\newline\ 3. Take the ratio of the results of steps 1 and 2
\\\cmidrule{1-4}
Risk-set & Hazard ratio for intention-to-treat effect 

\medskip\centering{$\displaystyle\frac{\Pr(Y_{k+1}(1)=1|Y_k(1)=0)}{\Pr(Y_{k+1}(0)=1|Y_k(0)=0)}$}& \Item Control selection $S_k$ independent of baseline covariates $L_0$ and exposure $A_0$ given eligibility at $t_k$ ($Y_k=0$) with constant sampling probability among those eligible$^\dagger$
\newline\Item Consistency \newline\Item Baseline exchangeability given $L_0$ \newline\Item Positivity \newline\Item Constant counterfactual hazards \newline (Theorem~\ref{th:DensitySamplingMarginalITT})
&  1. Derive time-fixed IP weights $W$ from control data\newline 2. Compute baseline exposure odds among cases, weighted by $W$\newline 3. Compute baseline exposure odds among controls, weighted by $W$ times $\sum_{k=0}^{K-1}S_k$, the number of times selected as a control\newline 4. Take the ratio of the results of steps 2 and 3
\\\cmidrule{2-4}
& Hazard ratio for per-protocol effect 

\medskip\centering{$\displaystyle\frac{\Pr(Y_{k+1}(\overline{1})=1|Y_k(\overline{1})=0)}{\Pr(Y_{k+1}(\overline{0})=1|Y_k(\overline{0})=0)}$}
& {\Item Control selection $S_k$ independent of covariate and exposure history up to $t_k$ given eligibility at $t_k$ ($Y_k=0$) with constant sampling probability among those eligible$^\dagger$
\newline\Item Consistency\newline\Item Sequential conditional exchangeability\newline\Item Positivity \newline\Item Constant counterfactual hazards} \newline (Theorem~\ref{th:DensitySamplingPP})
& 1. Derive time-varying IP weights $W_k$ from control data\newline 2. Censor from time of protocol deviation \newline
3. Compute (baseline) exposure odds among cases, weighted by those weights $W_k$ such that $Y_k=0$ and $Y_{k+1}=1$ \newline4. Compute (baseline) exposure odds among all controls, weighted by $\sum_{k=0}^{K-1}W_kS_k$, the weighted number of times selected as a control \newline5. Take the ratio of the results of steps 3 and 4\\
\end{tabularx}
}
\rotatebox{90}{\hspace{3pt}\begin{minipage}{\textheight}
See text or Supplementary Material for elaboration on assumptions. $^{\dagger}$Weaker/alternative control selection assumptions are given in the Supplementary Material.\end{minipage}}

\pagebreak
\global\pdfpageattr\expandafter{\the\pdfpageattr/Rotate 0}


\subsection{Survivor sampling}
With survivor (cumulative incidence or exclusive) sampling, a subject is eligible for selection as a control only if they reach the administrative study end event-free. To identify the conditional odds ratio of baseline exposure versus baseline non-exposure given $L_0$,
\begin{align*}
\frac{\mathrm{Odds}(Y_K(1)=1|L_0)}{\mathrm{Odds}(Y_K(0)=1|L_0)},
\end{align*}
selection as a control, $S$, is assumed independent of baseline exposure $A_0$ given $L_0$ and survival until the end of study (i.e., $Y_K=0$). 

The directed acyclic graph (DAG) of Figure~\ref{fig:DAG} is compatible with both survivor sampling and case-base sampling. For those well versed in DAGs, it is tempting to conclude from it that restricting the analysis to those included in the study, i.e., conditioning on study inclusion, would result in bias (or departure from identification), by way of collider stratification. Although conditioning on study inclusion may indeed induce an association between baseline exposure and unmeasured cause $U$ of $Y_K$ (within levels of $L_0$), it is important to recognise it need not result in bias \citep{Westreich2012,Hughes2019}.

In fact, as is shown in Supplementary Appendix~\ref{app:noMatching}, Theorem~\ref{th:SurvivorSamplingConditionalITT}, the above odds ratio is identified by the ratio of the baseline exposure odds given $L_0$ among the cases versus controls, provided the key identifiability conditions of consistency, baseline conditional exchangeability, and positivity are met.

All estimands in \TableNoMatching{} describe a marginal effect, except for the odds ratio, which is conditional on baseline covariates $L_0$. The corresponding marginal odds ratio
\begin{align*}
\frac{\mathrm{Odds}(Y_K(1)=1)}{\mathrm{Odds}(Y_K(0)=1)}
\end{align*}
is not identifiable from the available data distribution under the stated assumptions (see remark to Theorem~\ref{th:SurvivorSamplingConditionalITT}, Supplementary Appendix~\ref{app:noMatching}). However, approximate identifiability can be achieved by invoking the rare event assumption (or rare disease assumption), in which case the marginal odds ratio approximates the marginal risk ratio.




\begin{figure}[t]\centering
\caption{Directed acyclic graph for a setting where inclusion (as case or control) into the case-control study with case-base or survivor sampling is determined by the outcome variable $Y_K$. $U$ represents an unknown or unobserved cause of $Y_K$. The dashed double-headed arrow represents an unmeasured or observed common cause.
}\label{fig:DAG}
\includegraphics[scale=.7]{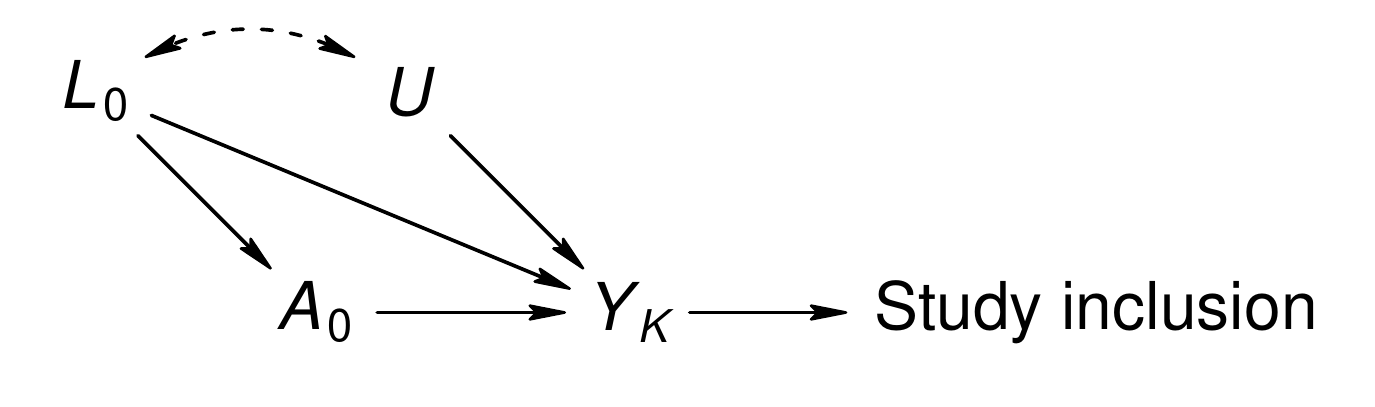}
\end{figure}

\subsection{Risk-set sampling for intention-to-treat effect}
With risk-set (or incidence density) sampling, 
for all time windows $[t_k,t_{k+1})$, $k=0,...,K-1$, every subject who is event-free at $t_k$ is eligible for selection as a control for the period $[t_k,t_{k+1})$. This means that study participants may be selected as a control more
than once.

Consider the intention-to-treat effect quantified by the marginal (discrete-time) hazard ratio (or rate ratio)
\begin{align*}
\frac{\Pr(Y_{k+1}(1)=1|Y_k(1)=0)}{\Pr(Y_{k+1}(0)=1|Y_k(0)=0)}.
\end{align*}
(For identification of a conditional hazard ratio, see Theorem~\ref{th:DensitySamplingConditionalITT}, Supplementary Appendix~\ref{app:noMatching}.)
For identification of the above marginal hazard ratio under risk-set sampling, it is assumed that selection as a control between $t_k$ and $t_{k+1}$, $S_k$, is independent of the baseline covariates and exposure given eligibility at $t_k$ (i.e., $Y_k=0$). It is also assumed that the sampling probability among those eligible, $\Pr(S_k=1|Y_k=0)$, is constant across time windows $k=0,...,K-1$. 
To this end, it suffices that the marginal hazard $\Pr(Y_{k+1}=1|Y_k=0)$ remains constant across time windows and that every $k$th sampling fraction $\Pr(S_k=1)$ is equal, up to a proportionality constant, to the probability $\Pr(Y_{k+1}=1,Y_k=0)$ of an incident case in the $k$th window (see remark to Theorem~\ref{th:DensitySamplingMarginalITT}, Supplementary Appendix~\ref{app:noMatching}). 
For practical purposes, this suggests sampling a fixed number of controls for every case from among the set of eligible individuals. To illustrate, consider Figure~\ref{fig:setupIllustration} and note first of all that the individual represented by row 1 trivially qualifies as a case, because the individual survived until the event occurred. Because the event was sustained between $t_5$ and $t_6$, the proposed sampling suggests selecting a fixed number of controls from among those who are eligible at $t_5$. Thus, rows (and only rows) 4 through 9 as well as row 1 itself in Figure~\ref{fig:setupIllustration} qualify for selection as a control for this case. Even though the individual of row 1 is a case, the individual may also be selected as a control when the individuals of row 2, 3 and 6 (but not 8) sustain the event.

Once cases and controls are selected, we can start to derive inverse probability weights $W$ according to equation \eqref{eq:weights1}. We then compute the odds of baseline exposure among cases in the pseudopopulation that is obtained by weighting everyone by $W$ and the odds of baseline exposure among controls weighted by $W$ multiplied by the number of times the individual was selected as a control. The ratio of these odds coincides with the target hazard ratio under the three key identifiability conditions of consistency, baseline conditional exhangeability and positivity together with the assumption that the hazards in the numerator and denominator of the causal hazard ratio are constant across the time windows.

\subsection{Risk-set sampling for per-protocol effect}
For the per-protocol effect quantified by the (discrete-time) hazard ratio (or rate ratio)
\begin{align*}
\frac{\Pr(Y_{k+1}(\overline{1})=1|Y_k(\overline{1})=0)}{\Pr(Y_{j+1}(\overline{0})=1|Y_{k}(\overline{0})=0)},
\end{align*}
eligibility again requires that the respective subject is event-free at $t_k$ (i.e., $Y_k=0$). 
Selection as a control between $t_k$ and $t_{k+1}$, $S_k$, is further assumed independent of covariate and exposure history up to $t_k$ given eligibility at $t_k$ (but see Supplementary Appendix~\ref{app:noMatching} for a slightly weaker assumption). As for the intention-to-treat effect, it is also assumed that the probability to be selected as a control $S_k$ given eligibility is constant across time windows. This assumption is guaranteed to hold if the marginal hazard $\Pr(Y_{k+1}=1|Y_k=0)$ remains constant across time windows and that every $k$th sampling fraction $\Pr(S_k=1)$ is equal, up to a proportionality constant, to the probability of an incident case in the $k$th window. Figure~\ref{fig:setupIllustration} shows five incident events yet only three qualify as a case (rows 2, 3 and 8) when it concerns per-protocol effects. When the first case emerges (row 2), all rows meet the eligibility criterion for selection as a control. When the second emerges, the individual of row 2, who fails to survive event-free until $t_4$, is precluded as a control. When the case of row 8 emerges, only the individuals of rows 4, 5, 7 and 9 are eligible as controls. 

Once cases and controls are selected, we can start to derive time-varying inverse probability weights according to 
\begin{align*}
W_k&=\prod_{j=0}^k\Bigg[\frac{A_j}{\Pr(A_j=1|L_0,...,L_{j},A_0,...,A_{j-1},Y_j=0,S_j=1)}\\&\qquad\qquad{}+{}\frac{1-A_j}{1-\Pr(A_j=1|L_0,...,L_{j},A_0,...,A_{j-1},Y_j=0,S_j=1)}\Bigg].
\end{align*}
It is important to note that the weights are derived from control information but are nonetheless used to weight both cases and controls \citep{Robins1999}. The denominators of the weights describe the propensity to switch exposure level. However, once the weights are derived, every subject is censored from the time that they fail to adhere to one of the protocols of interest for all downstream analysis. The uncensored exposure levels are therefore constant over time. We then compute the baseline exposure odds among cases, weighted by the weights $W_k$ corresponding to the interval $[t_k,t_{k+1})$ of the incident event (i.e., $Y_{k}=0, Y_{k+1}=0$), as well as the baseline exposure odds among controls, weighted by $\sum_{k=0}^{K-1}W_kS_k$, the weighted number of times selected as control. The ratio of these odds equals the target hazard ratio under the three key identifiability conditions of consistency, sequential conditional exchangeability, and positivity together with the assumption that hazards in the numerator and denominator of the causal hazard ratio for the per-protocol effect are constant across the time windows.

\section{Case-control studies with matching}
\TableMatching{} gives an overview of identification results for case-control studies with exact pair matching. Formal statements and proofs are given in Supplementary Appendix~\ref{app:MatchingExact}, which also includes a generalisation of the results of \TableMatching{} to exact 1-to-$M$ matching. While the focus in this section is on exact covariate matching, for partial matching we refer the reader to Supplementary Appendix~\ref{app:MatchingPartial}, where we consider parametric identification by way of conditional logistic regression.

Pair matching involves assigning a single control exposure level, which we denote by $A'$, to every case. As for case-control studies without matching, in a case-control studies with matching an individual qualifies as a case if and only if they sustain the event of interest by the administrative study end (i.e., $Y_K=1$) and adhered to one of the protocols of interest until the time of the incident event. How a matched control exposure is assigned is encoded in the sampling scheme and the assumptions of \TableMatching{}. For example, for identification of the causal marginal risk ratio under case-base sampling,  $A'$ is sampled from all study participants whose baseline covariate value matches that of the case, independently of the participants' baseline exposure value and whether they survive until the end of study. The matching is exact in the sense that the control exposure information is derived from an individual who has the same value for the baseline covariate as the case.

The identification strategy is the same for all results listed in \TableMatching{}. Only the case-control pairs $(A_0,A')$ with discordant exposure values (i.e., $(1,0)$ or $(0,1)$) are used. Under the stated sampling schemes and assumptions, the respective estimands are identified by the ratio of discordant pairs.

\section{Discussion} 
This paper gives a formal account of how and when causal effects can be identified in case-control studies and, as such, underpins the case-control application of \citeauthor{Dickerman2020} (\citeyear{Dickerman2020}).
Like \citeauthor{Dickerman2020}, we believe that case-control studies should generally be regarded as being nested within cohort studies. This view emphasises that the threats to the validity of cohort studies should also be considered in case-control studies. For example, in case-control applications with risk-set sampling, researchers often consider the covariate and exposure status only at, or just before, the time of the event (for cases) or the time of sampling (for controls). However, where a cohort study would require information on baseline levels or the complete treatment and covariate history of participants, one should suspect that this holds for the nested case-control study too. To gain clarity, we encourage researchers to move away from using person-years, -weeks, or -days (rather than individuals) as the default units of inference \citep{Hernan2015}, and to realise that inadequately addressed deviations from a target trial may lead to bias (or departure from identifiability), regardless of whether the study that attempts to emulate it is a case-control or a cohort study \citep{Dickerman2020}.

\pagebreak
\global\pdfpageattr\expandafter{\the\pdfpageattr/Rotate 90} 

\rotatebox{90}{\hspace{3pt}\begin{minipage}{\textheight}
\textbf{\TableMatching}. Overview of (non-parametric) identification results for case-control studies with exact pair matching.\end{minipage}}
\rotatebox{90}{
\begin{tabularx}{\textheight}{YLZL}
Sampling scheme & Estimand & Assumptions & Identification strategy \\ \midrule
Case-base & Risk ratio for intention-to-treat effect

\medskip\centering{$\displaystyle\frac{\Pr(Y_K(1)=1)}{\Pr(Y_K(0)=1)}$} & {\Item {Matched control exposure $A'$ sampled from the baseline exposure levels of all subjects with same baseline covariate level $L_0$ as case, independently of the subjects' baseline exposure or survival status}
\newline\Item Consistency\newline\Item Baseline conditional exchangeability\newline\Item Positivity \newline\Item $\Pr(Y_K=1|L_0=l,A_0=1)/\Pr(Y_K=1|L_0=l,A_0=0)$ constant across levels $l$} \newline(Theorem~\ref{th:CaseCohortMatchingITT})  & 1. Compute the frequency of discordant case-control pairs with $A_0=1$ and $A'=0$\newline2. Compute the frequency of discordant case-control pairs with $A_0=0$ and $A'=1$ \newline3. Take the ratio of the results of steps 1 and 2
\\\cmidrule{1-4}
Survivor& Odds ratio for intention-to-treat effect 

\medskip\centering{$\displaystyle\frac{\mathrm{Odds}(Y_K(1)=1|L_0)}{\mathrm{Odds}(Y_K(0)=1|L_0)}$} & {\Item {Matched control exposure $A'$ sampled from all the baseline exposure levels of all survivors ($Y_K=0$) with same value for $L_0$ as case, independently of the subjects' baseline exposure} \newline\Item Consistency\newline\Item Baseline conditional exchangeability\newline\Item Positivity \newline\Item $\mathrm{Odds}(Y_K=1|L_0,A_0=1)/\mathrm{Odds}(Y_K=1|L_0,A_0=0)$ constant across levels $l$} \newline (Theorem~\ref{th:SurvivorSamplingMatchingITT}) & (Same as identification strategy for case-base sampling)
\\
\end{tabularx}
}
\rotatebox{90}{\hspace{3pt}\begin{minipage}{\textheight}
See text or Supplementary Material for elaboration on assumptions.\end{minipage}}

\newpage

\rotatebox{90}{\hspace{3pt}\begin{minipage}{\textheight}
\textbf{\TableMatching} (continued).\end{minipage}}
\rotatebox{90}{
\begin{tabularx}{\textheight}{YZZY}
Sampling scheme & Estimand & Assumptions & Identification strategy \\ \midrule
 Risk-set &  Hazard ratio for intention-to-treat effect 

\medskip\centering{$\displaystyle\frac{\Pr(Y_{k+1}(1)=1|L_0,Y_{k}(1)=0)}{\Pr(Y_{k+1}(0)=1|L_0,Y_{k}(0)=0)}$}
& {\Item {For a case with incident event in $[t_k,t_{k+1})$ (i.e., $Y_k=0,Y_{k+1}=1$), matched control exposure $A'$ sampled from the baseline exposure levels of all subjects that are event-free at $t_k$ ($Y_k=0$) and have the same value for $L_0$ as case. Sampling among these individuals is independent of baseline exposure or survival status}
\newline\Item Consistency\newline\Item Baseline conditional exchangeability\newline\Item Positivity \newline\Item ${\Pr(Y_{k+1}=1|L_0=l,A_0=1,Y_k=0)}/\newline{\Pr(Y_{k+1}=1|L_0=l,A_0=0,Y_k=0)}$ constant across levels $k,l$} \newline (Theorem~\ref{th:DensitySamplingMatchingITT})  &  (Same as identification strategy for case-base sampling)
\\\cmidrule{2-4}
 &  Hazard ratio for per-protocol effect 

\medskip\centering{$\displaystyle\frac{\Pr(Y_{k+1}(\overline{1})=1|L_0,...,L_k,A_0=...=A_k=1,Y_{k}(\overline{1})=0)}{\Pr(Y_{k+1}(\overline{0})=1|L_0,...,L_k,A_0=...=A_k=0,Y_{k}(\overline{0})=0)}$}
& {\Item {For a case with incident event in $[t+k,t_{k+1})$ (i.e., $Y_k=0,Y_{k+1}=1$), matched control exposure $A'$ sampled from the baseline exposure levels $A_0$ of all individuals who adhered to one of the protocols until $t_k$ (i.e., $A_0=...=A_k$) and have covariate history up to $t_k$. Sampling among these individuals is independent of baseline exposure or survival status}
\newline\Item Consistency\newline\Item Positivity \newline\Item ${\Pr(Y_{k+1}=1|L_0,...,L_k,A_0=...=A_k=1,Y_k=0)}/\newline{\Pr(Y_{k+1}=1|L_0,...,L_k,A_0=...=A_k=0,Y_k=0)}$ constant across levels $k$ and independent of $L_0,...,L_k$}\newline (Theorem~\ref{th:DensitySamplingMatchingPP})  &  (Same as identification strategy for case-base sampling)\\
\end{tabularx}
}
\rotatebox{90}{\hspace{3pt}\begin{minipage}{\textheight}
\end{minipage}}

\pagebreak
\global\pdfpageattr\expandafter{\the\pdfpageattr/Rotate 0}

What is meant by a cohort study differs between authors and contexts \citep{Vandenbroucke2012}. The term `cohort' may refer to either a `dynamic population', or a `fixed cohort', whose ``membership is defined in a permanent fashion'' and ``determined by a single defining event and so becomes permanent'' \citep{Rothman2008}. While it may sometimes be of interest to ask what would have happened with a dynamic cohort (e.g., the residents of a country) had it been subjected to one treatment protocol versus another, the results in this paper relate to fixed cohorts. 

Like the cohort studies within which they are (at least conceptually) nested, case-control studies require an explicit definition of time zero, the time at which a choice is to be made between treatment strategies or protocols of interest \citep{Dickerman2020}. Given a fixed cohort, time zero is generally determined by the defining event of the cohort (e.g., first diagnosis of a particular disease or having survived one year since diagnosis). This event may occur at different calendar times for different individuals. However, while a fixed cohort may be `open' to new members relative to calendar time, it is always `closed' along the time axis on which all subject-specific time zero's take a common point.

In this paper, time was regarded as discrete. Since we considered arbitrary intervals between time points and because, in real-world studies, time is never measured in a truly continuous fashion, this does not represent an important limitation for practical purposes. It is however important to note that the intervals between interventions and outcome assessments (in a target trial) are an intrinsic part of the estimand that lies at the start of investigation. Careful consideration of time intervals in the design of the conceptual target trial and of the actual cohort or case-control study is therefore warranted.

We emphasize that identification and estimation are distinct steps in causal inference. Although our focus was on the former, identifiability expressions often naturally translate into estimators. The task of finding the estimator with the most appealing statistical properties is not necessarily straightforward, however, and is beyond the scope of this paper.

We specifically studied two causal contrasts (i.e., pairs of interventions), one corresponding to intention-to-treat effects and the other to always-versus-never per-protocol effects of a time-varying exposure. There are of course many more causal contrasts, treatment regimes and estimands conceivable that could be of interest. We argue that also for these estimands, researchers should seek to establish identifiability before they select an estimator.

The conditions under which identifiability is to be sought for practical purposes may well include more constraints or obstacles to causal inference, such as additional missingness (e.g., outcome censoring) and measurement error, than we have considered here. While some of our results assume that hazards or hazard ratios remain constant over time, in many cases these are likely time-varying \citep{Lefebvre2006,Guess2006}. There are also more case-control designs (e.g., the case-crossover design) to consider. These additional complexities and designs are beyond the scope of this paper and represent an interesting direction for future research.

The case-control family of study designs is an important yet often misunderstood tool for identifying causal relations \citep{Knol2008,Pearce2016,Mansournia2018,Labrecque2021}. 
Although there is much to be learned, we believe that the modern arsenal for causal inference, which includes counterfactual thinking, is well-suited to make transparent for these classical epidemiological study designs what assumptions are sufficient or necessary to endow the study results with a causal interpretation and, in turn, help resolve or prevent misunderstanding.

\section*{Conflicts of interest} None declared.

\section*{Sources of funding}
RHHG was funded by the Netherlands Organization for Scientific Research (NWO-Vidi project 917.16.430).
The content is solely the responsibility of the authors and does not necessarily represent the official views of the funding bodies.


\newpage

\setcounter{page}{1}

\section*{\huge\scshape{Supplementary material to\\ `\mytitle'}}


\section*{Table of contents}
\newcommand{\dent}{\hspace{10pt}}
\newcommand{\tocEntry}[2]{\noindent\begin{minipage}[t]{10pt}~\end{minipage}\begin{minipage}[t]{.7\linewidth}\undent #1\end{minipage}\hfill\begin{minipage}[t]{.2\linewidth}\begin{flushright}#2\end{flushright}\end{minipage}}
\newcommand{\tocEntryItalic}[2]{\noindent\begin{minipage}[t]{20pt}~\end{minipage}\begin{minipage}[t]{.7\linewidth}\undent \textit{#1}\end{minipage}\hfill\begin{minipage}[t]{.2\linewidth}\begin{flushright}\textit{#2}\end{flushright}\end{minipage}}

\newcommand{\titleBaseSamplingMarginalITT}{Case-base sampling for marginal intention-to-treat effect}
\newcommand{\titleBaseSamplingConditionalITT}{Case-base sampling for conditional intention-to-treat effect}
\newcommand{\titleSurvivorSamplingConditionalITT}{Survivor sampling for conditional intention-to-treat effect}
\newcommand{\titleDensitySamplingMarginalITT}{Risk-set sampling for marginal intention-to-treat effect}
\newcommand{\titleDensitySamplingConditionalITT}{Risk-set sampling for conditional intention-to-treat effect}
\newcommand{\titleDensitySamplingPP}{Risk-set sampling for marginal per-protocol effect}
\newcommand{\titleCaseCohortMatchingITT}{Case-base sampling for marginal intention-to-treat effect}
\newcommand{\titleSurvivorSamplingMatchingITT}{Survivor sampling for conditional intention-to-treat effect}
\newcommand{\titleDensitySamplingMatchingITT}{Risk-set sampling for conditional intention-to-treat effect}
\newcommand{\titleDensitySamplingMatchingPP}{Risk-set sampling for conditional per-protocol effect}
\newcommand{\titleConditionalLogisticRegression}{Conditional logistic regression for conditional intention-to-treat effect}

\tocEntry{}{Page}\\
\tocEntry{Appendix~\ref{app:notation}: Notation and set-up}{\pageref{app:notation}}\\
\tocEntry{Appendix~\ref{app:noMatching}: Identification results for non-matching strategies}{\pageref{app:noMatching}}\\
\tocEntryItalic{Theorem~\ref{th:BaseSamplingMarginalITT}: \titleBaseSamplingMarginalITT}{\pageref{th:BaseSamplingMarginalITT}}\\
\tocEntryItalic{Theorem~\ref{th:BaseSamplingConditionalITT}: \titleBaseSamplingConditionalITT}{\pageref{th:BaseSamplingConditionalITT}}\\
\tocEntryItalic{Theorem~\ref{th:SurvivorSamplingConditionalITT}: \titleSurvivorSamplingConditionalITT}{\pageref{th:SurvivorSamplingConditionalITT}}\\
\tocEntryItalic{Theorem~\ref{th:DensitySamplingMarginalITT}: \titleDensitySamplingMarginalITT}{\pageref{th:DensitySamplingMarginalITT}}\\
\tocEntryItalic{Theorem~\ref{th:DensitySamplingConditionalITT}: \titleDensitySamplingConditionalITT}{\pageref{th:DensitySamplingConditionalITT}}\\
\tocEntryItalic{Theorem~\ref{th:DensitySamplingPP}: \titleDensitySamplingPP}{\pageref{th:DensitySamplingPP}}\\
\tocEntry{Appendix~\ref{app:MatchingExact}: Identification results for exact 1:$M$ matching strategies}{\pageref{app:MatchingExact}}\\
\tocEntryItalic{Theorem~\ref{th:CaseCohortMatchingITT}: \titleCaseCohortMatchingITT}{\pageref{th:CaseCohortMatchingITT}}\\
\tocEntryItalic{Theorem~\ref{th:SurvivorSamplingMatchingITT}: \titleSurvivorSamplingMatchingITT}{\pageref{th:SurvivorSamplingMatchingITT}}\\
\tocEntryItalic{Theorem~\ref{th:DensitySamplingMatchingITT}: \titleDensitySamplingMatchingITT}{\pageref{th:DensitySamplingMatchingITT}}\\
\tocEntryItalic{Theorem~\ref{th:DensitySamplingMatchingPP}: \titleDensitySamplingMatchingPP}{\pageref{th:DensitySamplingMatchingPP}}\\
\tocEntry{Appendix~\ref{app:MatchingPartial}: Parametric identification by conditional logistic regression for exact or partial 1:$M$ matching}{~\\\pageref{app:MatchingPartial}}\\\newline
\tocEntryItalic{Theorem~\ref{th:ConditionalLogisticRegression}: \titleConditionalLogisticRegression}{\\\pageref{th:ConditionalLogisticRegression}}\\\newline

\begin{myAppendix}[: Notation and set-up]\label{app:notation}

We will suppose that the interest lies with the effect of a time-varying exposure that can take one of two levels at any given time on a failure time outcome. In particular, we consider a strictly increasing sequence $(t_0,t_1,...,t_K)$ of $K+1$ time points (with $t_{K+1}=-t_{-1}=+\infty$ for notational convenience). For $k=0,1,...,K-1$, let $A_k$ denote the level of time-varying exposure of interest at $t_k$. We denote the history of any stochastic sequence $(X_0,X_1,...,X_{K-1})$ up to and including $t_k$ by $\overline{X}_k=(X_0,X_1...,X_k)$ for $k=0,1,...,K-1$ (and let $\overline{X}=\overline{X}_{K-1}$ and $\overline{X}_{-1}=0$ for notational convenience). For example, $\overline{A}=(A_0,A_1,...,A_{K-1})$. Denote by $T(\overline{a})$ the counterfactual time elapsed until the event of interest since $t_0$ that would have been realised had $\overline{A}$ been set to $\overline{a}$, and let $Y_k(\overline{a})=I(T(\overline{a})<t_k)$ for $k=0,1,...,K$, where $I$ represents the indicator function. By convention, we stipulate that for all $k$, $Y_k(\overline{a})$ is invariant to the $k$th through $K-1$th elements of $\overline{a}$ (i.e., current survival status is not affected by future exposures). With slight abuse of notation, for $k=0,1...,K$, we let $Y_k(a_0)$ denote the outcome that would have been realised had (only) $A_0$ been set to $a_0$.

\subsection*{Consistency} 
For theorems about per-protocol effects, we assume consistency of the form: for $k=1,...,K$ and all $\overline{a}$, $Y_k(\overline{a})=Y_k$ if $a_{l}=A_{l}$ for all $l=0,...,k-1$ such that $Y_l=0$. For theorems about intention-to-treat effects, a weaker condition is sufficient and assumed: for $k=1,...,K$ and $a=0,1$, $Y_k(a)=Y_k$ if $a=A_0$. The assumption may be further relaxed for theorems in which the estimand does not involve $Y_k(a)$, $k<K$: for $a=0,1$, $Y_K(a)=Y_K$ if $a=A_0$.

\subsection*{Conditional exchangeability}
We also consider a sequence of variables $\overline{L}=(L_0,L_1,...,L_{K-1})$ that satisfies one of the following conditions:
\begin{align}
\forall{k},\forall{\overline{a}}:(Y_{k+1}(\overline{a}),...,Y_{K}(\overline{a}))\CI A_k|Y_k(\overline{a})=0,\overline{L}_k,\overline{A}_{k-1}=\overline{a}_{k-1}, \tag{sequential conditional exchangeability, SCE}
\end{align}
where $\overline{a}_{k-1}$ is understood to represent the $(k-1)$th through $(K-1)$th elements of $\overline{a}$,
or
\begin{align}
\forall{a_0}:(Y_{1}(a_0),...,Y_{K}(a_0))\CI A_0|L_0, \tag{baseline conditional exchangeability, BCE}
\end{align}
although sometimes a weaker form of BCE suffices: $\forall{a_0}:Y_{K}(a_0)\CI A_0|L_0$.

\subsection*{Positivity}
For the theorems that follow, we assume positivity to preclude division by zero and undefined conditional probabilities, so that the weights that we will encounter are finite and strictly greater than 1. The assumption can sometimes be relaxed if we are willing to interpolate or extrapolate under (parametric) modelling assumptions.
\end{myAppendix}

\begin{myAppendix}[: Identification results for non-matching strategies]\label{app:noMatching}
\subsection*{Intention-to-treat effect}

For simplicity, it is assumed below that the covariates are discrete. The results can however be extended to more general distributions.


\begin{theorem}[\titleBaseSamplingMarginalITT]\label{th:BaseSamplingMarginalITT}
Suppose BCE holds as well as
\begin{align*}
\Pr(S=1|L_0,A_0) = \Pr(S=1) = \delta \tag{\BaseSamplingNoMatching{}}
\end{align*}
for some $\delta\in(0,1]$. 
Then,
\begin{align*}
\frac{\displaystyle\frac{\Exp\big[I(A_0=1)W|Y_K=1\big]}{\Exp\big[I(A_0=0)W|Y_K=1\big]}}{\displaystyle
\frac{\Exp\big[I(A_0=1)W|S=1\big]}{\Exp\big[I(A_0=0)W|S=1\big]}
} &= \frac{\Pr(Y_{K}(1)=1)}{\Pr(Y_{K}(0)=1)},
\end{align*}
where 
\begin{align*}
W=\frac{1}{\Pr(A_0=a|L_0,S=1)}\bigg|_{a=A_0},
\end{align*}
\end{theorem}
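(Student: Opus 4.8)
The plan is to reduce the claimed identity to the standard inverse-probability-weighting identification of the marginal counterfactual risks, applied separately to the numerator and denominator, and then to exploit the sampling assumption $(\BaseSamplingNoMatching{})$ to replace conditioning on $S=1$ with the full cohort distribution. First I would analyse the two factors of the left-hand side separately. For the case-based factor $\Exp[I(A_0=1)W\mid Y_K=1]/\Exp[I(A_0=0)W\mid Y_K=1]$, I would write each expectation as $\Exp[I(A_0=a)W\, I(Y_K=1)]/\Pr(Y_K=1)$, so that the ratio cancels the common factor $\Pr(Y_K=1)$ and leaves $\Exp[I(A_0=1)W\, Y_K]/\Exp[I(A_0=0)W\, Y_K]$. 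The key step here is that, under $(\BaseSamplingNoMatching{})$, the weight $W$ satisfies $W = 1/\Pr(A_0=A_0\mid L_0,S=1) = 1/\Pr(A_0=A_0\mid L_0)$, because $S\CI(A_0,L_0)$ forces $\Pr(A_0=a\mid L_0,S=1)=\Pr(A_0=a\mid L_0)$; this is the point at which the sampling assumption does its real work, and I would prove this small fact first as a lemma.

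Next I would carry out the by-now-routine IPW calculation in the cohort: iterating expectations over $L_0$, one gets
\begin{align*}
\Exp\big[I(A_0=a)W\, Y_K\big] = \Exp\!\left[\frac{\Pr(A_0=a\mid L_0)\,\Exp[Y_K\mid A_0=a,L_0]}{\Pr(A_0=a\mid L_0)}\right] = \Exp\big[\Exp[Y_K\mid A_0=a,L_0]\big],
\end{align*}
and then invoking consistency (the weak $Y_K$-only form suffices here) together with weak BCE $Y_K(a)\CI A_0\mid L_0$ gives $\Exp[Y_K\mid A_0=a,L_0]=\Pr(Y_K(a)=1\mid L_0)$, whence $\Exp[I(A_0=a)W\, Y_K]=\Pr(Y_K(a)=1)$ by the law of total probability. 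So the case-based factor equals $\Pr(Y_K(1)=1)/\Pr(Y_K(0)=1)$ — already the target — and it remains to show the control-based factor in the denominator equals $1$.

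For that, I would handle $\Exp[I(A_0=a)W\mid S=1]$ by writing it as $\Exp[I(A_0=a)W\, I(S=1)]/\Pr(S=1)$ and using $(\BaseSamplingNoMatching{})$ in the form $\Pr(S=1\mid A_0,L_0)=\delta$: conditioning on $L_0$ inside the expectation, $I(S=1)$ contributes the constant factor $\delta$, which cancels against $\Pr(S=1)=\delta$ in the denominator, leaving $\Exp[I(A_0=a)W] = \Exp[\Pr(A_0=a\mid L_0)/\Pr(A_0=a\mid L_0)] = 1$ for each $a$, so the ratio is $1/1=1$. Multiplying the two factors gives the claim. The main obstacle — though it is more bookkeeping than genuine difficulty — is being careful with the definition of $W$ as a random variable (the substitution $a=A_0$) and making sure all conditional probabilities are well-defined, which is exactly what the positivity assumption of Appendix~\ref{app:notation} guarantees; one also has to be slightly careful that the selection assumption is used in two distinct ways (to simplify $W$, and to remove the conditioning on $S=1$), but both follow from the single independence statement $S\CI(A_0,L_0)$.
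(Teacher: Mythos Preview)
Your proposal is correct and follows essentially the same route as the paper's proof: first use $(\BaseSamplingNoMatching{})$ to show $\Pr(A_0=a\mid L_0,S=1)=\Pr(A_0=a\mid L_0)$ so that $W$ is the usual cohort IPW, then show the case-based ratio equals $\Pr(Y_K(1)=1)/\Pr(Y_K(0)=1)$ via consistency and BCE, and finally show the control-based ratio equals $1$ using $(\BaseSamplingNoMatching{})$ again. The only cosmetic difference is that the paper conditions on $A_0=a$ before iterating over $L_0$ (carrying along a factor $\Pr(A_0=a)$ that later cancels), whereas you iterate over $L_0$ with the indicator $I(A_0=a)$ still inside, which is slightly more direct but amounts to the same computation.
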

\begin{proof}
First, observe that $\Pr(A_0=a|L_0,S=1)=\Pr(A_0=a|L_0)$ for $a=0,1$, because
\begin{align*}
\Pr(A_0=a|L_0,S=1)&=\frac{\Pr(S=1|L_0,A_0=a)\Pr(A_0=a|L_0)}{\Pr(S=1|L_0)}\\
&=\frac{\delta}{\delta}\Pr(A_0=a|L_0) \tag{by \BaseSamplingNoMatching{}}\\
&=\Pr(A_0=a|L_0)
\end{align*}
Hence,
\begin{align*}
W=
\frac{1}{\Pr(A_0=a|L_0)}\bigg|_{a=A_0}.
\end{align*}

Now, consider the numerator of the left-hand side of the main equation in Theorem~\ref{th:BaseSamplingMarginalITT} and note that, because of the above, we have
\begin{align*}
\frac{\Exp\big[I(A_0=1)W|Y_K=1\big]}{\Exp\big[I(A_0=0)W|Y_K=1\big]}
&= \frac{\sum_{y=0}^1\Exp\big[I(A_0=1)WY_K|Y_K=y\big]\Pr(Y_K=y)}{\sum_{y=0}^1\Exp\big[I(A_0=0)WY_K|Y_K=y\big]\Pr(Y_K=y)}\\
&= \frac{\Exp\big[I(A_0=1)WY_K\big]}{\Exp\big[I(A_0=0)WY_K\big]} \\
&= \frac{\Exp\big[WY_K|A_0=1\big]\Pr(A_0=1)}{\Exp\big[WY_K|A_0=0\big]\Pr(A_0=0)},
\end{align*}
where
\begin{align*}
\Exp\big[WY_K|A_0=a\big] &= \Exp\big\{\Exp\big[WY_K|L_0,A_0=a\big]|A_0=a\big\}\\
&= \sum_{l}\frac{\Pr(Y_K=1|L_0=l,A_0=a)\Pr(L_0=l|A_0=a)}{\Pr(A_0=a|L_0=l)}\\
&= \sum_{l}\frac{\Pr(Y_K(a)=1|L_0=l,A_0=a)\Pr(L_0=l|A_0=a)}{\Pr(A_0=a|L_0=l)} \tag{by consistency}\\
&= \sum_{l}\frac{\Pr(Y_K(a)=1|L_0=l)\Pr(L_0=l|A_0=a)}{\Pr(A_0=a|L_0=l)} \tag{by baseline conditional exchangeability}\\
&= \sum_{l}\frac{\Pr(Y_K(a)=1|L_0=l)\Pr(A_0=a|L_0=l)\Pr(L_0=l)}{\Pr(A_0=a|L_0=l)\Pr(A_0=a)}\\
&= \frac{1}{\Pr(A_0=a)}\sum_{l}\Pr(Y_K(a)=1,L_0=l)\\
&= \frac{\Pr(Y_K(a)=1)}{\Pr(A_0=a)},
\end{align*}
so that
\begin{align*}
\frac{\Exp\big[I(A_0=1)W|Y_K=1\big]}{\Exp\big[I(A_0=0)W|Y_K=1\big]}
&= \frac{\Pr(Y_K(1)=1)}{\Pr(Y_K(0)=1)}.
\end{align*}

Next, consider the denominator of the left-hand side of the main equation in Theorem~\ref{th:BaseSamplingMarginalITT} and observe that
\begin{align*}
\frac{\Exp\big[I(A_0=1)W|S=1\big]}{\Exp\big[I(A_0=0)W|S=1\big]} = \frac{\Exp\big[I(A_0=1)WS\big]}{\Exp\big[I(A_0=0)WS\big]} &= \frac{\Exp\big[WS|A_0=1\big]\Pr(A_0=1)}{\Exp\big[WS|A_0=0\big]\Pr(A_0=0)},
\end{align*}
where
\begin{align*}
\Exp\big[WS|A_0=a\big] &= \Exp\{\Exp\big[WS|L_0,A_0=a\big]|A_0=a\}\\
&= \sum_l\frac{\Pr(S=1|L_0,A_0=a)\Pr(L_0=l|A_0=a)}{\Pr(A_0=a|L_0=l)}\\
&= \sum_l\frac{\delta\Pr(L_0=l|A_0=a)}{\Pr(A_0=a|L_0=l)} \tag{by \BaseSamplingNoMatching{}}\\
&= \frac{\delta}{\Pr(A_0=a)}\sum_l\Pr(L_0=l)\\
&= \frac{\delta}{\Pr(A_0=a)},
\end{align*}
so that 
\begin{align*}
\frac{\Exp\big[I(A_0=1)W|S=1\big]}{\Exp\big[I(A_0=0)W|S=1\big]} &= 1.
\end{align*}
It follows that \begin{align*}
\frac{\displaystyle\frac{\Exp\big[I(A_0=1)W|Y_K=1\big]}{\Exp\big[I(A_0=0)W|Y_K=1\big]}}{\displaystyle
\frac{\Exp\big[I(A_0=1)W|S=1\big]}{\Exp\big[I(A_0=0)W|S=1\big]}
} &= \frac{\Pr(Y_{K}(1)=1)}{\Pr(Y_{K}(0)=1)}.
\end{align*}
\end{proof}

\begin{theorem}[\titleBaseSamplingConditionalITT]\label{th:BaseSamplingConditionalITT}
Suppose BCE hold as well as \BaseSamplingNoMatching{}, or the weaker version $\Pr(S=1|L_0,A_0)=\Pr(S=1|L_0)=\delta_{L_0}\in(0,1]$. 
Then,
\begin{align*}
\frac{\displaystyle\frac{\Exp\big[I(A_0=1)|L_0,Y_K=1\big]}{\Exp\big[I(A_0=0)|L_0,Y_K=1\big]}}{\displaystyle
\frac{\Exp\big[I(A_0=1)|L_0,S=1\big]}{\Exp\big[I(A_0=0)|L_0,S=1\big]}
} &= \frac{\Pr(Y_{K}(1)=1|L_0)}{\Pr(Y_{K}(0)=1|L_0)}.
\end{align*}
\end{theorem}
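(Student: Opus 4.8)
The plan is to follow the structure of the proof of Theorem~\ref{th:BaseSamplingMarginalITT} but to work throughout within a fixed stratum $L_0=l$, which is exactly what removes the need for inverse probability weights. First I would rewrite each conditional expectation of an indicator as a conditional probability, so that the numerator of the left-hand side becomes the baseline exposure odds among the cases in stratum $l$, i.e.\ $\Pr(A_0=1\mid L_0=l,Y_K=1)/\Pr(A_0=0\mid L_0=l,Y_K=1)$, and the denominator becomes the analogous odds among the controls, $\Pr(A_0=1\mid L_0=l,S=1)/\Pr(A_0=0\mid L_0=l,S=1)$.

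For the case odds I would apply Bayes' rule within the stratum: $\Pr(A_0=a\mid L_0=l,Y_K=1)$ is proportional (in $a$) to $\Pr(Y_K=1\mid L_0=l,A_0=a)\Pr(A_0=a\mid L_0=l)$, so the case odds equals $\frac{\Pr(Y_K=1\mid L_0=l,A_0=1)}{\Pr(Y_K=1\mid L_0=l,A_0=0)}\cdot\frac{\Pr(A_0=1\mid L_0=l)}{\Pr(A_0=0\mid L_0=l)}$. By consistency, $\Pr(Y_K=1\mid L_0=l,A_0=a)=\Pr(Y_K(a)=1\mid L_0=l,A_0=a)$, and by (weak) BCE this equals $\Pr(Y_K(a)=1\mid L_0=l)$. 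Hence the case odds equals the target conditional risk ratio $\Pr(Y_K(1)=1\mid L_0=l)/\Pr(Y_K(0)=1\mid L_0=l)$ times the baseline exposure odds $\Pr(A_0=1\mid L_0=l)/\Pr(A_0=0\mid L_0=l)$.

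For the control odds I would again use Bayes' rule within the stratum, giving a quantity proportional (in $a$) to $\Pr(S=1\mid L_0=l,A_0=a)\Pr(A_0=a\mid L_0=l)$; the sampling assumption \BaseSamplingNoMatching{} (or its weaker version $\Pr(S=1\mid L_0,A_0)=\Pr(S=1\mid L_0)=\delta_{L_0}\in(0,1]$) makes $\Pr(S=1\mid L_0=l,A_0=a)$ free of $a$, so the control odds is exactly $\Pr(A_0=1\mid L_0=l)/\Pr(A_0=0\mid L_0=l)$. Dividing the case odds by the control odds, the baseline exposure odds cancels and leaves $\Pr(Y_K(1)=1\mid L_0=l)/\Pr(Y_K(0)=1\mid L_0=l)$, which is the asserted identity (for every $l$ in the support of $L_0$).

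I do not anticipate a genuine obstacle: the whole argument is two applications of Bayes' rule together with consistency and baseline conditional exchangeability, and positivity enters only to guarantee that all the conditional probabilities and odds are well defined. The one point worth highlighting is the contrast with Theorem~\ref{th:BaseSamplingMarginalITT}: conditioning on $L_0$ makes the nuisance factor $\Pr(A_0=1\mid L_0)/\Pr(A_0=0\mid L_0)$ a common term in the case and control odds, so it cancels automatically — no standardisation over the distribution of $L_0$, and hence no weight $W$, is required.
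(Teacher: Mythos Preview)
Your proposal is correct and follows essentially the same route as the paper's proof: both factor the case odds and the control odds within a fixed stratum $L_0=l$ as a product of the target quantity (the conditional risk ratio for cases, unity for controls) times the common baseline exposure odds $\Pr(A_0=1\mid L_0=l)/\Pr(A_0=0\mid L_0=l)$, which then cancels. The only cosmetic difference is that the paper reaches this factorisation by rewriting $\Exp[I(A_0=a)\mid L_0,Y_K=1]$ as $\Exp[I(A_0=a)Y_K\mid L_0]/\Pr(Y_K=1\mid L_0)$ (and similarly for $S$), whereas you invoke Bayes' rule directly; the two are algebraically equivalent.
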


\begin{proof}
We have
\begin{align*}
\frac{\Exp\big[I(A_0=1)|L_0,Y_K=1\big]}{\Exp\big[I(A_0=0)|L_0,Y_K=1\big]}
&= \frac{\sum_{y=0}^1\Exp\big[I(A_0=1)Y_K|L_0,Y_K=y\big]\Pr(Y_K=y|L_0)}{\sum_{y=0}^1\Exp\big[I(A_0=0)Y_K|L_0,Y_K=y\big]\Pr(Y_K=y|L_0)}\\
&= \frac{\Exp\big[I(A_0=1)Y_K|L_0\big]}{\Exp\big[I(A_0=0)Y_K|L_0\big]} \\
&= \frac{\Exp\big[Y_K|L_0,A_0=1\big]\Pr(A_0=1|L_0)}{\Exp\big[Y_K|L_0,A_0=0\big]\Pr(A_0=0|L_0)} \\
&= \frac{\Exp\big[Y_K(1)|L_0,A_0=1\big]\Pr(A_0=1|L_0)}{\Exp\big[Y_K(0)|L_0,A_0=0\big]\Pr(A_0=0|L_0)} \tag{by consistency}\\
&= \frac{\Exp\big[Y_K(1)|L_0\big]\Pr(A_0=1|L_0)}{\Exp\big[Y_K(0)|L_0\big]\Pr(A_0=0|L_0)} \tag{by baseline conditional exchangeability}.
\end{align*}
Also,
\begin{align*}
\frac{\Exp\big[I(A_0=1)|L_0,S=1\big]}{\Exp\big[I(A_0=0)|L_0,S=1\big]} &= \frac{\Exp\big[I(A_0=1)S|L_0\big]}{\Exp\big[I(A_0=0)S|L_0\big]} \\&= \frac{\Exp\big[S|L_0,A_0=1\big]\Pr(A_0=1|L_0)}{\Exp\big[S|L_0,A_0=0\big]\Pr(A_0=0|L_0)}\\
&=\frac{\delta_{L_0}\Pr(A_0=1|L_0)}{\delta_{L_0}\Pr(A_0=0|L_0)} \tag{under the assumption that $\Pr(S=1|L_0,A_0)=\Pr(S=1|L_0)=\delta_{L_0}\in(0,1]$}\\
&=\frac{\Pr(A_0=1|L_0)}{\Pr(A_0=0|L_0)}.
\end{align*}
It immediately follows that
\begin{align*}
\frac{\displaystyle\frac{\Exp\big[I(A_0=1)|L_0,Y_K=1\big]}{\Exp\big[I(A_0=0)|L_0,Y_K=1\big]}}{\displaystyle
\frac{\Exp\big[I(A_0=1)|L_0,S=1\big]}{\Exp\big[I(A_0=0)|L_0,S=1\big]}
} &= \frac{\Pr(Y_{K}(1)=1|L_0)}{\Pr(Y_{K}(0)=1|L_0)}.
\end{align*}
\end{proof}

\begin{corollary}
If in addition to the conditions of Theorem~\ref{th:BaseSamplingConditionalITT}, \begin{align*}
\frac{\Pr(Y_K=1|L_0=l,A_0=1)}{\Pr(Y_K=1|L_0=l,A_0=0)} = \theta \tag{homogeneity condition \HomogeneousRiskRatios{}}
\end{align*}
for all $l$ and some constant $\theta$, then 
\begin{align*}
\frac{\displaystyle\frac{\Exp\big[I(A_0=1)|L_0,Y_K=1\big]}{\Exp\big[I(A_0=0)|L_0,Y_K=1\big]}}{\displaystyle
\frac{\Exp\big[I(A_0=1)|L_0,S=1\big]}{\Exp\big[I(A_0=0)|L_0,S=1\big]}
}  = \frac{\Pr(Y_K(1)=1)}{\Pr(Y_K(0)=1)},
\end{align*}
because of the collapsibility of the risk ratio.
\end{corollary}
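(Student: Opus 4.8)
The plan is to obtain the result by chaining together Theorem~\ref{th:BaseSamplingConditionalITT}, the consistency/exchangeability bookkeeping already carried out in its proof, the homogeneity assumption \HomogeneousRiskRatios{}, and a one-line collapsibility computation. Concretely, Theorem~\ref{th:BaseSamplingConditionalITT} tells us that the displayed left-hand ratio equals the conditional causal risk ratio $\Pr(Y_K(1)=1\mid L_0)/\Pr(Y_K(0)=1\mid L_0)$, so it suffices to show that, under the extra homogeneity condition, this conditional causal risk ratio is constant in $L_0$ and equals the marginal causal risk ratio $\Pr(Y_K(1)=1)/\Pr(Y_K(0)=1)$.

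First I would transport the homogeneity condition from the observational to the counterfactual scale, exactly as in the proof of Theorem~\ref{th:BaseSamplingConditionalITT}. By consistency (on the event $A_0=a$, $Y_K(a)=Y_K$) and baseline conditional exchangeability ($Y_K(a)\CI A_0\mid L_0$), we have, for $a=0,1$ and every level $l$,
\begin{align*}
\Pr(Y_K=1\mid L_0=l,A_0=a) &= \Pr(Y_K(a)=1\mid L_0=l,A_0=a) = \Pr(Y_K(a)=1\mid L_0=l).
\end{align*}
Hence $\Pr(Y_K(1)=1\mid L_0=l)/\Pr(Y_K(0)=1\mid L_0=l) = \Pr(Y_K=1\mid L_0=l,A_0=1)/\Pr(Y_K=1\mid L_0=l,A_0=0)$, which by \HomogeneousRiskRatios{} equals the constant $\theta$ for every $l$. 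So the conditional causal risk ratio does not depend on $L_0$ and equals $\theta$.

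Next I would invoke collapsibility of the risk ratio: integrating the relation $\Pr(Y_K(1)=1\mid L_0=l)=\theta\,\Pr(Y_K(0)=1\mid L_0=l)$ over the marginal law of $L_0$ gives
\begin{align*}
\Pr(Y_K(1)=1) = \sum_l \Pr(Y_K(1)=1\mid L_0=l)\Pr(L_0=l) = \theta \sum_l \Pr(Y_K(0)=1\mid L_0=l)\Pr(L_0=l) = \theta\,\Pr(Y_K(0)=1),
\end{align*}
so $\theta = \Pr(Y_K(1)=1)/\Pr(Y_K(0)=1)$. Combining this with Theorem~\ref{th:BaseSamplingConditionalITT} and the previous step yields the claimed identity. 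There is no genuinely hard step here; the only points to be careful about are (i) that the homogeneity assumption is stated for the \emph{observational} conditional risk ratio and must be moved to the counterfactual scale, which is precisely what consistency and baseline conditional exchangeability provide, and (ii) that the collapsibility argument is run on the marginalized counterfactual risks $\Pr(Y_K(a)=1\mid L_0)$ rather than on observational risks (where, in the presence of confounding, the risk ratio would not in general be collapsible over $L_0$).
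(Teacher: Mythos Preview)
Your argument is correct and is precisely the intended one: the paper's own justification for the corollary is nothing more than the phrase ``because of the collapsibility of the risk ratio,'' and your write-up simply spells out that one-liner---use Theorem~\ref{th:BaseSamplingConditionalITT} to reach the conditional causal risk ratio, translate \HomogeneousRiskRatios{} to the counterfactual scale via consistency and BCE, then average over $L_0$. There is no additional content in the paper beyond what you have supplied.
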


\begin{theorem}[\titleSurvivorSamplingConditionalITT]\label{th:SurvivorSamplingConditionalITT}
Suppose BCE holds as well as \begin{align*}
\Pr(S=1|L_0,A_0,Y_K) = \Pr(S=1|L_0,Y_K) &= 
\delta_{L_0}\times (1-Y_{K}) \tag{\SurvivorSamplingNoMatching{}}
\end{align*}
for some $\delta_{L_0}\in(0,1]$.
Then,
\begin{align*}
\frac{\displaystyle\frac{\Exp\big[I(A_0=1)|L_0,Y_K=1\big]}{\Exp\big[I(A_0=0)|L_0,Y_K=1\big]}}{\displaystyle
\frac{\Exp\big[I(A_0=1)|L_0,S=1\big]}{\Exp\big[I(A_0=0)|L_0,S=1\big]}
} &= \frac{\mathrm{Odds}(Y_{K}(1)=1|L_0)}{\mathrm{Odds}(Y_{K}(0)=1|L_0)}.
\end{align*}
\end{theorem}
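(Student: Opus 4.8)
The plan is to follow the same two-part decomposition used in the proof of Theorem~\ref{th:BaseSamplingConditionalITT}: first identify the numerator of the left-hand side (the conditional odds of baseline exposure among cases, given $L_0$) in terms of counterfactual risks, then identify the denominator (the conditional odds of baseline exposure among controls, given $L_0$) in terms of counterfactual survival probabilities, and finally take the ratio so that the nuisance factor $\Pr(A_0=1|L_0)/\Pr(A_0=0|L_0)$ cancels.

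For the numerator, the argument is essentially verbatim that of Theorem~\ref{th:BaseSamplingConditionalITT}. Writing $\Exp[I(A_0=a)|L_0,Y_K=1]$ as proportional (in $a$, within $L_0$) to $\Exp[I(A_0=a)Y_K|L_0]=\Exp[Y_K|L_0,A_0=a]\Pr(A_0=a|L_0)$ and then invoking consistency and BCE to replace $\Exp[Y_K|L_0,A_0=a]$ with $\Pr(Y_K(a)=1|L_0)$ yields
\[
\frac{\Exp[I(A_0=1)|L_0,Y_K=1]}{\Exp[I(A_0=0)|L_0,Y_K=1]}=\frac{\Pr(Y_K(1)=1|L_0)\,\Pr(A_0=1|L_0)}{\Pr(Y_K(0)=1|L_0)\,\Pr(A_0=0|L_0)}.
\]

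For the denominator, the new ingredient is assumption \SurvivorSamplingNoMatching{}. Since $\Exp[I(A_0=a)|L_0,S=1]=\Pr(A_0=a|L_0,S=1)$, Bayes' rule gives $\Pr(A_0=a|L_0,S=1)\propto\Pr(S=1|L_0,A_0=a)\,\Pr(A_0=a|L_0)$. Marginalising the selection probability over $Y_K$ and using \SurvivorSamplingNoMatching{} gives $\Pr(S=1|L_0,A_0=a)=\Exp[\Pr(S=1|L_0,A_0=a,Y_K)\mid L_0,A_0=a]=\delta_{L_0}\Pr(Y_K=0|L_0,A_0=a)$, and one more application of consistency and (the weak form of) BCE turns $\Pr(Y_K=0|L_0,A_0=a)$ into $\Pr(Y_K(a)=0|L_0)$, so that
\[
\frac{\Exp[I(A_0=1)|L_0,S=1]}{\Exp[I(A_0=0)|L_0,S=1]}=\frac{\Pr(Y_K(1)=0|L_0)\,\Pr(A_0=1|L_0)}{\Pr(Y_K(0)=0|L_0)\,\Pr(A_0=0|L_0)}.
\]

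Dividing the first display by the second cancels the factor $\Pr(A_0=1|L_0)/\Pr(A_0=0|L_0)$ and leaves $\Pr(Y_K(1)=1|L_0)\Pr(Y_K(0)=0|L_0)/[\Pr(Y_K(0)=1|L_0)\Pr(Y_K(1)=0|L_0)]$, which is precisely $\mathrm{Odds}(Y_K(1)=1|L_0)/\mathrm{Odds}(Y_K(0)=1|L_0)$. I expect no genuine obstacle here beyond bookkeeping; the one step deserving care is the denominator, where survivor sampling forces the control-selection probability to be proportional (within $L_0$) to the conditional survival probability $\Pr(Y_K=0|L_0,A_0)$ rather than to a constant --- and it is exactly this extra survival factor that converts the risk ratio obtained in Theorem~\ref{th:BaseSamplingConditionalITT} into the odds ratio appearing here. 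Positivity is invoked implicitly to guarantee $\Pr(S=1|L_0)>0$ and $0<\Pr(A_0=a|L_0)<1$, so that all the conditional expectations involved are well defined.
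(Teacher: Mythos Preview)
Your proposal is correct and follows essentially the same approach as the paper's own proof: both treat the numerator via the identity $\Exp[I(A_0=a)\mid L_0,Y_K=1]\propto \Pr(Y_K=1\mid L_0,A_0=a)\Pr(A_0=a\mid L_0)$ plus consistency and BCE, and both handle the denominator by computing $\Pr(S=1\mid L_0,A_0=a)=\delta_{L_0}\Pr(Y_K=0\mid L_0,A_0=a)$ from \SurvivorSamplingNoMatching{} (the paper via $\Exp[S\mid L_0,A_0=a]$, you via marginalising over $Y_K$), then again applying consistency and BCE so that the nuisance odds $\Pr(A_0=1\mid L_0)/\Pr(A_0=0\mid L_0)$ cancels.
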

\begin{proof}
First, consider the numerator of the left-hand side of the equation in Theorem~\ref{th:SurvivorSamplingConditionalITT} and observe
\begin{align*}
\frac{\Exp\big[I(A_0=1)|L_0,Y_K=1\big]}{\Exp\big[I(A_0=0)|L_0,Y_K=1\big]} &= \frac{\Pr(Y_K=1|L_0,A_0=1)}{\Pr(Y_K=1|L_0,A_0=0)}\mathrm{Odds}(A_0=1|L_0) \\
&= \frac{\Pr(Y_K(1)=1|L_0,A_0=1)}{\Pr(Y_K(1)=1|L_0,A_0=0)}\mathrm{Odds}(A_0=1|L_0) \tag{by consistency}\\
&= \frac{\Pr(Y_K(1)=1|L_0)}{\Pr(Y_K(1)=1|L_0)}\mathrm{Odds}(A_0=1|L_0). \tag{by baseline conditional exchangeability}
\end{align*}

Next, consider the denominator and observe that
\begin{align*}
\frac{\Exp\big[I(A_0=1)|L_0,S=1\big]}{\Exp\big[I(A_0=0)|L_0,S=1\big]} &=
\frac{\Exp\big[I(A_0=1)S|L_0\big]}{\Exp\big[I(A_0=0)S|L_0\big]} \\
&= \frac{\Exp\big[S|L_0,A_0=1\big]}{\Exp\big[S|L_0,A_0=0\big]}\mathrm{Odds}(A_0=1|L_0)\\
&=\frac{\delta_{L_0}\Pr(Y_K=0|L_0,A_0=1)}{\delta_{L_0}\Pr(Y_K=0|L_0,A_0=0)}\mathrm{Odds}(A_0=1|L_0)\tag{by \SurvivorSamplingNoMatching{}}\\
&=\frac{\Pr(Y_K(1)=0|L_0,A_0=1)}{\Pr(Y_K(0)=0|L_0,A_0=0)}\mathrm{Odds}(A_0=1|L_0) \tag{by consistency}\\
&=\frac{\Pr(Y_K(1)=0|L_0)}{\Pr(Y_K(0)=0|L_0)}\mathrm{Odds}(A_0=1|L_0). \tag{by baseline conditional exchangeability}
\end{align*}

It follows that \begin{align*}
\frac{\displaystyle\frac{\Exp\big[I(A_0=1)|L_0,Y_K=1\big]}{\Exp\big[I(A_0=0)|L_0,Y_K=1\big]}}{\displaystyle
\frac{\Exp\big[I(A_0=1)|L_0,S=1\big]}{\Exp\big[I(A_0=0)|L_0,S=1\big]}
} &= \frac{\mathrm{Odds}(Y_{K}(1)=1|L_0)}{\mathrm{Odds}(Y_{K}(0)=1|L_0)}.
\end{align*}
\end{proof}

\begin{remark}[Remark to Theorem~\ref{th:SurvivorSamplingConditionalITT}]
Under BCE, the stronger version of \SurvivorSamplingNoMatching, 
\begin{align*}
\Pr(S=1|L_0,A_0,Y_K) = \Pr(S=1|Y_K) = \delta\times (1-Y_K) \tag{\SurvivorSamplingNoMatching$^\ast$}
\end{align*}
for some $\delta\in(0,1]$ and with \begin{align*}
W &= \frac{1}{\Pr(A_0=a|L_0)}\Bigg|_{a=A_0},
\end{align*}
we have
\begin{align}
\frac{\displaystyle\frac{\Exp\big[I(A_0=1)W|Y_K=1\big]}{\Exp\big[I(A_0=0)W|Y_K=1\big]}}{\displaystyle
\frac{\Exp\big[I(A_0=1)W|S=1\big]}{\Exp\big[I(A_0=0)W|S=1\big]}
} &= \frac{\mathrm{Odds}(Y_{K}(1)=1)}{\mathrm{Odds}(Y_{K}(0)=1)} \label{eq:marginalORunderSurivivalSampling}
\end{align}
(see proof below).
However, from
\begin{align*}
\Pr(A_0=a|L_0,S=1) &= \frac{\Pr(S=1|L_0,A_0=a)\Pr(A_0=a|L_0)}{\Pr(S=1|L_0)}\\
&=\frac{\delta\Pr(Y_K=0|L_0,A_0=a)\Pr(A_0=a|L_0)}{\delta\Pr(Y_K=0|L_0)} \tag{by \SurvivorSamplingNoMatching$^\ast$}\\
&= \Pr(A_0=a|L_0,Y_K=0),
\end{align*}
it follows that the weights $W$ above are not identified by 
\begin{align*}
\frac{1}{\Pr(A_0=a|L_0,S=1)}\Bigg|_{a=A_0}
\end{align*}
when $Y_K\nCI A_0|L_0$. (However, $\Pr(A_0=a|L_0,S=1)$ approximates $\Pr(A_0=a|L_0)$ under a rare event assumption.) In fact, the target marginal odds ratio is not identifiable, under BCE and \SurvivorSamplingNoMatching$^\ast$ with unknown $\delta$, from the available data distribution, which is formed by the distribution of $(L_0,A_0,Y_K,S)|(Y_K=1\vee S=1)$. A proof is given below.

\begin{proof}[Proof of \eqref{eq:marginalORunderSurivivalSampling} under stated conditions]
As shown in the proof to Theorem~\ref{th:BaseSamplingMarginalITT}, \begin{align*}\frac{\Exp\big[I(A_0=1)W|Y_K=1\big]}{\Exp\big[I(A_0=0)W|Y_K=1\big]} &= \frac{\Pr(Y_K(1)=1)}{\Pr(Y_K(0)=1)}.
\end{align*}
Now, \begin{align*}
\frac{\Exp\big[I(A_0=1)W|S=1\big]}{\Exp\big[I(A_0=0)W|S=1\big]} = \frac{\Exp\big[I(A_0=1)WS\big]}{\Exp\big[I(A_0=0)WS\big]} &= \frac{\Exp\big[WS|A_0=1\big]\Pr(A_0=1)}{\Exp\big[WS|A_0=0\big]\Pr(A_0=0)},
\end{align*}
where
\begin{align*}
\Exp\big[WS|A_0=a\big] &= \Exp\{\Exp\big[WS|L_0,A_0=a\big]|A_0=a\}\\
&= \sum_l\frac{\Pr(S=1|L_0,A_0=a)\Pr(L_0=l|A_0=a)}{\Pr(A_0=a|L_0=l)}\\
&= \sum_l\frac{\delta\Pr(Y_K=0|L_0=l,A_0=a)\Pr(L_0=l|A_0=a)}{\Pr(A_0=a|L_0=l)} \tag{by \SurvivorSamplingNoMatching$^\ast$}\\
&= \frac{\delta}{\Pr(A_0=a)}\sum_l\Pr(Y_K=0|L_0=l,A_0=a)\Pr(L_0=l)\\
&= \frac{\delta}{\Pr(A_0=a)}\sum_l\Pr(Y_K(a)=0|L_0=l,A_0=a)\Pr(L_0=l) \tag{by consistency}\\
&= \frac{\delta}{\Pr(A_0=a)}\sum_l\Pr(Y_K(a)=0,L_0=l) \tag{by baseline conditional exchangeability}\\
&= \frac{\delta\Pr(Y_K(a)=0)}{\Pr(A_0=a)},
\end{align*}
so that 
\begin{align*}
\frac{\Exp\big[I(A_0=1)W|S=1\big]}{\Exp\big[I(A_0=0)W|S=1\big]} &= \frac{\Pr(Y_K(1)=0)}{\Pr(Y_K(0)=0)}
\end{align*}
and, in turn,
\begin{align*}
\frac{\displaystyle\frac{\Exp\big[I(A_0=1)W|Y_K=1\big]}{\Exp\big[I(A_0=0)W|Y_K=1\big]}}{\displaystyle
\frac{\Exp\big[I(A_0=1)W|S=1\big]}{\Exp\big[I(A_0=0)W|S=1\big]}
} &= \frac{\mathrm{Odds}(Y_{K}(1)=1)}{\mathrm{Odds}(Y_{K}(0)=1)}.
\end{align*}
\end{proof}

\begin{proof}[Proof of nonidentifiability of target marginal odds ratio under stated conditions]
Consider two distributions of $(L_0,A_0,Y_K,S)$ satisfying \SurvivorSamplingNoMatching$^\ast$, each characterised by the following conditionals:
\begin{align*}
Y_K &\sim \mathrm{Bernoulli}(\alpha),\\
S|Y_K &\sim \mathrm{Bernoulli}(\delta\times (1-Y_K)),\\
L_0|Y_K,S &\sim L_0|Y_K \sim \mathrm{Bernoulli}(5/10-2/10\times Y_K),\\
A_0|L_0,Y_K,S &\sim A_0|L_0,Y_K \sim \mathrm{Bernoulli}(3/10+2/10\times L_0+3/10\times Y_K).
\end{align*}
The parameter values of the distributions are given in the table below.
\begin{center}
\begin{tabular}{ccc}
\toprule
Parameter&Distribution 1&Distribution 2\\
\midrule
$\alpha$&$1/10$&$2/10$\\
$\delta$&$1/10$&$9/40$\\
\bottomrule
\end{tabular}
\end{center}

Now, for all $l,a,y,s\in\{0,1\}$,
\begin{align*}
&\Pr(L_0=l,A_0=a,Y_K=y,S=s|Y_K=1\vee S=1)\\
&\qquad= \frac{\Pr(L_0=l,A_0=a,Y_K=y,S=s,Y_K=1\vee S=1)}{\Pr(Y_K=1\wedge S=0)+\Pr(Y_K=0\wedge S=1)+\Pr(Y_K=1\wedge S=1)}\\
&\qquad= \frac{I(y=1\vee s=1)\Pr(L_0=l,A_0=a,Y_K=y,S=s)}{\Pr(Y_K=1)+\delta\Pr(Y_K=0)}\\
&\qquad= I(y=1\vee s=1)\frac{\Pr(L_0=l,A_0=a|Y_K=y)\Pr(S=s|Y_K=y)\Pr(Y_K=y)}{\alpha+\delta(1-\alpha)}\\
&\qquad=\left\{\begin{array}{ll}
\displaystyle\Pr(L_0=l,A_0=a|Y_K=0)\bigg(1-\frac{\alpha}{\alpha+\delta(1-\alpha)}\bigg)&\text{if~}y=0\wedge s=1,\\[7pt]
\displaystyle\Pr(L_0=l,A_0=a|Y_K=1)\frac{\alpha}{\alpha+\delta(1-\alpha)}&\text{if~}y=1\wedge s=0,\\[3pt]
0&\text{otherwise,}
\end{array}\right.
\end{align*}
where $$\frac{\alpha}{\alpha+\delta(1-\alpha)}=10/19$$ under Distribution 1 and under Distribution 2. Hence, Distribution 1 and 2 imply the same available data distribution. 

However, as we now show, the distributions imply different target marginal odds ratios.
Since
\begin{align*}
\Pr(Y_K(a)=1) &= \sum_{l=0}^1\Pr(Y_K(a)=1|L_0=l)\Pr(L_0=l) \\
&= \sum_{l=0}^1\Pr(Y_K(a)=1|L_0=l,A_0=a)\Pr(L_0=l) \tag{by BCE}\\
&= \sum_{l=0}^1\Pr(Y_K=1|L_0=l,A_0=a)\Pr(L_0=l) \tag{by consistency}\\
&= \sum_{l=0}^1\frac{\Pr(L_0=l,A_0=a|Y_K=1)\Pr(Y_K=1)}{\Pr(L_0=l,A_0=a)}\sum_{y=0}^1\Pr(L_0=l|Y_K=y)\Pr(Y_K=y)\\
&= \sum_{l=0}^1\bigg(1+\frac{\Pr(L_0=l,A_0=a|Y_K=0)\Pr(Y_K=0)}{\Pr(L_0=l,A_0=a|Y_K=1)\Pr(Y_K=1)}\bigg)^{-1}\sum_{y=0}^1\Pr(L_0=l|Y_K=y)\Pr(Y_K=y)
\end{align*}
for $a=0,1$, we have
\begin{align*}
\Pr(Y_K(1)=1) &= \frac{5+2\alpha}{10+({25}/{7})/\mathrm{odds}({\alpha})}+\frac{5-2\alpha}{10+(125/12)/\mathrm{odds}(\alpha)}\text{~~and}\\
\Pr(Y_K(0)=1) &= \frac{5+2\alpha}{10+({25}/{2})/\mathrm{odds}({\alpha})}+\frac{5-2\alpha}{10+(125/3)/\mathrm{odds}(\alpha)},
\end{align*}
so that
\begin{align*}
\frac{\mathrm{Odds}(Y_K(1)=1)}{\mathrm{Odds}(Y_K(0)=1)}
&=\left\{\begin{array}{ll}
\displaystyle\frac{587,791}{167,166}\approx3.5&\text{under Distribution 1},\\[7pt]
\displaystyle\frac{512,539}{148,789}\approx 3.4&\text{under Distribution 2}.
\end{array}\right.
\end{align*}

Hence, we found an available data distribution that is compatible with more than one value of the target marginal odds ratio. This concludes the proof.
\end{proof}
\end{remark}

\begin{theorem}[\titleDensitySamplingMarginalITT]\label{th:DensitySamplingMarginalITT}
Suppose BCE holds as well as \begin{align*}
\Pr(S_k=1|L_0,A_0,Y_{k}) = \Pr(S_k=1|Y_{k}) &= 
\delta\times(1-Y_k), \tag{\RiskSetSamplingNoMatching{}}
\end{align*}
for some $\delta\in(0,1]$. 
If \begin{align*}
\Pr(Y_{k+1}(a)=1|Y_k(a)=0)=\theta_{a} \tag{\HomogeneousRates{}}
\end{align*}
for $a=0,1$ and some constants $\theta_0,\theta_1$, then
\begin{align*}
\frac{\displaystyle\frac{\Exp\big[I(A_0=1)W|Y_K=1\big]}{\Exp\big[I(A_0=0)W|Y_K=1\big]}}{\displaystyle
\frac{\Exp\big[I(A_0=1)W\sum_{k=0}^{K-1}S_k\big]}{\Exp\big[I(A_0=0)W\sum_{k=0}^{K-1}S_k\big]}
} &= \frac{\Pr(Y_{k+1}(1)=1|Y_{k+1}(1)=0)}{\Pr(Y_{k+1}(0)=1|Y_{k+1}(0)=0)},
\end{align*}
where 
\begin{align*}
W=\frac{1}{\Pr(A_0=a|L_0,S=1)}\bigg|_{a=A_0},
\end{align*}
\end{theorem}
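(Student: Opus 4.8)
The plan is to recycle the case-base argument of Theorem~\ref{th:BaseSamplingMarginalITT} for the numerator and then carry out a parallel but slightly longer computation for the denominator, where the time-summed selection indicator $\sum_k S_k$ brings in the entire counterfactual survivor curve; assumption \HomogeneousRates{} is what lets the resulting expression collapse back to a single hazard ratio. First I would dispose of the weight exactly as at the start of the proof of Theorem~\ref{th:BaseSamplingMarginalITT}: under \RiskSetSamplingNoMatching{} control selection is uninformative about $(L_0,A_0)$ within a risk set, so the propensity score among controls reproduces the cohort propensity score and $W=\Pr(A_0=a\mid L_0)^{-1}\big|_{a=A_0}$. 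With this in hand, the numerator is handled verbatim by the Theorem~\ref{th:BaseSamplingMarginalITT} calculation — condition out $Y_K$, factor on $A_0$, iterate expectations over $L_0$, and invoke consistency and BCE — to obtain $\Exp[WY_K\mid A_0=a]=\Pr(Y_K(a)=1)/\Pr(A_0=a)$ and hence
\[
\frac{\Exp[I(A_0=1)W\mid Y_K=1]}{\Exp[I(A_0=0)W\mid Y_K=1]}=\frac{\Pr(Y_K(1)=1)}{\Pr(Y_K(0)=1)}.
\]

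Next I would compute the denominator. Writing $\Exp\big[I(A_0=a)W\sum_{k}S_k\big]=\sum_{k=0}^{K-1}\Exp[I(A_0=a)WS_k]$ and noting that $I(A_0=a)W$ is $(L_0,A_0)$-measurable, I would push $\Exp[S_k\mid L_0,A_0]$ inside: by \RiskSetSamplingNoMatching{} this equals $\delta\Pr(Y_k=0\mid L_0,A_0)$, and consistency together with BCE turns $\Pr(Y_k=0\mid L_0,A_0=a)$ into $\Pr(Y_k(a)=0\mid L_0)$. Since the latter is a function of $L_0$ only while $\Exp[I(A_0=a)W\mid L_0]=1$, iterating expectations over $L_0$ gives $\Exp[I(A_0=a)WS_k]=\delta\Pr(Y_k(a)=0)$, so
\[
\Exp\Big[I(A_0=a)W\textstyle\sum_{k=0}^{K-1}S_k\Big]=\delta\sum_{k=0}^{K-1}\Pr(Y_k(a)=0).
\]

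It then remains to plug in \HomogeneousRates{} and do the bookkeeping. With $Y_0(a)=0$ and constant discrete hazard $\theta_a$, the counterfactual survivor function is geometric, $\Pr(Y_k(a)=0)=(1-\theta_a)^k$, so $\Pr(Y_K(a)=1)=1-(1-\theta_a)^K$ and $\sum_{k=0}^{K-1}\Pr(Y_k(a)=0)=\{1-(1-\theta_a)^K\}/\theta_a$. Forming the ratio of ratios, the factors $1-(1-\theta_a)^K$ cancel between numerator and denominator, and what survives is $\theta_1/\theta_0$, i.e. the targeted marginal discrete-time hazard ratio $\Pr(Y_{k+1}(1)=1\mid Y_k(1)=0)/\Pr(Y_{k+1}(0)=1\mid Y_k(0)=0)$. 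The main obstacle is the denominator step: one has to justify pulling $\Exp[S_k\mid L_0,A_0]$ through the weight — which uses the explicit form $\Pr(S_k=1\mid L_0,A_0,Y_k)=\delta(1-Y_k)$ from \RiskSetSamplingNoMatching{} together with the $(L_0,A_0)$-measurability of $I(A_0=a)W$ — and then to recognize that $W$ annihilates the propensity factor so that $\Exp[I(A_0=a)WS_k]$ reduces cleanly to $\delta\Pr(Y_k(a)=0)$; once that identity is established, only the geometric-series algebra remains. I would also note in passing that the conditioning event in the displayed estimand is evidently meant to read $Y_k(\cdot)=0$ rather than $Y_{k+1}(\cdot)=0$, the two being equal in value under \HomogeneousRates{}.
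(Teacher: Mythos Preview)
Your proof is correct and follows essentially the same route as the paper's: the weight identity, the numerator via the Theorem~\ref{th:BaseSamplingMarginalITT} argument, and the denominator via linearity in $k$, \RiskSetSamplingNoMatching{}, consistency and BCE to reach $\Exp\big[I(A_0=a)W\sum_k S_k\big]=\delta\sum_k\Pr(Y_k(a)=0)$. The only cosmetic difference is in the final bookkeeping: rather than evaluating the geometric sums explicitly, the paper rewrites the numerator as $\dfrac{\theta_1\sum_k\Pr(Y_k(1)=0)}{\theta_0\sum_k\Pr(Y_k(0)=0)}$ (telescoping $\Pr(Y_K(a)=1)=\sum_k\theta_a\Pr(Y_k(a)=0)$ under \HomogeneousRates{}) and cancels the survivor sums directly against the denominator, so no closed-form for $\sum_k(1-\theta_a)^k$ is needed. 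Your observation about the typo in the conditioning event is also correct.
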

\begin{proof}
First, observe that $\Pr(A_0=a|L_0,S=1)=\Pr(A_0=a|L_0)$ for $a=0,1$, because
\begin{align*}
\Pr(A_0=a|L_0,S=1)&=\frac{\Pr(S=1|L_0,A_0=a)\Pr(A_0=a|L_0)}{\Pr(S=1|L_0)}\\
&=\frac{\delta}{\delta}\Pr(A_0=a|L_0) \tag{by \RiskSetSamplingNoMatching{}}\\
&=\Pr(A_0=a|L_0)
\end{align*}
Hence,
\begin{align*}
W=
\frac{1}{\Pr(A_0=a|L_0)}\bigg|_{a=A_0}.
\end{align*}

For the numerator of the main result of Theorem~\ref{th:DensitySamplingMarginalITT}, we thus have
\begin{align*}
\frac{\Exp\big[I(A_0=1)W|Y_K=1\big]}{\Exp\big[I(A_0=0)W|Y_K=1\big]}
&= \frac{\Exp\big[I(A_0=1)WY_K\big]}{\Exp\big[I(A_0=0)WY_K\big]} \\
&= \frac{\Exp\big[WY_K|A_0=1\big]\Pr(A_0=1)}{\Exp\big[WY_K|A_0=0\big]\Pr(A_0=0)},
\end{align*}
where
\begin{align*}
\Exp\big[WY_K|A_0=a\big] &= \Exp\big\{\Exp\big[WY_K|L_0,A_0=a\big]|A_0=a\big\}\\
&= \sum_{l}\frac{\Pr(Y_K=1|L_0=l,A_0=a)\Pr(L_0=l|A_0=a)}{\Pr(A_0=a|L_0=l)}\\
&= \sum_{l}\frac{\Pr(Y_K(a)=1|L_0=l,A_0=a)\Pr(L_0=l|A_0=a)}{\Pr(A_0=a|L_0=l)} \tag{by consistency}\\
&= \sum_{l}\frac{\Pr(Y_K(a)=1|L_0=l)\Pr(L_0=l|A_0=a)}{\Pr(A_0=a|L_0=l)} \tag{by baseline conditional exchangeability}\\
&= \sum_{l}\frac{\Pr(Y_K(a)=1|L_0=l)\Pr(A_0=a|L_0=l)\Pr(L_0=l)}{\Pr(A_0=a|L_0=l)\Pr(A_0=a)}\\
&= \frac{1}{\Pr(A_0=a)}\sum_{l}\Pr(Y_K(a)=1,L_0=l)\\
&= \frac{\Pr(Y_K(a)=1)}{\Pr(A_0=a)},
\end{align*}
so that
\begin{align*}
\frac{\Exp\big[I(A_0=1)W|Y_K=1\big]}{\Exp\big[I(A_0=0)W|Y_K=1\big]}
&= \frac{\Pr(Y_K(1)=1)}{\Pr(Y_K(0)=1)}\\
&= \frac{\sum_{k=0}^{K-1}\Pr(Y_{k+1}(1)=1,Y_k(1)=0)}{\sum_{k=0}^{K-1}\Pr(Y_{k+1}(0)=1,Y_k(0)=0)} \\
&= \frac{\sum_{k=0}^{K-1}\Pr(Y_{k+1}(1)=1|Y_k(1)=0)\Pr(Y_k(1)=0)}{\sum_{k=0}^{K-1}\Pr(Y_{k+1}(0)=1|Y_k(0)=0)\Pr(Y_k(0)=0)} \\
&= \frac{\sum_{k=0}^{K-1}\theta_1\Pr(Y_k(1)=0)}{\sum_{k=0}^{K-1}\theta_0\Pr(Y_k(0)=0)} \tag{by \HomogeneousRates{}}\\
&= \frac{\theta_1}{\theta_0}\frac{\sum_{k=0}^{K-1}\Pr(Y_k(1)=0)}{\sum_{k=0}^{K-1}\Pr(Y_k(0)=0)}
\end{align*}

For the denominator, we have
\begin{align*}
\frac{\Exp\big[I(A_0=1)W\sum_{k=0}^{K-1}S_k\big]}{\Exp\big[I(A_0=0)W\sum_{k=0}^{K-1}S_k\big]}&= \frac{\Exp\big[W\sum_{k=0}^{K-1}S_k|A_0=1\big]\Pr(A_0=1)}{\Exp\big[W\sum_{k=0}^{K-1}S_k|A_0=0\big]\Pr(A_0=0)},
\end{align*}
where \begin{align*}
\textstyle\Exp\big[W\sum_{k=0}^{K-1}S_k|A_0=a\big] &= \textstyle\sum_{k=0}^{K-1}\Exp\big\{\Exp\big[WS_k|L_0,A_0=a\big]|A_0=a\big\}\\ &=\sum_{k=0}^{K-1}\sum_l\frac{\Pr(S_k=1|L_0,A_0=a)\Pr(L_0=l|A_0=a)}{\Pr(A_0=a|L_0=l)}\\
&=\sum_{k=0}^{K-1}\sum_l\frac{\delta\Pr(Y_k=0|L_0=l,A_0=a)\Pr(L_0=l|A_0=a)}{\Pr(A_0=a|L_0=l)} \tag{by \RiskSetSamplingNoMatching{}}\\
&=\sum_{k=0}^{K-1}\sum_l\frac{\delta\Pr(Y_k=0|L_0=l,A_0=a)\Pr(L_0=l)}{\Pr(A_0=a)}\\
&=\sum_{k=0}^{K-1}\sum_l\frac{\delta\Pr(Y_k(a)=0|L_0=l,A_0=a)\Pr(L_0=l)}{\Pr(A_0=a)} \tag{by consistency}\\
&=\sum_{k=0}^{K-1}\sum_l\frac{\delta\Pr(Y_k(a)=0|L_0=l)\Pr(L_0=l)}{\Pr(A_0=a)} \tag{by baseline conditional exchangeability}\\
&=\frac{1}{\Pr(A_0=a)}\sum_{k=0}^{K-1}\sum_l\delta\Pr(Y_k(a)=0,L_0=l)\\
&=\frac{1}{\Pr(A_0=a)}\sum_{k=0}^{K-1}\delta\Pr(Y_k(a)=0),
\end{align*}
so that
\begin{align*}
\frac{\Exp\big[I(A_0=1)W\sum_{k=0}^{K-1}S_k\big]}{\Exp\big[I(A_0=0)W\sum_{k=0}^{K-1}S_k\big]}&= \frac{\sum_{k=0}^{K-1}\delta\Pr(Y_k(1)=0)}{\sum_{k=0}^{K-1}\delta\Pr(Y_k(0)=0)}\\
&= \frac{\sum_{k=0}^{K-1}\Pr(Y_k(1)=0)}{\sum_{k=0}^{K-1}\Pr(Y_k(0)=0)}. 
\end{align*}

It follows that \begin{align*}
\frac{\displaystyle\frac{\Exp\big[I(A_0=1)W|Y_K=1\big]}{\Exp\big[I(A_0=0)W|Y_K=1\big]}}{\displaystyle
\frac{\Exp\big[I(A_0=1)W\sum_{k=0}^{K-1}S_k\big]}{\Exp\big[I(A_0=0)W\sum_{k=0}^{K-1}S_k\big]}
} &= \frac{\Pr(Y_{k+1}(1)=1|Y_{k}(1)=0)}{\Pr(Y_{k+1}(0)=1|Y_{k}(0)=0)}.
\end{align*}
\end{proof}

\begin{remark}[Remark to Theorem~\ref{th:DensitySamplingMarginalITT}]
Condition \RiskSetSamplingNoMatching{} holds if, for some constant $\delta^\ast_k$,
\begin{gather}
\left.\begin{array}{c}
\Pr(S_k=1)=\delta^\ast_k\Pr(Y_{k+1}=1,Y_{k}=0),\\
S_k\CI(L_0,A_0,\overline{Y}_{k})|Y_k=0,\\
\Pr(S_k=1|Y_k=1)=0.
\end{array}\right\}\tag{\RiskSetSamplingNoMatching$^\ast$}
\end{gather}
The first requirement of \RiskSetSamplingNoMatching$^\ast$ essentially means that the frequency of incident cases in the $k$th window is proportional to the frequency of controls selected in this window. Under \RiskSetSamplingNoMatching$^\ast$, \RiskSetSamplingNoMatching{} is met with $\delta=\delta^\ast_k\Pr(Y_{k+1}=1|Y_{k}=0)$, because
\begin{align*}
\Pr(S_k=1|L_0,A_0,\overline{Y}_{k}) &=
\Pr(S_k=1|Y_k) \\
&=
\Pr(S_k=1|Y_k=0)\times (1-Y_k) \\
&= \Pr(S_k=1|Y_k=0)\times (1-Y_k) \\
&= \frac{\Pr(S_k=1)}{\Pr(Y_k=0)}\times (1-Y_k)\\
&= \frac{\delta^\ast_k\Pr(Y_{k+1}=1,Y_{k}=0)}{\Pr(Y_k=0)}\times (1-Y_k)\\
&= \delta^\ast_k\Pr(Y_{k+1}=1|Y_{k}=0)\times (1-Y_k).
\end{align*}
Therefore, if $\delta^\ast_k$ is $k$-invariant is to state that $\Pr(Y_{k+1}=1|Y_{k}=0)$ is constant for $k=0,...,K-1$.
\end{remark}

\begin{theorem}[\titleDensitySamplingConditionalITT]\label{th:DensitySamplingConditionalITT}
Suppose BCE holds as well as \RiskSetSamplingNoMatching{}, or the weaker version $\Pr(S_k=1|L_0,A_0,Y_{k}) = \Pr(S_k=1|L_0,Y_{k}) = 
\delta_{L_0}\times(1-Y_k),~\delta_{L_0}\in(0,1]$. If  \begin{align*}
\Pr(Y_{k+1}(a)=1|L_0=l,Y_k(a)=0)=\theta_{a} \tag{\HomogeneousConditionalRates{}}
\end{align*}
for $a=0,1$, all $l$ and some constants $\theta_0,\theta_1$, then
\begin{align*}
\frac{\displaystyle\frac{\Exp\big[I(A_0=1)|L_0,Y_K=1\big]}{\Exp\big[I(A_0=0)|L_0,Y_K=1\big]}}{\displaystyle
\frac{\Exp\big[I(A_0=1)\sum_{k=0}^{K-1}S_k|L_0\big]}{\Exp\big[I(A_0=0)\sum_{k=0}^{K-1}S_k|L_0\big]}
} &= \frac{\Pr(Y_{k+1}(1)=1|L_0,Y_{k}(1)=0)}{\Pr(Y_{k+1}(0)=1|L_0,Y_{k}(0)=0)}.
\end{align*}
\end{theorem}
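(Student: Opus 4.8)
The plan is to follow the template of Theorem~\ref{th:BaseSamplingConditionalITT} for the numerator and of Theorem~\ref{th:DensitySamplingMarginalITT} for the denominator, carrying the conditioning on $L_0$ through every step so that no inverse-probability weights are needed here.

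First I would rewrite the numerator. Since $\Exp[I(A_0=a)\mid L_0,Y_K=1]=\Exp[I(A_0=a)Y_K\mid L_0]/\Pr(Y_K=1\mid L_0)$, the ratio equals $\Exp[I(A_0=1)Y_K\mid L_0]/\Exp[I(A_0=0)Y_K\mid L_0]=\{\Pr(Y_K=1\mid L_0,A_0=1)/\Pr(Y_K=1\mid L_0,A_0=0)\}\,\mathrm{Odds}(A_0=1\mid L_0)$. By consistency and then BCE this becomes $\{\Pr(Y_K(1)=1\mid L_0)/\Pr(Y_K(0)=1\mid L_0)\}\,\mathrm{Odds}(A_0=1\mid L_0)$. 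Next I would decompose $\Pr(Y_K(a)=1\mid L_0)=\sum_{k=0}^{K-1}\Pr(Y_{k+1}(a)=1,Y_k(a)=0\mid L_0)=\sum_{k=0}^{K-1}\Pr(Y_{k+1}(a)=1\mid L_0,Y_k(a)=0)\Pr(Y_k(a)=0\mid L_0)$, and invoke \HomogeneousConditionalRates{} to replace the conditional hazard by the constant $\theta_a$, giving $\Pr(Y_K(a)=1\mid L_0)=\theta_a\sum_{k=0}^{K-1}\Pr(Y_k(a)=0\mid L_0)$. Hence the numerator is $(\theta_1/\theta_0)\cdot\{\sum_k\Pr(Y_k(1)=0\mid L_0)/\sum_k\Pr(Y_k(0)=0\mid L_0)\}\cdot\mathrm{Odds}(A_0=1\mid L_0)$.

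For the denominator, I would write $\Exp[I(A_0=a)\sum_{k=0}^{K-1}S_k\mid L_0]=\sum_{k=0}^{K-1}\Exp[S_k\mid L_0,A_0=a]\Pr(A_0=a\mid L_0)$, and use the (weak) version of \RiskSetSamplingNoMatching{} to get $\Exp[S_k\mid L_0,A_0=a]=\delta_{L_0}\Pr(Y_k=0\mid L_0,A_0=a)$. Consistency turns $\Pr(Y_k=0\mid L_0,A_0=a)$ into $\Pr(Y_k(a)=0\mid L_0,A_0=a)$, and BCE removes the conditioning on $A_0$, so the denominator equals $\{\sum_k\Pr(Y_k(1)=0\mid L_0)/\sum_k\Pr(Y_k(0)=0\mid L_0)\}\cdot\mathrm{Odds}(A_0=1\mid L_0)$, the factor $\delta_{L_0}$ cancelling.

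Finally I would take the ratio: the survival-probability sums and the $\mathrm{Odds}(A_0=1\mid L_0)$ factor are identical in numerator and denominator and cancel, leaving $\theta_1/\theta_0$, which by \HomogeneousConditionalRates{} is exactly $\Pr(Y_{k+1}(1)=1\mid L_0,Y_k(1)=0)/\Pr(Y_{k+1}(0)=1\mid L_0,Y_k(0)=0)$. I expect the only mild obstacle to be bookkeeping: making sure the partition $\{Y_{k+1}(a)=1,Y_k(a)=0\}$, $k=0,\dots,K-1$, of $\{Y_K(a)=1\}$ is valid (it uses $Y_0(a)=0$) and that \HomogeneousConditionalRates{} is applied uniformly in both $k$ and $l$; everything else is the same cancellation already exploited in Theorems~\ref{th:BaseSamplingConditionalITT} and~\ref{th:DensitySamplingMarginalITT}.
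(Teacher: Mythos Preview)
Your proposal is correct and is precisely the ``similar to Theorem~\ref{th:DensitySamplingMarginalITT}'' argument the paper alludes to (the paper omits the proof entirely): you condition on $L_0$ throughout, which eliminates the IP weights, and then reproduce the same hazard decomposition and cancellation of the survival-probability sums and the exposure odds. Your bookkeeping remarks about the partition of $\{Y_K(a)=1\}$ via $Y_0(a)=0$ and the uniform use of \HomogeneousConditionalRates{} in $k$ and $l$ are apt and complete the argument.
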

The proof to Theorem~\ref{th:DensitySamplingConditionalITT} is similar to that of Theorem~\ref{th:DensitySamplingMarginalITT} and therefore omitted.

\subsection*{Per-protocol effect}
In this subsection, an individual qualifies as a case if and only if $Y_K=1$ and the subject adheres to the protocol that was assigned at baseline. For any study participant, let $S_k$ denote selection as a control for the period $[t_k,t_{k+1})$ and suppose $S_k$ satisfies
\begin{align*}
\left.\begin{array}{c}
S_k=1\Rightarrow Y_k=0\text{~~with probability 1, and}\\
\Pr(S_k=1|\overline{L}_k,\overline{A}_k,Y_k=0)
= \Pr(S_k=1|\overline{A}_{k-1},Y_k=0)\text{~~and}\\
\Pr(S_k=1|\overline{A}_{k-1},A_0=...=A_{k-1},Y_k=0) = \delta,\end{array}\right\} \tag{\PPRiskSetSamplingNoMatching{}}
\end{align*}
for some $\delta\in(0,1]$.

\begin{remark}[Remark to Theorem~\ref{th:DensitySamplingPP}]
Condition \PPRiskSetSamplingNoMatching{} holds if, for some constant $\delta^\ast_k$,
\begin{gather}
\left.\begin{array}{c}
\Pr(S_k=1)=\delta^\ast_k\Pr(Y_{k+1}=1,Y_{k}=0,\forall j<k:A_j= A_0)\text{~~and}\\
S_k\CI(\overline{L}_k,\overline{A}_k,\overline{Y}_k)|(Y_k=0,\forall j<k:A_j=A_0)\text{~~and}\\
S_k=1\Rightarrow (Y_k=0,\forall j<k:A_j=A_0) \text{~~with probability 1}.
\end{array}\right\}\tag{\PPRiskSetSamplingNoMatching$^\ast$}
\end{gather}
The first requirement of \PPRiskSetSamplingNoMatching$^\ast$ essentially means that the frequency of protocol-adherent incident cases in the $k$th window is proportional to the frequency of controls selected in this window. Under \PPRiskSetSamplingNoMatching$^\ast$, \PPRiskSetSamplingNoMatching{} is met with $\delta=\delta^\ast_k\Pr(Y_{k+1}=1|Y_{k}=0,\forall j<k:A_j= A_0)$, because
\begin{align*}
&\Pr(S_k=1|\overline{L}_k,\overline{A}_k,\overline{Y}_k) \\
&\qquad=\Pr(S_k=1|Y_k=0,\forall j<k:A_j= A_0)\times (1-Y_k)\times I(\forall j<k:A_j= A_0) \\
&\qquad= \frac{\Pr(S_k=1)}{\Pr(Y_k=0,\forall j<k:A_j= A_0)}\times (1-Y_k)\times I(\forall j<k:A_j= A_0) \\
&\qquad= \frac{\delta^\ast_k\Pr(Y_{k+1}=1,Y_{k}=0,\forall j<k:A_j= A_0)}{\Pr(Y_k=0,\forall j<k:A_j= A_0)}\times (1-Y_k)\times I(\forall j<k:A_j= A_0) \\
&\qquad= \delta^\ast_k\Pr(Y_{k+1}=1|Y_{k}=0,\forall j<k:A_j= A_0)\times (1-Y_k)\times I(\forall j<k:A_j= A_0).
\end{align*}

Similarly, condition \PPRiskSetSamplingNoMatching{} holds if, for some constant $\delta^{\ast\ast}_k$,
\begin{gather}
\left.\begin{array}{c}
\Pr(S_k=1)=\delta^{\ast\ast}_k\Pr(Y_{k+1}=1,Y_{k}=0)\text{~~and}\\
S_k\CI(\overline{L}_k,\overline{A}_k,\overline{Y}_k)|(Y_k=0)\text{~~and}\\
S_k=1\Rightarrow Y_k=0 \text{~~with probability 1},
\end{array}\right\}\tag{\PPRiskSetSamplingNoMatching$^{\ast\ast}$}
\end{gather}
in which case, $\delta=\delta^{\ast\ast}_k\Pr(Y_{k+1}=1|Y_{k}=0)$, because
\begin{align*}
&\Pr(S_k=1|\overline{L}_k,\overline{A}_k,\overline{Y}_k) \\
&\qquad=\Pr(S_k=1|Y_k=0)\times (1-Y_k) \\
&\qquad= \frac{\Pr(S_k=1)}{\Pr(Y_k=0)}\times (1-Y_k)\\
&\qquad= \frac{\delta^{\ast\ast}_k\Pr(Y_{k+1}=1,Y_{k}=0)}{\Pr(Y_k=0)}\times (1-Y_k)\\
&\qquad= \delta^{\ast\ast}_k\Pr(Y_{k+1}=1|Y_{k}=0)\times (1-Y_k).
\end{align*}
\end{remark}

\begin{theorem}[\titleDensitySamplingPP]\label{th:DensitySamplingPP}
Suppose SCE and \PPRiskSetSamplingNoMatching{} hold. 
If \begin{align*}
\Pr(Y_{k+1}(\overline{a})=1|Y_k(\overline{a})=0)=\theta_a \tag{\PPHomogeneousRates{}}
\end{align*}
for $a=0,1$ and some constants $\theta_0,\theta_1$,
then
\begin{align*}
&\frac{\displaystyle\frac{\Exp\big[\sum_{k=0}^{K-1}I(A_k=1)W_kI(Y_{k+1}=1,Y_k=0)|Y_K=1,(\forall j:Y_j=0\Rightarrow A_j=A_0)\big]}{\Exp\big[\sum_{k=0}^{K-1}I(A_k=0)W_kI(Y_{k+1}=1,Y_k=0)|Y_K=1,(\forall j:Y_j=0\Rightarrow A_j=A_0)\big]}}{\displaystyle \frac{\Exp\big[I(A_0=1)\sum_{k=0}^{K-1}W_kS_k|\forall j:Y_j=0\Rightarrow A_j=A_0\big]}{\Exp\big[I(A_0=0)\sum_{k=0}^{K-1}W_kS_k|\forall j:Y_j=0\Rightarrow A_j=A_0\big]}} 
\\
&\qquad= \frac{\Pr(Y_{k+1}(\overline{1})=1|Y_{k}(\overline{1})=0)}{\Pr(Y_{k+1}(\overline{0})=1|Y_{k}(\overline{0})=0)},
\end{align*}
where 
\begin{align*}
W_k=
\prod_{j=0}^k\frac{1}{\Pr(A_j=a_j|\overline{L}_j,\overline{A}_{j-1},Y_j=0,S_j=1)}\bigg|_{a_j=A_j}.
\end{align*}
\end{theorem}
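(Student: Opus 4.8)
The plan is to follow the three-step template of the proof of Theorem~\ref{th:DensitySamplingMarginalITT} --- simplify the weights, evaluate the numerator, evaluate the denominator --- and then collapse everything with \PPHomogeneousRates{}. \emph{Step 1 (weights).} The middle line of \PPRiskSetSamplingNoMatching{} states $\Pr(S_j=1\mid\overline{L}_j,\overline{A}_j,Y_j=0)=\Pr(S_j=1\mid\overline{A}_{j-1},Y_j=0)$, so $S_j$ is, given $(\overline{A}_{j-1},Y_j=0)$, conditionally independent of $(L_j,A_j)$; a one-line Bayes' rule argument (as at the start of the proof of Theorem~\ref{th:DensitySamplingMarginalITT}) then removes every ``$S_j=1$'' from the conditioning sets, giving $W_k=\prod_{j=0}^k\{\Pr(A_j=a_j\mid\overline{L}_j,\overline{A}_{j-1},Y_j=0)\}^{-1}\big|_{a_j=A_j}$, the usual inverse-probability-of-treatment weight \citep{Robins1999}. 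On the sub-event $\{A_0=\dots=A_k,\ Y_k=0\}$ --- a subject still at risk who has adhered to the baseline protocol through window~$k$ --- this equals $\prod_{j=0}^k\{\Pr(A_j=A_0\mid\overline{L}_j,\overline{A}_{j-1},Y_j=0)\}^{-1}$, which simultaneously inverts the probability of not switching treatment and (for a static regime the same thing) the probability of not being artificially censored.

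\emph{Step 2 (numerator).} Condition on $\{Y_K=1\}\cap\{\forall j:Y_j=0\Rightarrow A_j=A_0\}$. Monotonicity of $k\mapsto Y_k$ yields a unique window $m$ with $Y_m=0,\ Y_{m+1}=1$, so the inner sum collapses to its $k=m$ term; adherence forces $A_0=\dots=A_m$, hence $I(A_m=a)=I(A_0=a)$, and since $Y_j=1$ for $j>m$ the global adherence event reduces here to $\{A_0=\dots=A_m\}$. Writing the conditional expectation over $\Pr(Y_K=1,\ \forall j:Y_j=0\Rightarrow A_j=A_0)$ (a factor common to numerator and denominator, which cancels in the ratio), it remains to show $\Exp[I(A_0=\dots=A_m=a)\,W_m\,I(Y_{m+1}=1,Y_m=0)]=\Pr(Y_{m+1}(\overline{a})=1,Y_m(\overline{a})=0)$. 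This is the per-protocol analogue of the central computation in Theorem~\ref{th:DensitySamplingMarginalITT}: peel the factors of $W_m$ off one window at a time, each cancellation producing a covariate/survival increment, replacing an observed conditional law by a counterfactual one by consistency and then SCE. Summing over $m$ telescopes $\sum_m\Pr(Y_{m+1}(\overline{a})=1,Y_m(\overline{a})=0)$ to $\Pr(Y_K(\overline{a})=1)$, so the numerator of the left-hand side equals $\Pr(Y_K(\overline{1})=1)/\Pr(Y_K(\overline{0})=1)$.

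\emph{Step 3 (denominator) and conclusion.} Integrate out $S_k$: the first and last lines of \PPRiskSetSamplingNoMatching{} give $\Pr(S_k=1\mid\overline{L}_k,\overline{A}_k,\overline{Y}_k)=\delta$ on ``$Y_k=0$ and adhered through window $k-1$'' and $0$ otherwise, and under the conditioning event $Y_k=0$ already entails adherence through window~$k$; hence the $k$th term becomes $\delta\,\Exp[I(A_0=a)\,W_k\,I(Y_k=0)\,I(\forall j:Y_j=0\Rightarrow A_j=A_0)]$. On $\{Y_k=0\}$ the conditioning event again forces $A_0=\dots=A_k$, so the same weight-peeling identity identifies this expectation (up to the factor $\delta$) with $\Pr(Y_k(\overline{a})=0)$; summing over $k$, the denominator of the left-hand side is $\sum_k\Pr(Y_k(\overline{1})=0)\big/\sum_k\Pr(Y_k(\overline{0})=0)$, the ratio of total expected at-risk person-time across the two arms of the target trial. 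Finally, \PPHomogeneousRates{} gives $\Pr(Y_K(\overline{a})=1)=\sum_k\Pr(Y_{k+1}(\overline{a})=1,Y_k(\overline{a})=0)=\theta_a\sum_k\Pr(Y_k(\overline{a})=0)$, and dividing the numerator by the denominator leaves $\theta_1/\theta_0=\Pr(Y_{k+1}(\overline{1})=1\mid Y_k(\overline{1})=0)/\Pr(Y_{k+1}(\overline{0})=1\mid Y_k(\overline{0})=0)$, as required.

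The hard part will be Step~3. Two points need care: that the \emph{global} adherence conditioning interacts correctly with the \emph{window-by-window} weighted selection count $\sum_k W_kS_k$, so that the $k$th term is not distorted away from $\delta\Pr(Y_k(\overline{a})=0)$ --- here it is essential that $W_k$ already carries the artificial-censoring probabilities, and that $S_k$ is measurable with respect to the pre-$t_k$ history together with external randomization; and the weight-peeling identity $\Exp[I(\overline{A}_k=\overline{a})\,W_k\,I(Y_k=0)\,g(\overline{L}_k)]=\Exp[g(\overline{L}_k(\overline{a}))\,I(Y_k(\overline{a})=0)]$ (together with its variant with $I(Y_{k+1}=1,Y_k=0)$ replacing $I(Y_k=0)$), which powers Steps~2 and~3 and is most cleanly obtained by backward induction on the window index under SCE, consistency and positivity. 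The remaining manipulations are routine algebra with conditional expectations.
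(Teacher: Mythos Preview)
Your proposal is correct and follows essentially the same three-part architecture as the paper's proof: (i) use \PPRiskSetSamplingNoMatching{} and Bayes' rule to strip the ``$S_j=1$'' from the weight denominators on the adherence path; (ii) unwind the numerator via a sequential g-formula (``weight-peeling'') argument under consistency and SCE to obtain $\sum_k\Pr(Y_{k+1}(\overline{a})=1,Y_k(\overline{a})=0)$; (iii) do the same for the denominator after integrating out $S_k$ (which contributes the common factor $\delta$ on the adherent-at-risk event) to obtain $\sum_k\Pr(Y_k(\overline{a})=0)$; then invoke \PPHomogeneousRates{}.

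The only noteworthy difference is cosmetic: the paper applies \PPHomogeneousRates{} \emph{separately} to the numerator and denominator, reducing each to a closed-form geometric sum ($[1-(1-\theta_a)^K]/[1-(1-\theta_{a'})^K]$ and $\theta_{a'}[1-(1-\theta_a)^K]/\{\theta_a[1-(1-\theta_{a'})^K]\}$ respectively) before dividing, whereas you leave the numerator as $\Pr(Y_K(\overline{a})=1)$ and the denominator as $\sum_k\Pr(Y_k(\overline{a})=0)$ and apply \PPHomogeneousRates{} once, via the identity $\Pr(Y_K(\overline{a})=1)=\theta_a\sum_k\Pr(Y_k(\overline{a})=0)$. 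Your route is slightly cleaner and avoids the geometric-series algebra, but the substance is identical. Your closing caveat about the interaction between the \emph{global} adherence conditioning and the \emph{window-local} quantity $W_kS_k$ is well placed; the paper handles this step tersely (its ``by \PPRiskSetSamplingNoMatching{}'' at the transition from full adherence to $\{Y_k=0,\forall j\le k:A_j=A_0\}$), and a careful write-up should make explicit that $W_kS_kI(A_0=\dots=A_k)I(Y_k=0)$ depends only on history through $t_k$ so that the future-adherence part of the conditioning event can be absorbed into a common normalizing constant that cancels in the ratio.
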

\begin{proof}
First, observe that $\Pr(A_k=a'|\overline{L}_k,(\forall j<k:A_{j}=a),Y_k=0,S_k=1)=\Pr(A_k=a'|\overline{L}_k,(\forall j<k:A_{j}=a),Y_k=0)$ for $a',a=0,1$, because
\begin{align*}
&\Pr(A_k=a'|\overline{L}_k,(\forall j<k:A_{j}=a),Y_k=0,S_k=1)\\
&\qquad=\frac{\Pr(S_k=1|\overline{L}_k,(\forall j<k:A_{j}=a),A_k=a',Y_k=0)\Pr(A_k=a'|\overline{L}_k,(\forall j<k:A_{j}=a),Y_k=0)}{\Pr(S_k=1|\overline{L}_k,(\forall j<k:A_{j}=a),Y_k=0)} \\
&\qquad=\frac{\delta}{\delta}\Pr(A_k=a'|\overline{L}_k,(\forall j<k:A_{j}=a),Y_k=0). \tag{by \PPRiskSetSamplingNoMatching{}}
\end{align*}
Hence, if $\forall j<k:A_j=A_0$, then 
\begin{align*}
W_k=
\prod_{j=0}^k\frac{1}{\Pr(A_j=a_j|\overline{L}_j,\overline{A}_{j-1},Y_j=0)}\bigg|_{a_j=A_j}.
\end{align*}

For the numerator of the main result of Theorem~\ref{th:DensitySamplingPP}, we thus have
\begin{align*}
&
{\displaystyle\frac{\Exp\big[\sum_{k=0}^{K-1}I(A_k=1)W_kI(Y_{k+1}=1,Y_k=0)|Y_K=1,(\forall j:Y_j=0\Rightarrow A_j=A_0)\big]}{\Exp\big[\sum_{k=0}^{K-1}I(A_k=0)W_kI(Y_{k+1}=1,Y_k=0)|Y_K=1,(\forall j:Y_j=0\Rightarrow A_j=A_0)\big]}}\\
&\qquad\frac{\Exp\big[\sum_{k=0}^{K-1}I(A_k=a)W_kI(Y_{k+1}=1,Y_k=0,\forall j\le k:A_j=A_0)\big]}{\Exp\big[\sum_{k=0}^{K-1}I(A_k=a')W_kI(Y_{k+1}=1,Y_k=0,\forall j\le k:A_j=A_0)\big]}\\
&\qquad=\frac{\sum_{k=0}^{K-1}\Exp\big[W_kY_{k+1}(1-Y_k)I(\forall j\le k:A_j=a)\big]}{\sum_{k=0}^{K-1}\Exp\big[W_kY_{k+1}(1-Y_k)I(\forall j\le k:A_j=a')\big]}\\
&\qquad=\frac{\displaystyle\sum_{k=0}^{K-1}\sum_{\overline{l}_k}\frac{\Pr(Y_{k+1}=1,Y_k=0,\forall j\le k: A_j=a,\overline{L}_k=\overline{l}_k)}{\prod_{j=0}^k\Pr(A_j=a|Y_j=0,\overline{L}_k=\overline{l}_k,\forall i<j: A_i=a)}}
{\displaystyle\sum_{k=0}^{K-1}\sum_{\overline{l}_k}\frac{\Pr(Y_{k+1}=1,Y_k=0,\forall j\le k: A_j=a',\overline{L}_k=\overline{l}_k)}{\prod_{j=0}^k\Pr(A_j=a'|Y_j=0,\overline{L}_k=\overline{l}_k,\forall i<j: A_i=a')}},
\end{align*}
where
\begin{align*}
&\sum_{\overline{l}_k}\frac{\Pr(Y_{k+1}=1,Y_k=0,\forall j\le k: A_j=a,\overline{L}_k=\overline{l}_k)}{\prod_{j=0}^k\Pr(A_j=a|Y_j=0,\overline{L}_k=\overline{l}_k,\forall i<j: A_i=a)}\\
&\qquad=\sum_{\overline{l}_k}\Pr(Y_{k+1}=1|Y_k=0,\overline{L}_k=\overline{l}_k,\forall j\le k: A_j=a)\\
&\qquad\qquad\times \Pr(L_k=l_k|Y_k=0,\overline{L}_{k-1}=\overline{l}_{k-1},\forall j<k: A_j=a)\\
&\qquad\qquad\times\prod_{j=0}^{k-1}\Pr(Y_{j+1}=1|Y_j=0,\overline{L}_j=\overline{l}_j,\forall i\le j: A_i=a)\\
&\qquad\qquad\times \Pr(L_j=l_j|Y_j=0,\overline{L}_{j-1}=\overline{l}_{j-1},\forall i<j: A_i=a)\\
&\qquad=\sum_{\overline{l}_k}\Pr(Y_{k+1}(\overline{a})=1|Y_k(\overline{a})=0,\overline{L}_k=\overline{l}_k,\forall j\le k: A_j=a)\\
&\qquad\qquad\times \Pr(L_k=l_k|Y_k(\overline{a})=0,\overline{L}_{k-1}=\overline{l}_{k-1},\forall j<k: A_j=a)\\
&\qquad\qquad\times\prod_{j=0}^{k-1}\Pr(Y_{j+1}(\overline{a})=1|Y_j(\overline{a})=0,\overline{L}_j=\overline{l}_j,\forall i\le j: A_i=a)\\
&\qquad\qquad\times \Pr(L_j=l_j|Y_j(\overline{a})=0,\overline{L}_{j-1}=\overline{l}_{j-1},\forall i<j: A_i=a) \tag{by consistency}\\
&\qquad=\sum_{\overline{l}_k}\Pr(Y_{k+1}(\overline{a})=1|Y_k(\overline{a})=0,\overline{L}_k=\overline{l}_k,\forall j< k: A_j=a)\\
&\qquad\qquad\times \Pr(L_k=l_k|Y_k(\overline{a})=0,\overline{L}_{k-1}=\overline{l}_{k-1},\forall j<k: A_j=a)\\
&\qquad\qquad\times\prod_{j=0}^{k-1}\Pr(Y_{j+1}(\overline{a})=1|Y_j(\overline{a})=0,\overline{L}_j=\overline{l}_j,\forall i< j: A_i=a)\\
&\qquad\qquad\times \Pr(L_j=l_j|Y_j(\overline{a})=0,\overline{L}_{j-1}=\overline{l}_{j-1},\forall i<j: A_i=a) \tag{by sequential conditional exchangeability}\\
&\qquad=\sum_{\overline{l}_{k-1}}\Pr(Y_{k+1}(\overline{a})=1|Y_{k}(\overline{a})=0,\overline{L}_{k-1}=\overline{l}_{k-1},\forall j< k: A_j=a)\\
&\qquad\qquad\times\prod_{j=0}^{k-1}\Pr(Y_{j+1}(\overline{a})=1|Y_j(\overline{a})=0,\overline{L}_j=\overline{l}_j,\forall i< j: A_i=a)\\
&\qquad\qquad\times \Pr(L_j=l_j|Y_j(\overline{a})=0,\overline{L}_{j-1}=\overline{l}_{j-1},\forall i<j: A_i=a)\\
&\qquad=\sum_{\overline{l}_{k-1}}\Pr(Y_{k+1}(\overline{a})=1,Y_k(\overline{a})=0|Y_{k-1}(\overline{a})=0,\overline{L}_{k-1}=\overline{l}_{k-1},\forall j< k: A_j=a)\\
&\qquad\qquad\times\Pr(L_{k-1}=l_{k-1}|Y_{k-1}(\overline{a})=0,\overline{L}_{k-2}=\overline{l}_{k-2},\forall j<k-1: A_j=a)\\
&\qquad\qquad\times\prod_{j=0}^{k-2}\Pr(Y_{j+1}(\overline{a})=1|Y_j(\overline{a})=0,\overline{L}_j=\overline{l}_j,\forall i< j: A_i=a)\\
&\qquad\qquad\times \Pr(L_j=l_j|Y_j(\overline{a})=0,\overline{L}_{j-1}=\overline{l}_{j-1},\forall i<j: A_i=a)\\
&\qquad~~\vdots\tag{by repeating previous three steps, under sequential conditional exchangeability}\\
&\qquad=\Pr(Y_{k+1}(\overline{a})=1,Y_k(\overline{a})=0)
\end{align*}
and, similarly,
\begin{align*}
\sum_{\overline{l}_k}\frac{\Pr(Y_{k+1}=1,Y_k=0,\forall j\le k: A_j=a',\overline{L}_k=\overline{l}_k)}{\prod_{j=0}^k\Pr(A_j=a'|Y_j=0,\overline{L}_k=\overline{l}_k,\forall i<j: A_i=a')}=\Pr(Y_{k+1}(\overline{a}')=1,Y_k(\overline{a}')=0).
\end{align*}
Hence,
\begin{align*}
&\frac{\Exp\big[\sum_{k=0}^{K-1}I(A_k=a)W_kI(Y_{k+1}=1,Y_k=0,\forall j\le k:A_j=A_0)\big]}{\Exp\big[\sum_{k=0}^{K-1}I(A_k=a')W_kI(Y_{k+1}=1,Y_k=0,\forall j\le k:A_j=A_0)\big]}\\
&\qquad=\frac{\sum_{k=0}^{K-1}\Pr(Y_{k+1}(\overline{a})=1,Y_k(\overline{a})=0)}
{\sum_{k=0}^{K-1}\Pr(Y_{k+1}(\overline{a}')=1,Y_k(\overline{a}')=0)}\\
&\qquad=\frac{\sum_{k=0}^{K-1}\Pr(Y_{k+1}(\overline{a})=1|Y_k(\overline{a})=0)\prod_{j=1}^k\Pr(Y_{j}(\overline{a})=0|Y_{j-1}(\overline{a})=0)}
{\sum_{k=0}^{K-1}\Pr(Y_{k+1}(\overline{a}')=1|Y_k(\overline{a}')=0)\prod_{j=1}^k\Pr(Y_{j}(\overline{a}')=0|Y_{j-1}(\overline{a}')=0)}\\
&\qquad=\frac{\sum_{k=0}^{K-1}\theta_a(1-\theta_a)^k}
{\sum_{k=0}^{K-1}\theta_{a'}(1-\theta_{a'})^k} \tag{\PPHomogeneousRates{}}\\
&\qquad=\frac{1-(1-\theta_a)^{K}}{1-(1-\theta_{a'})^{K}} \tag{since $(1-r)\sum_{k=l}^uar^k=a(r^l-r^{u+1})$ for any real $a,r$}
\end{align*}

For the denominator, we have
\begin{align*}
&\frac{\Exp\big[I(A_0=a)\sum_{k=0}^{K-1}W_kS_k|\forall j:Y_j=0\Rightarrow A_j=A_0\big]}{\Exp\big[I(A_0=a')\sum_{k=0}^{K-1}W_kS_k|\forall j:Y_j=0\Rightarrow A_j=A_0\big]}\\
&\qquad=\frac{\Exp\big[\sum_{k=0}^{K-1}I(A_k=a)W_kS_k|\forall j:Y_j=0\Rightarrow A_j=A_0\big]}{\Exp\big[\sum_{k=0}^{K-1}I(A_k=a')W_kS_k|\forall j:Y_j=0\Rightarrow A_j=A_0\big]}\\
&\qquad=\frac{\sum_{k=0}^{K-1}\Exp\big[I(A_k=a)W_kS_k|\forall j:Y_j=0\Rightarrow A_j=A_0\big]}{\sum_{k=0}^{K-1}\Exp\big[I(A_k=a')W_kS_k|\forall j:Y_j=0\Rightarrow A_j=A_0\big]}\\
&\qquad=\frac{\sum_{k=0}^{K-1}\Exp\big[I(A_k=a)W_kS_k|Y_k=0,\forall j\le k:A_j=A_0\big]\Pr(Y_k=0|\forall j:Y_j=0\Rightarrow A_j=A_0)}{\sum_{k=0}^{K-1}\Exp\big[I(A_k=a')W_kS_k|Y_k=0,\forall j\le k:A_j=A_0\big]\Pr(Y_k=0|\forall j:Y_j=0\Rightarrow A_j=A_0)} \tag{by \PPRiskSetSamplingNoMatching{}}\\
&\qquad=\frac{\sum_{k=0}^{K-1}\Exp\big[I(A_k=a)W_kS_k|Y_k=0,\forall j\le k:A_j=A_0\big]\Pr(Y_k=0,\forall j\le k: A_j=A_0)}{\sum_{k=0}^{K-1}\Exp\big[I(A_k=a')W_kS_k|Y_k=0,\forall j\le k:A_j=A_0\big]\Pr(Y_k=0,\forall j\le k: A_j=A_0)}\\
&\qquad=\frac{\sum_{k=0}^{K-1}\Exp\big[W_kS_k|Y_k=0,\forall j\le k:A_j=a\big]\Pr(Y_k=0,\forall j\le k: A_j=a)}{\sum_{k=0}^{K-1}\Exp\big[W_kS_k|Y_k=0,\forall j\le k:A_j=a'\big]\Pr(Y_k=0,\forall j\le k: A_j=a')}\\
&\qquad=\frac{\displaystyle\sum_{k=0}^{K-1}\sum_{\overline{l}_k}\frac{\Exp\big[S_k|Y_k=0,\overline{L}_k=\overline{l}_k,\forall j\le k:A_j=a\big]\Pr(Y_k=0,\overline{L}_k=\overline{l}_k,\forall j\le k:A_j=a)}{\prod_{j=0}^k\Pr(A_j=a|Y_j=0,\overline{L}_j=\overline{l}_j,\forall i< j:A_i=a)}}{\displaystyle\sum_{k=0}^{K-1}\sum_{\overline{l}_k}\frac{\Exp\big[S_k|Y_k=0,\overline{L}_k=\overline{l}_k,\forall j\le k:A_j=a'\big]\Pr(Y_k=0,\overline{L}_k=\overline{l}_k,\forall j\le k:A_j=a')}{\prod_{j=0}^k\Pr(A_j=a'|Y_j=0,\overline{L}_j=\overline{l}_j,\forall i< j:A_i=a')}}\\
&\qquad=\frac{\displaystyle\sum_{k=0}^{K-1}\sum_{\overline{l}_k}\delta\frac{\Pr(Y_k=0,\overline{L}_k=\overline{l}_k,\forall j\le k:A_j=a)}{\prod_{j=0}^k\Pr(A_j=a|Y_j=0,\overline{L}_j=\overline{l}_j,\forall i< j:A_i=a)}}{\displaystyle\sum_{k=0}^{K-1}\sum_{\overline{l}_k}\delta\frac{\Pr(Y_k=0,\overline{L}_k=\overline{l}_k,\forall j\le k:A_j=a')}{\prod_{j=0}^k\Pr(A_j=a'|Y_j=0,\overline{L}_j=\overline{l}_j,\forall i< j:A_i=a')}}\tag{by \PPRiskSetSamplingNoMatching{}}\\
&\qquad=\frac{\sum_{k=0}^{K-1}\sum_{\overline{l}_k}\delta\prod_{j=0}^k\Pr(Y_j=0,L_j=l_j|Y_{j-1}=0,\overline{L}_{j-1}=\overline{l}_{j-1},\forall i<j:A_i=a)}{\sum_{k=0}^{K-1}\sum_{\overline{l}_k}\delta\prod_{j=0}^k\Pr(Y_j=0,L_j=l_j|Y_{j-1}=0,\overline{L}_{j-1}=\overline{l}_{j-1},\forall i<j:A_i=a')}\\
&\qquad=\frac{\displaystyle\sum_{k=0}^{K-1}\sum_{\overline{l}_k}\delta\prod_{j=0}^k\begin{array}{l}\Pr(L_j=l_j|Y_j=0,\overline{L}_{j-1}=\overline{l}_{j-1},\forall i<j:A_i=a)\times{}\\~~~\Pr(Y_j=0|Y_{j-1}=0,\overline{L}_{j-1}=\overline{l}_{j-1},\forall i<j:A_i=a)\end{array}}{\displaystyle\sum_{k=0}^{K}-1\sum_{\overline{l}_k}\delta\prod_{j=0}^k\begin{array}{l}\Pr(L_j=l_j|Y_j=0,\overline{L}_{j-1}=\overline{l}_{j-1},\forall i<j:A_i=a')\times{}\\~~~\Pr(Y_j=0|Y_{j-1}=0,\overline{L}_{j-1}=\overline{l}_{j-1},\forall i<j:A_i=a')\end{array}}\\
&\qquad=\frac{\displaystyle\sum_{k=0}^{K-1}\sum_{\overline{l}_k}\delta\prod_{j=0}^k\begin{array}{l}\Pr(L_j=l_j|Y_j(\overline{a})=0,\overline{L}_{j-1}=\overline{l}_{j-1},\forall i<j:A_i=a)\times{}\\~~~\Pr(Y_j(\overline{a})=0|Y_{j-1}(\overline{a})=0,\overline{L}_{j-1}=\overline{l}_{j-1},\forall i<j:A_i=a)\end{array}}{\displaystyle\sum_{k=0}^{K-1}\sum_{\overline{l}_k}\delta\prod_{j=0}^k\begin{array}{l}\Pr(L_j=l_j|Y_j(\overline{a}')=0,\overline{L}_{j-1}=\overline{l}_{j-1},\forall i<j:A_i=a')\times{}\\~~~\Pr(Y_j(\overline{a}')=0|Y_{j-1}(\overline{a}')=0,\overline{L}_{j-1}=\overline{l}_{j-1},\forall i<j:A_i=a')\end{array}} \tag{by consistency}\\
&\qquad=\frac{\displaystyle\sum_{k=0}^{K-1}\sum_{\overline{l}_{k-1}}\delta\prod_{j=0}^k\begin{array}{l}\Pr(Y_j(\overline{a})=0|Y_{j-1}(\overline{a})=0,\overline{L}_{j-1}=\overline{l}_{j-1},\forall i<j:A_i=a)\times{}\\~~~\Pr(L_{j-1}=l_{j-1}|Y_{j-1}(\overline{a})=0,\overline{L}_{j-2}=\overline{l}_{j-2},\forall i<j-1:A_i=a)\end{array}}{\displaystyle\sum_{k=0}^{K-1}\sum_{\overline{l}_{k-1}}\delta\prod_{j=0}^k\begin{array}{l}\Pr(Y_j(\overline{a}')=0|Y_{j-1}(\overline{a}')=0,\overline{L}_{j-1}=\overline{l}_{j-1},\forall i<j:A_i=a')\times{}\\~~~\Pr(L_{j-1}=l_{j-1}|Y_{j-1}(\overline{a}')=0,\overline{L}_{j-2}=\overline{l}_{j-2},\forall i<j-1:A_i=a')\end{array}} \\
&\qquad=\frac{\displaystyle\sum_{k=0}^{K-1}\sum_{\overline{l}_{k-1}}\delta\prod_{j=0}^k\begin{array}{l}\Pr(Y_j(\overline{a})=0|Y_{j-1}(\overline{a})=0,\overline{L}_{j-1}=\overline{l}_{j-1},\forall i<j-1:A_i=a)\times{}\\~~~\Pr(L_{j-1}=l_{j-1}|Y_{j-1}(\overline{a})=0,\overline{L}_{j-2}=\overline{l}_{j-2},\forall i<j-1:A_i=a)\end{array}}{\displaystyle\sum_{k=0}^{K-1}\sum_{\overline{l}_{k-1}}\delta\prod_{j=0}^k\begin{array}{l}\Pr(Y_j(\overline{a}')=0|Y_{j-1}(\overline{a}')=0,\overline{L}_{j-1}=\overline{l}_{j-1},\forall i<j-1:A_i=a')\times{}\\~~~\Pr(L_{j-1}=l_{j-1}|Y_{j-1}(\overline{a}')=0,\overline{L}_{j-2}=\overline{l}_{j-2},\forall i<j-1:A_i=a')\end{array}} \tag{by sequential conditional exchangeability}\\
&\qquad=\frac{\sum_{k=0}^{K-1}\sum_{\overline{l}_{k-1}}\delta\prod_{j=0}^k\Pr(Y_j(\overline{a})=0,L_{j-1}=l_{j-1}|Y_{j-1}(\overline{a})=0,\overline{L}_{j-2}=\overline{l}_{j-2},\forall i<j-1:A_i=a)}{\sum_{k=0}^{K-1}\sum_{\overline{l}_{k-1}}\delta\prod_{j=0}^k\Pr(Y_j(\overline{a}')=0,L_{j-1}=l_{j-1}|Y_{j-1}(\overline{a}')=0,\overline{L}_{j-2}=\overline{l}_{j-2},\forall i<j-1:A_i=a')}\\
&\qquad=\frac{\displaystyle\sum_{k=0}^{K-1}\sum_{\overline{l}_{k-2}}\delta\begin{array}{l}\Pr(Y_k(\overline{a})=0|Y_{k-1}(\overline{a})=0,\overline{L}_{k-2}=\overline{l}_{k-2},\forall i<k-1:A_i=a)\times{}\\~~~\prod_{j=0}^{k-1}\Pr(Y_j(\overline{a})=0,L_{j-1}=l_{j-1}|Y_{j-1}(\overline{a})=0,\overline{L}_{j-2}=\overline{l}_{j-2},\forall i<j-1:A_i=a)\end{array}}{\displaystyle\sum_{k=0}^{K-1}\sum_{\overline{l}_{k-2}}\delta\begin{array}{l}\Pr(Y_k(\overline{a}')=0|Y_{k-1}(\overline{a}')=0,\overline{L}_{k-2}=\overline{l}_{k-2},\forall i<k-1:A_i=a')\times{}\\~~~\prod_{j=0}^{k-1}\Pr(Y_j(\overline{a}')=0,L_{j-1}=l_{j-1}|Y_{j-1}(\overline{a}')=0,\overline{L}_{j-2}=\overline{l}_{j-2},\forall i<j-1:A_i=a')\end{array}}\\
&\qquad~~\vdots \tag{by sequential conditional exchangeability}\\
&\qquad=\frac{\sum_{k=0}^{K-1}\delta\Pr(Y_k(\overline{a})=0)}{\sum_{k=0}^{K-1}\delta\Pr(Y_k(\overline{a}')=0)}\\
&\qquad=\frac{\sum_{k=0}^{K-1}\Pr(Y_k(\overline{a})=0)}{\sum_{k=0}^{K-1}\Pr(Y_k(\overline{a}')=0)}\\
&\qquad=\frac{1+\sum_{k=1}^{K-1}\prod_{j=1}^k\Pr(Y_j(\overline{a})=0|Y_{j-1}(\overline{a})=0)}{1+\sum_{k=1}^{K-1}\prod_{j=1}^k\Pr(Y_j(\overline{a}')=0|Y_{j-1}(\overline{a}')=0)}\\
&\qquad=\frac{1+\sum_{k=1}^{K-1}(1-\theta_a)^k}{1+\sum_{k=1}^K(1-\theta_{a'})^k}\tag{by \PPHomogeneousRates{}}\\
&\qquad=\frac{1+[1-\theta_a-(1-\theta_a)^{K-1}]/\theta_a}{1+[1-\theta_{a'}-(1-\theta_{a'})^{K-1}]/\theta_{a'}} \tag{since $(1-r)\sum_{k=l}^uar^k=a(r^l-r^{u+1})$ for any real $a,r$}\\
&\qquad=\frac{\theta_{a'}(1-(1-\theta_a)^{K-1})}{\theta_{a}(1-(1-\theta_{a'})^{K-1})}.
\end{align*}

Hence,
\begin{align*}
&\displaystyle\frac{\displaystyle\frac{\Exp\big[\sum_{k=0}^{K-1}I(A_k=a)W_kI(Y_{k+1}=1,Y_k=0,\forall j\le k:A_j=A_0)|Y_K=1\big]}{\Exp\big[\sum_{k=0}^{K-1}I(A_k=1-a)W_kI(Y_{k+1}=1,Y_k=0,\forall j\le k:A_j=A_0)|Y_K=1\big]}}{\displaystyle
\frac{\Exp\big[\sum_{k=0}^{K-1}I(A_k=a)W_kS_k\big]}{\Exp\big[\sum_{k=0}^{K-1}I(A_k=1-a)W_kS_k\big]}}\\
&\qquad=\frac{1-(1-\theta_a)^{K-1}}{1-(1-\theta_{a'})^{K-1}}\times{\frac{\theta_{a}(1-(1-\theta_{a'})^{K-1})}{\theta_{a'}(1-(1-\theta_a)^{K-1})}}\\
&\qquad=\theta_a/\theta_{a'},
\end{align*}
which completes the proof.
\end{proof}

\end{myAppendix}

\begin{myAppendix}[: Identification results for exact 1:$M$ matching strategies]\label{app:MatchingExact}

\subsection*{Intention-to-treat effect}
In this subsection, cases are defined by $Y_K=1$ and have baseline exposure $A_0$. All cases are assigned a (possibly variable) number $M\ge 0$ of control exposures $A'_i$, $i=1,...,M$,  subject to
\begin{align*}
\left.\begin{array}{c}
\Pr(M>0|Y_K=1)>0\text{~and}\\
M\CI A_0|(L_0,Y_K=1)\text{~and}\\
\forall{l,a,a'}: \Pr(A'_i=a'|L_0=l,A_0=a,Y_K=1,M,M>0)=\Pr(A_0=a'|L_0=l),
\end{array}\right\}\tag{\BaseSamplingMatching{}}
\end{align*}
or
\begin{align*}
\left.\begin{array}{c}
\Pr(M>0|Y_K=1)>0\text{~and}\\
M\CI A_0|(L_0,Y_K=1)\text{~and}\\
\forall{l,a,a'}: \Pr(A'_i=a'|L_0=l,A_0=a,Y_K=1,M,M>0)=\Pr(A_0=a'|L_0=l,Y_K=0),
\end{array}\right\}\tag{\SurvivorSamplingMatching{}}
\end{align*}
or
\begin{align*}
\left.\begin{array}{c}
\Pr(M>0|Y_K=1)>0\text{~and}\\
M\CI A_0|(L_0,Y_K=1,J)\text{~and}\\
\forall{l,a,a'}: \Pr(A'_i=a'|L_0=l,A_0=a,\overline{Y}_K,J=j,M,M>0)=\Pr(A_0=a'|L_0=l,Y_{j}=0),~~\text{where}\\\hspace{5cm} J=\max\{k=0,1,...,K:Y_k=0\}. 
\end{array}\right\}\tag{\RiskSetSamplingMatching{}}
\end{align*}
That is, cases are matched with subjects that have the same baseline covariate level and who are alive at baseline (\BaseSamplingMatching{}), at the end of study (\SurvivorSamplingMatching{}), or whenever the case is alive (\RiskSetSamplingMatching{}).

For simplicity, it is assumed below that the variables are discrete. The results can however be extended to more general distributions.

\begin{theorem}[\titleCaseCohortMatchingITT]\label{th:CaseCohortMatchingITT}
If \BaseSamplingMatching{} and BCE hold and \begin{align*}
\frac{\Pr(Y_K=1|L_0=l,A_0=1)}{\Pr(Y_K=1|L_0=l,A_0=0)} = \theta \tag{\HomogeneousRiskRatios{}}
\end{align*}
for all $l$ and some constant $\theta$, then \begin{align*}
\frac{\Exp\big[\sum_{i=1}^M I(A'_i=0,A_0=1)\big|Y_K=1,M>0\big]}{\Exp\big[\sum_{i=1}^M I(A'_i=1,A_0=0)\big|Y_K=1,M>0\big]} = \frac{\Pr(Y_K(1)=1)}{\Pr(Y_K(0)=1)}.
\end{align*}
\end{theorem}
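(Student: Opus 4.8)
The plan is to reduce both the numerator and the denominator of the left-hand side to weighted sums over the levels $l$ of $L_0$ of the conditional case probabilities $\Pr(Y_K=1\mid L_0=l,A_0=a)$, and then to finish exactly as in the Corollary to Theorem~\ref{th:BaseSamplingConditionalITT}: first invoke the homogeneity condition \HomogeneousRiskRatios{} to pull out the common ratio $\theta$, then use consistency, BCE and collapsibility of the risk ratio to identify $\theta$ with $\Pr(Y_K(1)=1)/\Pr(Y_K(0)=1)$.

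First I would strip off the conditioning on $\{M>0\}$: since $\Pr(M>0\mid Y_K=1)>0$, each of the two expectations in the statement equals the corresponding expectation with the factor $I(M>0)$ inserted, divided by $\Pr(M>0\mid Y_K=1)$; this common factor cancels in the ratio. Next, iterating expectations over $(L_0,A_0,M)$ given $Y_K=1$ and using the last line of \BaseSamplingMatching{}, one has $\Pr(A'_i=a'\mid L_0=l,A_0,Y_K=1,M,M>0)=\Pr(A_0=a'\mid L_0=l)$ for each $i$, so by linearity of expectation $\Exp\big[\sum_{i=1}^M I(A'_i=a')\mid L_0=l,A_0,Y_K=1,M\big]=M\,\Pr(A_0=a'\mid L_0=l)$ (the identity holding trivially also when $M=0$, and $I(M>0)M=M$). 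Hence, up to the cancelling factor $\Pr(M>0\mid Y_K=1)$, the numerator is $\Exp\big[I(A_0=1)\,M\,\Pr(A_0=0\mid L_0)\mid Y_K=1\big]$ and the denominator is $\Exp\big[I(A_0=0)\,M\,\Pr(A_0=1\mid L_0)\mid Y_K=1\big]$.

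Then I would condition on $L_0$ and use $M\CI A_0\mid(L_0,Y_K=1)$ to write $\Exp[I(A_0=a)M\mid L_0=l,Y_K=1]=\mu_l\,\Pr(A_0=a\mid L_0=l,Y_K=1)$, where $\mu_l:=\Exp[M\mid L_0=l,Y_K=1]$. A Bayes-rule step gives $\Pr(A_0=a\mid L_0=l,Y_K=1)\Pr(L_0=l\mid Y_K=1)=\Pr(Y_K=1\mid L_0=l,A_0=a)\Pr(A_0=a\mid L_0=l)\Pr(L_0=l)\big/\Pr(Y_K=1)$. The crucial observation is that after these substitutions the $l$-th term of the numerator carries the factor $\Pr(A_0=0\mid L_0=l)\Pr(A_0=1\mid L_0=l)$ and so does the $l$-th term of the denominator; pulling out the common positive constant $\mu_l\,\Pr(A_0=1\mid L_0=l)\Pr(A_0=0\mid L_0=l)\Pr(L_0=l)\big/\Pr(Y_K=1)$ turns the ratio into $\sum_l(\cdots)\Pr(Y_K=1\mid L_0=l,A_0=1)\big/\sum_l(\cdots)\Pr(Y_K=1\mid L_0=l,A_0=0)$, which equals $\theta$ by \HomogeneousRiskRatios{}. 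Finally, consistency and BCE give $\Pr(Y_K=1\mid L_0=l,A_0=a)=\Pr(Y_K(a)=1\mid L_0=l)$, so $\theta$ is the common conditional risk ratio, and collapsibility of the risk ratio yields $\theta=\Pr(Y_K(1)=1)/\Pr(Y_K(0)=1)$.

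I expect the only delicate point to be the bookkeeping around the random, possibly $L_0$-dependent match count $M$ and the event $\{M>0\}$: applying the sampling assumption in \BaseSamplingMatching{} (which is stated conditionally on the value of $M$ and on $M>0$) at the right place, and checking that the $M=0$ contributions drop out so that the replacement $\Exp[\sum_{i=1}^M I(A'_i=a')\mid\cdots]=M\Pr(A_0=a'\mid L_0)$ is legitimate unconditionally on $\{M>0\}$. Once the two expressions are organized as a ratio of sums over $l$, the rest is the same collapsibility argument already used in the non-matched case-base result.
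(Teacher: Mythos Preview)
Your proposal is correct and follows essentially the same route as the paper's proof: condition on $(L_0,A_0,M)$, use \BaseSamplingMatching{} to replace $\Pr(A'_i=a'\mid\cdots)$ by $\Pr(A_0=a'\mid L_0)$, observe that numerator and denominator carry the same symmetric weight in $l$, apply \HomogeneousRiskRatios{}, and finish with consistency, BCE and collapsibility. The only cosmetic differences are that the paper keeps the sum over $m>0$ explicit (defining $q(l,m)$) and first factors out $\mathrm{Odds}(A_0=1\mid Y_K=1,M>0)$, whereas you collapse the $m$-sum into $\mu_l=\Exp[M\mid L_0=l,Y_K=1]$ and let the odds be absorbed into the common weight; both lead to the same cancellation.
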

\begin{proof}
We have
\begin{align*}
\frac{\Exp\big[\sum_{i=1}^M I(A'_i=0,A_0=1)\big|Y_K=1,M>0\big]}{\Exp\big[\sum_{i=1}^M I(A'_i=1,A_0=0)\big|Y_K=1,M>0\big]} &= \frac{\Exp\big[\sum_{i=1}^M I(A'_i=0)\big|A_0=1,Y_K=1,M>0\big]}{\Exp\big[\sum_{i=1}^M I(A'_i=1)\big|A_0=0,Y_K=1,M>0\big]}\\&\qquad{}\times\mathrm{Odds}(A_0=1|Y_K=1,M>0),
\end{align*}
where
\begin{align*}
&\frac{\Exp\big[\sum_{i=1}^M I(A'_i=0)\big|A_0=1,Y_K=1,M>0\big]}{\Exp\big[\sum_{i=1}^M I(A'_i=1)\big|A_0=0,Y_K=1,M>0\big]}\\
&\qquad=\frac{\sum_{m>0}\Exp\big[\sum_{i=1}^m I(A'_i=0)\big|A_0=1,Y_K=1,M=m\big]\Pr(M=m|A_0=1,Y_K=1,M>0)}{\sum_{m>0}\Exp\big[\sum_{i=1}^m I(A'_i=1)\big|A_0=0,Y_K=1,M=m\big]\Pr(M=m|A_0=0,Y_K=1,M>0)}\\
&\qquad=\frac{\sum_{m>0}\sum_{i=1}^m\sum_{l}\Pr(A'_i=0|L_0=l,A_0=1,Y_K=1,M=m)\Pr(M=m,L_0=l|A_0=1,Y_K=1,M>0)}{\sum_{m>0}\sum_{i=1}^m\sum_{l}\Pr(A'_i=1|L_0=l,A_0=0,Y_K=1,M=m)\Pr(M=m,L_0=l|A_0=0,Y_K=1,M>0)} \\
&\qquad=\frac{\sum_{m>0}\sum_{i=1}^m\sum_{l}\Pr(A_0=0|L_0=l)\Pr(M=m,L_0=l|A_0=1,Y_K=1,M>0)}{\sum_{m>0}\sum_{i=1}^m\sum_{l}\Pr(A_0=1|L_0=l)\Pr(M=m,L_0=l|A_0=0,Y_K=1,M>0)}\tag{by \BaseSamplingMatching{}}\\
&\qquad=\frac{\sum_{m>0}\sum_{i=1}^m\sum_{l}\Pr(A_0=0|L_0=l)\Pr(M=m,L_0=l,A_0=1|Y_K=1)}{\sum_{m>0}\sum_{i=1}^m\sum_{l}\Pr(A_0=1|L_0=l)\Pr(M=m,L_0=l,A_0=0|Y_K=1)}\frac{1}{\mathrm{Odds}(A_0=1|Y_K=1,M>0)}\\
&\qquad=\frac{\sum_{m>0}\sum_{i=1}^m\sum_{l}q(l,m)\Pr(Y_K=1|L_0=l,A_0=1)}{\sum_{m>0}\sum_{i=1}^m\sum_{l}q(l,m)\Pr(Y_K=1|L_0=l,A_0=0)} \\&\qquad\qquad\times\frac{1}{\mathrm{Odds}(A_0=1|Y_K=1,M>0)} \tag{under \BaseSamplingMatching{} and definition of $q(l,m)$ (see below)}\\
&\qquad=\frac{\sum_{m>0}\sum_{i=1}^m\sum_{l}q(l,m)\theta\Pr(Y_K=1|L_0=l,A_0=0)}{\sum_{m>0}\sum_{i=1}^m\sum_{l}q(l,m)\Pr(Y_K=1|L_0=l,A_0=0)}\frac{1}{\mathrm{Odds}(A_0=1|Y_K=1,M>0)}\tag{by \HomogeneousRiskRatios{}}\\
&\qquad=\frac{\theta}{\mathrm{Odds}(A_0=1|Y_K=1,M>0)}
\end{align*}
where $q(l,m)=\Pr(M=m|L_0=l,Y_K=1)\Pr(A_0=0|L_0=l)\Pr(A_0=1|L_0=l)\Pr(L_0=l)$.

It follows that \begin{align*}
\frac{\Exp\big[\sum_{i=1}^M I(A'_i=0,A_0=1)\big|Y_K=1,M>0\big]}{\Exp\big[\sum_{i=1}^M I(A'_i=1,A_0=0)\big|Y_K=1,M>0\big]} &=
\frac{\Pr(Y_K=1|L_0,A_0=1)}{\Pr(Y_K=1|L_0,A_0=0)}\\
&=\frac{\Pr(Y_K(1)=1|L_0,A_0=1)}{\Pr(Y_K(0)=1|L_0,A_0=0)}\tag{by consistency}\\
&=\frac{\Pr(Y_K(1)=1|L_0)}{\Pr(Y_K(0)=1|L_0)}\tag{by baseline conditional exchangeability}\\
&=\frac{\Pr(Y_K(1)=1)}{\Pr(Y_K(0)=1)}.
\end{align*}
\end{proof}

\begin{theorem}[\titleSurvivorSamplingMatchingITT]\label{th:SurvivorSamplingMatchingITT}
Suppose \SurvivorSamplingMatching{} and BCE hold. If 
\begin{align*}
\frac{\mathrm{Odds}(Y_K=1|L_0,A_0=1)}{\mathrm{Odds}(Y_K=1|L_0,A_0=0)}=\theta \tag{\HomogeneousOddsRatios{}}
\end{align*}
for some constant $\theta$, then \begin{align*}
\frac{\Exp\big[\sum_{i=1}^M I(A'_i=0,A_0=1)\big|Y_K=1,M>0\big]}{\Exp\big[\sum_{i=1}^M I(A'_i=1,A_0=0)\big|Y_K=1,M>0\big]} = \frac{\mathrm{Odds}(Y_K(1)=1|L_0)}{\mathrm{Odds}(Y_K(0)=1|L_0)}.
\end{align*}
\end{theorem}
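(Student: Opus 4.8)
The plan is to mirror the proof of Theorem~\ref{th:CaseCohortMatchingITT} almost verbatim, the only structural changes being that under \SurvivorSamplingMatching{} the matched control exposure $A'_i$ is drawn from the \emph{survivors'} baseline-exposure distribution $\Pr(A_0=\cdot\,|\,L_0=l,Y_K=0)$ rather than from the marginal $\Pr(A_0=\cdot\,|\,L_0=l)$, and that the homogeneity is imposed on the conditional odds ratio \HomogeneousOddsRatios{} rather than the risk ratio.

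First I would factor the target ratio as
\begin{align*}
\frac{\Exp\big[\sum_{i=1}^M I(A'_i=0,A_0=1)\big|Y_K=1,M>0\big]}{\Exp\big[\sum_{i=1}^M I(A'_i=1,A_0=0)\big|Y_K=1,M>0\big]}
= \frac{\Exp\big[\sum_{i=1}^M I(A'_i=0)\big|A_0=1,Y_K=1,M>0\big]}{\Exp\big[\sum_{i=1}^M I(A'_i=1)\big|A_0=0,Y_K=1,M>0\big]}\,\mathrm{Odds}(A_0=1|Y_K=1,M>0),
\end{align*}
exactly as in Theorem~\ref{th:CaseCohortMatchingITT}. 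Then I would expand the first factor on the right by conditioning successively on $M=m$ and on $L_0=l$, use $M\CI A_0|(L_0,Y_K=1)$ to isolate a weight $q(l,m)=\Pr(M=m|L_0=l,Y_K=1)\Pr(L_0=l|Y_K=1)$ common to numerator and denominator, and invoke the third line of \SurvivorSamplingMatching{} to replace each $\Pr(A'_i=a'|L_0=l,A_0=a,Y_K=1,M=m)$ by $\Pr(A_0=a'|L_0=l,Y_K=0)$. After extracting the factor $\mathrm{Odds}(A_0=1|Y_K=1,M>0)^{-1}$ from the $\Pr(\cdot|A_0=a,Y_K=1,M>0)$ normalizations (as in Theorem~\ref{th:CaseCohortMatchingITT}), the first factor becomes $\mathrm{Odds}(A_0=1|Y_K=1,M>0)^{-1}$ times
\begin{align*}
\frac{\sum_{m,i,l}q(l,m)\,\Pr(A_0=0|L_0=l,Y_K=0)\,\Pr(A_0=1|L_0=l,Y_K=1)}{\sum_{m,i,l}q(l,m)\,\Pr(A_0=1|L_0=l,Y_K=0)\,\Pr(A_0=0|L_0=l,Y_K=1)}.
\end{align*}

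The only genuinely new step is evaluating this sum-ratio. Two applications of Bayes' rule — one in the $Y_K=0$ stratum, one in the $Y_K=1$ stratum — show that for every $l$ the cross-product $\Pr(A_0=0|L_0=l,Y_K=0)\Pr(A_0=1|L_0=l,Y_K=1)\big/\big[\Pr(A_0=1|L_0=l,Y_K=0)\Pr(A_0=0|L_0=l,Y_K=1)\big]$ equals the $2\times2$ cross-product ratio of $(A_0,Y_K)$ given $L_0$, that is $\mathrm{Odds}(Y_K=1|L_0=l,A_0=1)/\mathrm{Odds}(Y_K=1|L_0=l,A_0=0)$ — the marginal $\Pr(A_0=\cdot|L_0=l)$ and $\Pr(Y_K=\cdot|L_0=l)$ factors cancel — which by \HomogeneousOddsRatios{} equals the constant $\theta$ for all $l$. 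Hence the sum-ratio collapses to $\theta$, the target ratio equals $\theta=\mathrm{Odds}(Y_K=1|L_0,A_0=1)/\mathrm{Odds}(Y_K=1|L_0,A_0=0)$, and consistency followed by the weak form of BCE ($Y_K(a)\CI A_0|L_0$) turns this into $\mathrm{Odds}(Y_K(1)=1|L_0)/\mathrm{Odds}(Y_K(0)=1|L_0)$, as claimed.

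I expect the main obstacle to be the second-stage bookkeeping — tracking which conditioning events ($M>0$, $A_0=a$, $Y_K=1$) are active and verifying that $q(l,m)$ is literally identical in numerator and denominator so that it cancels — rather than the Bayes computation, which is a one-line $2\times2$-table identity once set up. It is worth noting that \HomogeneousOddsRatios{} is exactly what makes the $l$-indexed sum-ratio collapse to a single $\theta$: without it one would be left with a weighted average of stratum-specific odds ratios, which in general does not equal the conditional-on-$L_0$ causal odds ratio, the odds ratio being non-collapsible.
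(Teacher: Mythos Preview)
Your proposal is correct and follows essentially the same route as the paper's proof: factor out $\mathrm{Odds}(A_0=1|Y_K=1,M>0)$, expand over $(m,i,l)$, invoke \SurvivorSamplingMatching{} and $M\CI A_0|(L_0,Y_K=1)$ to obtain a common weight in numerator and denominator, collapse via \HomogeneousOddsRatios{}, and finish with consistency and BCE. The only cosmetic difference is where the Bayes flip happens: the paper immediately rewrites $\Pr(A_0=a'|L_0=l,Y_K=0)$ as $\Pr(Y_K=0|L_0=l,A_0=a')\Pr(A_0=a'|L_0=l)/\Pr(Y_K=0|L_0=l)$ and absorbs the extra factors into its $q(l,m)$, so that the outcome-side odds ratio appears directly, whereas you keep the exposure-side probabilities and appeal to the $2\times2$ cross-product symmetry at the end---these are equivalent bookkeeping choices.
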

\begin{proof}
We have
\begin{align*}
\frac{\Exp\big[\sum_{i=1}^M I(A'_i=0,A_0=1)\big|Y_K=1,M>0\big]}{\Exp\big[\sum_{i=1}^M I(A'_i=1,A_0=0)\big|Y_K=1,M>0\big]} &= \frac{\Exp\big[\sum_{i=1}^M I(A'_i=0)\big|A_0=1,Y_K=1,M>0\big]}{\Exp\big[\sum_{i=1}^M I(A'_i=1)\big|A_0=0,Y_K=1,M>0\big]}\\&\qquad{}\times\mathrm{Odds}(A_0=1|Y_K=1,M>0),
\end{align*}
where
\begin{align*}
&\frac{\Exp\big[\sum_{i=1}^M I(A'_i=0)\big|A_0=1,Y_K=1,M>0\big]}{\Exp\big[\sum_{i=1}^M I(A'_i=1)\big|A_0=0,Y_K=1,M>0\big]}\\
&\qquad=\frac{\sum_{m>0}\Exp\big[\sum_{i=1}^m I(A'_i=0)\big|A_0=1,Y_K=1,M=m\big]\Pr(M=m|A_0=1,Y_K=1)}{\sum_{m>0}\Exp\big[\sum_{i=1}^m I(A'_i=1)\big|A_0=0,Y_K=1,M=m\big]\Pr(M=m|A_0=0,Y_K=1)}\\
&\qquad=\frac{\sum_{m>0}\sum_{i=1}^m\sum_{l}\Pr(A'_i=0|L_0=l,A_0=1,Y_K=1,M=m)\Pr(M=m,L_0=l|A_0=1,Y_K=1,M>0)}{\sum_{m>0}\sum_{i=1}^m\sum_{l}\Pr(A'_i=1|L_0=l,A_0=0,Y_K=1,M=m)\Pr(M=m,L_0=l|A_0=0,Y_K=1,M>0)} \\
&\qquad=\frac{\sum_{m>0}\sum_{i=1}^m\sum_{l}\Pr(A_0=0|L_0=l,Y_K=0)\Pr(M=m,L_0=l|A_0=1,Y_K=1,M>0)}{\sum_{m>0}\sum_{i=1}^m\sum_{l}\Pr(A_0=1|L_0=l,Y_K=0)\Pr(M=m,L_0=l|A_0=0,Y_K=1,M>0)}\tag{by \SurvivorSamplingMatching{}}\\
&\qquad=\frac{\displaystyle\sum_{m>0}\sum_{i=1}^m\sum_{l}\frac{\Pr(Y_K=0|L_0=0,A_0=0)\Pr(A_0=0|L_0=l)}{\Pr(Y_K=0|L_0=l)}\Pr(M=m,L_0=l,A_0=1|Y_K=1)}{\displaystyle\sum_{m>0}\sum_{i=1}^m\sum_{l}\frac{\Pr(Y_K=0|L_0=0,A_0=1)\Pr(A_0=1|L_0=l)}{\Pr(Y_K=0|L_0=l)}\Pr(M=m,L_0=l,A_0=0|Y_K=1)}\\&\qquad\qquad\times\frac{1}{\mathrm{Odds}(A_0=1|Y_K=1,M>0)}\\
&\qquad=\frac{\sum_{m>0}\sum_{i=1}^m\sum_{l}q(l,m)\Pr(Y_K=1|L_0=l,A_0=1)\Pr(Y_K=0|L_0=0,A_0=0)}{\sum_{m>0}\sum_{i=1}^m\sum_{l}q(l,m)\Pr(Y_K=1|L_0=l,A_0=0)\Pr(Y_K=0|L_0=0,A_0=1)}\\&\qquad\qquad\times\frac{1}{\mathrm{Odds}(A_0=1|Y_K=1,M>0)} \tag{under \SurvivorSamplingMatching{} and definition of $q(l,m)$ (see below)}\\
&\qquad=\frac{\sum_{m>0}\sum_{i=1}^m\sum_{l}q(l,m)\theta\Pr(Y_K=1|L_0=l,A_0=0)\Pr(Y_K=0|L_0=0,A_0=1)}{\sum_{m>0}\sum_{i=1}^m\sum_{l}q(l,m)\Pr(Y_K=1|L_0=l,A_0=0)\Pr(Y_K=0|L_0=0,A_0=1)}\\&\qquad\qquad\times\frac{1}{\mathrm{Odds}(A_0=1|Y_K=1,M>0)}\tag{by \HomogeneousOddsRatios{}}\\
&\qquad=\frac{\theta}{\mathrm{Odds}(A_0=1|Y_K=1,M>0)}
\end{align*}
where $q(l,m)=\Pr(M=m|L_0=l,Y_K=1)\Pr(A_0=0|L_0=l)\Pr(A_0=1|L_0=l)\Pr(L_0=l)/\Pr(Y_K=0|L_0=l)$.

From the definition of $\theta$, it follows that \begin{align*}
\frac{\Exp\big[\sum_{i=1}^M I(A'_i=0,A_0=1)\big|Y_K=1,M>0\big]}{\Exp\big[\sum_{i=1}^M I(A'_i=1,A_0=0)\big|Y_K=1,M>0\big]}
&=\frac{\mathrm{Odds}(Y_K(1)=1|L_0,A_0=1)}{\mathrm{Odds}(Y_K(0)=1|L_0,A_0=0)}\tag{by consistency}\\
&=\frac{\mathrm{Odds}(Y_K(1)=1|L_0)}{\mathrm{Odds}(Y_K(0)=1|L_0)}\tag{by baseline conditional exchangeability}\\
&=\frac{\mathrm{Odds}(Y_K(1)=1)}{\mathrm{Odds}(Y_K(0)=1)}.
\end{align*}
\end{proof}

\begin{theorem}[\titleDensitySamplingMatchingITT]\label{th:DensitySamplingMatchingITT}
Suppose \RiskSetSamplingMatching{} and BCE hold. If 
\begin{align*}
\frac{\Pr(Y_{j+1}=1|L_0,A_0=1,Y_j=0)}{\Pr(Y_{j+1}=1|L_0,A_0=0,Y_j=0)}=\theta \tag{\HomogeneousRateRatios{}}
\end{align*}
for $j=0,1,...,K$ and some constant $\theta$, then \begin{align*}
\frac{\Exp\big[\sum_{i=1}^M I(A'_i=0,A_0=1)\big|Y_K=1,M>0\big]}{\Exp\big[\sum_{i=1}^M I(A'_i=1,A_0=0)\big|Y_K=1,M>0\big]} = \frac{\Pr(Y_{j+1}(1)=1|L_0,Y_j(1)=0)}{\Pr(Y_{j+1}(0)=1|L_0,Y_j(0)=0)}.
\end{align*}
\end{theorem}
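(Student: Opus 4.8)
The plan is to follow the same template as the proofs of Theorem~\ref{th:CaseCohortMatchingITT} and Theorem~\ref{th:SurvivorSamplingMatchingITT}, inserting one extra layer of conditioning to accommodate risk-set matching. First I would peel off the baseline-exposure odds among cases, writing
\begin{align*}
&\frac{\Exp\big[\sum_{i=1}^M I(A'_i=0,A_0=1)\big|Y_K=1,M>0\big]}{\Exp\big[\sum_{i=1}^M I(A'_i=1,A_0=0)\big|Y_K=1,M>0\big]}\\
&\qquad= \frac{\Exp\big[\sum_{i=1}^M I(A'_i=0)\big|A_0=1,Y_K=1,M>0\big]}{\Exp\big[\sum_{i=1}^M I(A'_i=1)\big|A_0=0,Y_K=1,M>0\big]}\times\mathrm{Odds}(A_0=1|Y_K=1,M>0),
\end{align*}
so that the task reduces to showing the first factor on the right equals $\theta/\mathrm{Odds}(A_0=1|Y_K=1,M>0)$.

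To evaluate that factor I would expand each conditional expectation by iterating over $M=m$, over $L_0=l$, and---the new ingredient relative to the case-base proof---over $J=j$, the last time the case is event-free. The elementary facts that drive everything are: (i) for a case, $\{J=j\}\cap\{Y_K=1\}=\{Y_j=0,Y_{j+1}=1\}$, so conditioning on $\overline{Y}_K$ together with $Y_K=1$ is the same as conditioning on $J$; (ii) \RiskSetSamplingMatching{} then gives $\Pr(A'_i=a'|L_0=l,A_0=a,\overline{Y}_K,J=j,M,M>0)=\Pr(A_0=a'|L_0=l,Y_j=0)$, which is free of $a$ and $i$; and (iii) the conditional independence $M\CI A_0|(L_0,Y_K=1,J)$ lets the $M$-factors and the ``$M>0$'' restriction be dispensed with exactly as in the earlier matching proofs. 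After a few applications of Bayes' rule, both the numerator sum and the denominator sum take the form $\big[\mathrm{Odds}(A_0=\cdot\,|Y_K=1,M>0)\big]^{-1}\sum_{m>0}\sum_{i=1}^{m}\sum_l\sum_{j<K} q(l,m,j)\,\Pr(Y_{j+1}=1|L_0=l,A_0=\cdot\,,Y_j=0)$ for a common nonnegative weight $q(l,m,j)$---the natural analogue of the $q(l,m)$ in Theorem~\ref{th:CaseCohortMatchingITT}, now additionally carrying $\Pr(M=m|L_0=l,J=j,Y_K=1)$ and a factor $1/\Pr(Y_j=0|L_0=l)$. Dividing, the $q(l,m,j)$ cancel termwise; \HomogeneousRateRatios{} (invoked across $j$ as well as $l$) collapses the ratio of the two sums to the constant $\theta$, while the reciprocal odds factor survives and, multiplied back against the odds factor from the first display, leaves the target ratio equal to $\theta$.

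Finally I would identify $\theta$ causally. By the intention-to-treat form of consistency, within the event $\{A_0=a\}$ we have $Y_j=Y_j(a)$ and $Y_{j+1}=Y_{j+1}(a)$, so $\Pr(Y_{j+1}=1|L_0,A_0=a,Y_j=0)=\Pr(Y_{j+1}(a)=1|L_0,A_0=a,Y_j(a)=0)$; then BCE removes the conditioning on $A_0$, giving $\theta=\Pr(Y_{j+1}(1)=1|L_0,Y_j(1)=0)/\Pr(Y_{j+1}(0)=1|L_0,Y_j(0)=0)$, which is the claimed expression.

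The main obstacle I expect is purely bookkeeping: carefully justifying that $J$ may replace $\overline{Y}_K$ inside the conditioning events and that the ``$M>0$'' restriction factors through $M\CI A_0|(L_0,Y_K=1,J)$ without leaving residual dependence on $A_0$, so that the $\mathrm{Odds}(A_0=\cdot\,|Y_K=1,M>0)$ terms emerge cleanly and cancel as in the two preceding matching theorems. No genuinely new idea beyond the case-base argument is required; the only real subtlety is that the weight $q(l,m,j)$ is now allowed to depend on $j$, which is harmless precisely because the homogeneity assumption holds uniformly over that index.
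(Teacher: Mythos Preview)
Your proposal is correct and follows essentially the same approach as the paper: expand over $(m,j,l)$, invoke \RiskSetSamplingMatching{} to replace the control-exposure probabilities by $\Pr(A_0=a'|L_0=l,Y_j=0)$, arrive at a common weight $q_j(l,m)$ in numerator and denominator, and apply \HomogeneousRateRatios{} before finishing with consistency and BCE. The paper's proof differs only cosmetically in that it expands the full discordant-pair ratio directly rather than first peeling off $\mathrm{Odds}(A_0=1|Y_K=1,M>0)$, and the actual common weight carries $\Pr(L_0=l,Y_j=0)$ rather than a factor $1/\Pr(Y_j=0|L_0=l)$---but since the weight cancels, this slip in your sketch is immaterial.
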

\begin{proof}
If $J=\max\{k=0,1,...,K:Y_k=0\}$, then
\begin{align*}
&\frac{\Exp\big[\sum_{i=1}^M I(A'_i=0,A_0=1)\big|Y_K=1,M>0\big]}{\Exp\big[\sum_{i=1}^M I(A'_i=1,A_0=0)\big|Y_K=1,M>0\big]}\\
&\qquad=\frac{\sum_{m>0}\Exp\big[\sum_{i=1}^m I(A'_i=0,A_0=1)\big|Y_K=1,M=m\big]\Pr(M=m|Y_K=1,M>0)}{\sum_{m>0}\Exp\big[\sum_{i=1}^m I(A'_i=1,A_0=0)\big|Y_K=1,M=m\big]\Pr(M=m|Y_K=1,M>0)}\\
&\qquad=\frac{\sum_{m>0}\Exp\big[\sum_{i=1}^m I(A'_i=0,A_0=1)\big|Y_K=1,M=m\big]\Pr(M=m,Y_K=1)}{\sum_{m>0}\Exp\big[\sum_{i=1}^m I(A'_i=1,A_0=0)\big|Y_K=1,M=m\big]\Pr(M=m,Y_K=1)}\\
&\qquad=\frac{\sum_{m>0}\sum_{j=0}^{K-1}\sum_{l}\Exp\big[\sum_{i=1}^mI(A'_i=0,A_0=1)\big|L_0=l,J=j,M=m\big]\Pr(L_0=l,J=j,M=m)}{\sum_{m>0}\sum_{j=0}^{K-1}\sum_{l}\Exp\big[\sum_{i=1}^mI(A'_i=1,A_0=0)\big|L_0=l,J=j,M=m\big]\Pr(L_0=l,J=j,M=m)} \\
&\qquad=\frac{\sum_{m>0}\sum_{i=1}^m\sum_{j=0}^{K-1}\sum_{l}\Exp\big[I(A'_i=0,A_0=1)\big|L_0=l,J=j,M=m\big]\Pr(L_0=l,J=j,M=m)}{\sum_{m>0}\sum_{i=1}^m\sum_{j=0}^{K-1}\sum_{l}\Exp\big[ I(A'_i=1,A_0=0)\big|L_0=l,J=j,M=m\big]\Pr(L_0=l,J=j,M=m)}\\
&\qquad=\frac{\displaystyle\sum_{m>0}\sum_{i=1}^m\sum_{j=0}^{K-1}\sum_{l}\begin{array}{l}\Exp\big[I(A'_i=0,A_0=1)\big|L_0=l,Y_j=0,Y_{j+1}=1,M=m\big]\\\qquad{}\times\Pr(L_0=l,Y_j=0,Y_{j+1}=1,M=m)\end{array}}{\displaystyle\sum_{m>0}\sum_{i=1}^m\sum_{j=0}^{K-1}\sum_{l}\begin{array}{l}\Exp\big[ I(A'_i=1,A_0=0)\big|L_0=l,Y_j=0,Y_{j+1}=1,M=m\big]\\\qquad{}\times\Pr(L_0=l,Y_j=0,Y_{j+1}=1,M=m)\end{array}}\\
&\qquad=\frac{\displaystyle\sum_{m>0}\sum_{i=1}^m\sum_{j=0}^{K-1}\sum_{l}\begin{array}{l}\Pr(A'_i=0|L_0=l,A_0=1,Y_j=0,Y_{j+1}=1,M=m)\\\qquad{}\times\Pr(L_0=l,A_0=1,Y_j=0,Y_{j+1}=1,M=m)\end{array}}{\displaystyle\sum_{m>0}\sum_{i=1}^m\sum_{j=0}^{K-1}\sum_{l}\begin{array}{l}\Pr(A'_i=1|L_0=l,A_0=0,Y_j=0,Y_{j+1}=1,M=m)\\\qquad{}\times\Pr(L_0=l,A_0=0,Y_j=0,Y_{j+1}=1,M=m)\end{array}}\\
&\qquad=\frac{\sum_{m>0}\sum_{i=1}^m\sum_{j=0}^{K-1}\sum_{l}\Pr(A_0=0|L_0=l,Y_j=0)\Pr(L_0=l,A_0=1,Y_j=0,Y_{j+1}=1,M=m)}{\sum_{m>0}\sum_{i=1}^m\sum_{j=0}^{K-1}\sum_{l}\Pr(A_0=1|L_0=l,Y_j=0)\Pr(L_0=l,A_0=0,Y_j=0,Y_{j+1}=1,M=m)} \tag{by \RiskSetSamplingMatching{}}\\
&\qquad=\frac{\sum_{m>0}\sum_{i=1}^m\sum_{j=0}^{K-1}\sum_l q_j(l,m)\Pr(Y_{j+1}=1|L_0=l,A_0=1,Y_j=0)}{\sum_{m>0}\sum_{i=1}^m\sum_{j=0}^{K-1}\sum_lq_j(l,m)\Pr(Y_{j+1}=1|L_0=l,A_0=0,Y_j=0)} \tag{under \RiskSetSamplingMatching{} and definition of $q_j(l,m)$ (see below)}\\
&\qquad=\theta\frac{\sum_{m>0}\sum_{i=1}^m\sum_{j=0}^{K-1}\sum_l q_j(l,m)\Pr(Y_{j+1}=1|L_0=l,A_0=0,Y_j=0)}{\sum_{m>0}\sum_{i=1}^m\sum_{j=0}^{K-1}\sum_lq_j(l,m)\Pr(Y_{j+1}=1|L_0=l,A_0=0,Y_j=0)}\tag{by \HomogeneousRateRatios{}}\\
&\qquad=\theta.
\end{align*}
where $q_{j}(l,m)=\Pr(M=m|L_0=l,Y_j=0)\Pr(A_0=1|L_0=l,Y_j=0)\Pr(A_0=0|L_0=l,Y_j=0)\Pr(L_0=l,Y_j=0)$.

Thus,
\begin{align*}
\frac{\Exp\big[\sum_{i=1}^M I(A'_i=0,A_0=1)\big|Y_K=1\big]}{\Exp\big[\sum_{i=1}^M I(A'_i=1,A_0=0)\big|Y_K=1\big]}&=\frac{\Pr(Y_{j+1}=1|L_0,A_0=1,Y_j=0)}{\Pr(Y_{j+1}=1|L_0,A_0=0,Y_j=0)}\\
&=\frac{\Pr(Y_{j+1}(1)=1|L_0,A_0=1,Y_j(1)=0)}{\Pr(Y_{j+1}(0)=1|L_0,A_0=0,Y_j(0)=0)}\tag{by consistency}\\
&=\frac{\Pr(Y_{j+1}(1)=1|L_0,Y_j(1)=0)}{\Pr(Y_{j+1}(0)=1|L_0,Y_j(0)=0)}.\tag{by baseline conditional exchangeability}
\end{align*}
\end{proof}

\subsection*{Per-protocol effect}

In this subsection, an individual qualifies as a case if and only if $Y_K=1$ and the subject adheres to the protocol that was assigned at baseline (i.e., $A_k=A_0$ for all $k=0,1,...,K$ if $Y_k=0$). All cases are assigned a (possibly variable) number $M\ge0$ control exposures $A'_i$, $i=1,...,M$, subject to 
\begin{align*}
\left.\begin{array}{c}
\Pr(M>0|Y_K=1,\forall j:(Y_j=0\Rightarrow A_j=A_0))>0\text{~and}\\
M\CI A_0|(J,Y_K=1,\overline{L}_J=\overline{l}_J,\forall{i\le J}:A_i=A_0)\text{~and}\\
\forall{\overline{l},a}: \Pr(A'_i=a'|\overline{L}_J=\overline{l}_J,\forall j\le J:A_{j}=A_0,A_0=a,Y_J=0,J,M,M>0)\\\qquad= \Pr(A_J=a'|\overline{L}_J=\overline{l}_J,\forall j\le J:A_{j}=A_0,Y_J=0),~~\text{where}\\\qquad\qquad J=\max\{k=0,1,...,K:Y_k=0\}.\end{array}\right\} \tag{\PPRiskSetSamplingMatching{}}
\end{align*}


\begin{theorem}[\titleDensitySamplingMatchingPP]\label{th:DensitySamplingMatchingPP}
Suppose \PPRiskSetSamplingMatching{} holds. If \begin{align*}
\frac{\Pr(Y_{j+1}=1|\overline{L}_j=\overline{l}_j,Y_{j}=0,\forall{i\le j}:A_i=1)}{\Pr(Y_{j+1}=1|\overline{L}_j=\overline{l}_j,Y_{j}=0,\forall{i\le j}:A_i=0)} = \theta \tag{\PPHomogeneousRateRatios{}}
\end{align*} for all $j,\overline{l}_j$ and some constant $\theta$, then
\begin{align*}
&\frac{\Exp\Big[\sum_{i=1}^MI(A'_i=0,A_0=1)\Big|Y_K=1,\forall j: (Y_j=0\Rightarrow A_j=A_0),M>0\Big]}{\Exp\Big[\sum_{i=1}^MI(A'_i=1,A_0=0)\Big|Y_K=1,\forall j: (Y_j=0\Rightarrow A_j=A_0),M>0\Big]} \\&\qquad= \frac{\Pr(Y_{j+1}(\overline{1})=1|\overline{L}_j=\overline{l}_j,Y_{j}(\overline{1})=0,\forall{i\le j}:A_i=1)}{\Pr(Y_{j+1}(\overline{0})=1|\overline{L}_j=\overline{l}_j,Y_{j}(\overline{0})=0,\forall{i\le j}:A_i=0)}.
\end{align*}
\end{theorem}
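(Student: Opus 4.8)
The plan is to mirror the proof of Theorem~\ref{th:DensitySamplingMatchingITT}, with the event that the case adheres to its assigned protocol through its last event-free time $J=\max\{k:Y_k=0\}$, i.e.\ $A_0=\dots=A_J$, playing the role that baseline exposure $A_0$ plays there, and with consistency --- but, notably, no exchangeability --- as the only bridge to counterfactuals. First I would note that for any case $Y_K=1$, hence $J\le K-1$, $Y_J=0$, $Y_{J+1}=1$, and that conditioning on $\{Y_j=0,Y_{j+1}=1\}$ coincides with conditioning on $\{J=j\}$ (which in turn forces $Y_K=1$). Writing the numerator $\Exp\big[\sum_{i=1}^M I(A'_i=0,A_0=1)\mid Y_K=1,\forall j:(Y_j=0\Rightarrow A_j=A_0),M>0\big]$ as a sum over the case's failure-window index $j$, its covariate history $\overline{l}_j$, and $M=m$, and exchanging the inner sum over $i$ with the expectation exactly as in the cited proof, reduces the problem to analysing, stratum by stratum, $\Pr(A'_i=0\mid\overline{L}_j=\overline{l}_j,A_0=\dots=A_j=1,Y_j=0,Y_{j+1}=1,M=m)$ together with the joint probability $\Pr(\overline{L}_j=\overline{l}_j,A_0=\dots=A_j=1,Y_j=0,Y_{j+1}=1,M=m)$.

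Next I would invoke \PPRiskSetSamplingMatching{}. Its third clause (after noting that $\{\forall i\le J:A_i=A_0\}$ makes $A_J=A_0$) replaces the sampling probability by $\Pr(A_0=0\mid\overline{L}_j=\overline{l}_j,A_0=\dots=A_j,Y_j=0)$, which crucially does not involve the case's own baseline exposure. Factoring the joint probability by the chain rule isolates the observed hazard $\Pr(Y_{j+1}=1\mid\overline{L}_j=\overline{l}_j,A_0=\dots=A_j=1,Y_j=0)$, while the first two clauses of \PPRiskSetSamplingMatching{} (the $M\CI A_0$ clause, applicable since $\{Y_j=0,Y_{j+1}=1\}=\{J=j\}$) make the $M$-factor free of which protocol was followed. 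Setting $p_a=\Pr(A_0=\dots=A_j=a\mid\overline{L}_j=\overline{l}_j,Y_j=0)$, the leftover exposure weights are $\tfrac{p_0}{p_0+p_1}\,p_1$ in the numerator and $\tfrac{p_1}{p_0+p_1}\,p_0$ in the $I(A'_i=1,A_0=0)$ denominator --- equal, both $p_0p_1/(p_0+p_1)$. Hence the numerator is proportional to $\sum_{m>0}\sum_{i=1}^m\sum_j\sum_{\overline{l}_j}q_j(\overline{l}_j,m)\,\Pr(Y_{j+1}=1\mid\overline{L}_j=\overline{l}_j,\forall i\le j:A_i=1,Y_j=0)$ for a protocol-symmetric weight $q_j(\overline{l}_j,m)$, and the denominator to the same expression with the $\forall i\le j:A_i=0$ hazard, the common proportionality constant being $1/\Pr(Y_K=1,\text{adherence},M>0)$. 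Applying \PPHomogeneousRateRatios{} term by term pulls $\theta$ out of the double sum, so the ratio of discordant-pair counts equals $\theta$.

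It remains to read $\theta$ counterfactually. By the definition in \PPHomogeneousRateRatios{}, $\theta=\Pr(Y_{j+1}=1\mid\overline{L}_j=\overline{l}_j,Y_j=0,\forall i\le j:A_i=1)/\Pr(Y_{j+1}=1\mid\overline{L}_j=\overline{l}_j,Y_j=0,\forall i\le j:A_i=0)$ for every $j$ and $\overline{l}_j$. On the event $\{\forall i\le j:A_i=1\}$ the per-protocol consistency assumption gives $Y_{j+1}=Y_{j+1}(\overline{1})$ and $Y_j=Y_j(\overline{1})$, and analogously with $\overline{0}$ on $\{\forall i\le j:A_i=0\}$; substituting these identities into numerator and denominator turns $\theta$ into the stated counterfactual hazard ratio. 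No exchangeability is needed because the target estimand already conditions on the realised treatment path itself.

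The main obstacle I expect is the bookkeeping in the middle step: keeping straight the equivalence $\{J=j\}=\{Y_j=0,Y_{j+1}=1\}$, invoking each of the three clauses of \PPRiskSetSamplingMatching{} at exactly the right point, and verifying that the residual weights genuinely assemble into a weight $q_j(\overline{l}_j,m)$ that is symmetric under swapping the two protocols, so that the homogeneity constant $\theta$ can legitimately be extracted from the double sum. Everything else is a faithful transcription of the argument for Theorem~\ref{th:DensitySamplingMatchingITT}, with ``adherence through $t_J$'' taking the place of ``baseline exposure''.
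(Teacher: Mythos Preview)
Your proposal is correct and follows essentially the same approach as the paper's proof: expand over $(j,\overline{l}_j,m)$ via the equivalence $\{J=j\}=\{Y_j=0,Y_{j+1}=1\}$, invoke the three clauses of \PPRiskSetSamplingMatching{} to produce a protocol-symmetric weight $q_j(\overline{l}_j,m)$, factor out $\theta$ by \PPHomogeneousRateRatios{}, and finish with consistency alone. Your observation that no exchangeability is needed (because the estimand conditions on the realised treatment path) matches the paper exactly.
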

\begin{proof}
Let $J=\max\{k=0,1,...,K:Y_k=0\}$. Then, for $a=0,1$,
\begin{align*}
&\Exp\Bigg[\sum_{i=1}^MI(A'_i=1-a,A_0=a)\Bigg|Y_K=1,\forall j\le J:A_j=A_0, M>0\Bigg]\\
&\qquad=\sum_{j=0}^{K-1}\sum_{\overline{l}_j}\Exp\Bigg[\sum_{i=1}^MI(A'_i=1-a,A_0=a)\Bigg|\overline{L}_j=\overline{l}_j,J=j,Y_K=1,\forall j\le J:A_j=A_0, M>0\Bigg]\\&\qquad\qquad{}\times\Pr(\overline{L}_j=\overline{l}_j,J=j|Y_K=1,\forall{i\le J}:A_i=A_0,M>0)\\
&\qquad=\sum_{j=0}^{K-1}\sum_{\overline{l}_j}\Exp\Bigg[\sum_{i=1}^MI(A'_i=1-a,A_0=a)\Bigg|\overline{L}_j=\overline{l}_j,Y_j=0,Y_{j+1}=1,\forall j\le J:A_j=A_0, M>0\Bigg]\\&\qquad\qquad{}\times\Pr(\overline{L}_j=\overline{l}_j,Y_j=0,Y_{j+1}=1|Y_K=1,\forall{i\le J}:A_i=A_0,M>0)\\
&\qquad=\sum_{m>0}\sum_{j=0}^{K-1}\sum_{\overline{l}_j}\Exp\Bigg[\sum_{i=1}^mI(A'_i=1-a,A_0=a)\Bigg|\overline{L}_j=\overline{l}_j,Y_j=0,Y_{j+1}=1,\forall j\le J:A_j=A_0, M=m\Bigg]\\&\qquad\qquad{}\times\Pr(M=m,\overline{L}_j=\overline{l}_j,Y_j=0,Y_{j+1}=1|\forall j\le J:A_j=A_0,M>0)\\
&\qquad=\sum_{m>0}\sum_{u=1}^m\sum_{j=0}^{K-1}\sum_{\overline{l}_j}\Exp\Bigg[I(A'_u=1-a,A_0=a)\Bigg|\overline{L}_j=\overline{l}_j,Y_j=0,Y_{j+1}=1,\forall j\le J:A_j=A_0, M=m\Bigg]\\&\qquad\qquad{}\times\Pr(M=m,\overline{L}_j=\overline{l}_j,Y_j=0,Y_{j+1}=1|\forall j\le J:A_j=A_0,M>0)\\
&\qquad=\sum_{m>0}\sum_{u=1}^m\sum_{j=0}^{K-1}\sum_{\overline{l}_j}\Pr(A'_u=1-a|\overline{L}_j=\overline{l}_j,Y_j=0,Y_{j+1}=1,\forall j\le J:A_j=a, M=m)\\&\qquad\qquad{}\times\Pr(M=m,A_0=a,\overline{L}_j=\overline{l}_j,Y_j=0,Y_{j+1}=1|\forall j\le J:A_j=A_0,M>0)\\
&\qquad=\sum_{m>0}\sum_{u=1}^m\sum_{j=0}^{K-1}\sum_{\overline{l}_j}\Pr(A_0=1-a|Y_j=0,\overline{L}_j=\overline{l}_j,\forall i\le j:A_i=A_0)\\&\qquad\qquad{}\times\Pr(M=m,A_0=a,\overline{L}_j=\overline{l}_j,Y_j=0,Y_{j+1}=1|\forall j\le J:A_j=A_0,M>0) \tag{by \PPRiskSetSamplingMatching{}}\\
&\qquad=\sum_{m>0}\sum_{u=1}^m\sum_{j=0}^{K-1}\sum_{\overline{l}_j}\Pr(A_0=1-a|Y_j=0,\overline{L}_j=\overline{l}_j,\forall i\le j:A_i=A_0)\\&\qquad\qquad\times\Pr(M=m,\overline{L}_j=\overline{l}_j,A_0=a,Y_{j+1}=1,Y_{j}=0,\forall{i\le j}:A_i=A_0)\\&\qquad\qquad\times\Pr(Y_K=1,\forall{i}:(Y_i=0\Rightarrow A_i=A_0),M>0)^{-1}\\
&\qquad=\sum_{m>0}\sum_{u=1}^m\sum_{j=0}^{K-1}\sum_{\overline{l}_j}
\Pr(Y_{j+1}=1|\overline{L}_j=\overline{l}_j,A_0=a,Y_{j}=0,\forall{i\le j}:A_i=A_0)\\
&\qquad\qquad\times
q_{j}(\overline{l}_j,m)
\Pr(Y_K=1,\forall{i}:(Y_i=0\Rightarrow A_i=A_0),M>0)^{-1}, \tag{under \PPRiskSetSamplingMatching{}}
\end{align*}
where \begin{align*}
q_{j}(\overline{l}_j,m)&=
\Pr(M=m|\overline{L}_j=\overline{l}_j,Y_{j}=0,Y_{j+1}=1,\forall{i\le j}:A_i=A_0) \\
&\qquad\times \Pr(A_0=1-a|Y_j=0,\overline{L}_j=\overline{l}_j,\forall i\le j:A_i=A_0)\\
&\qquad\times \Pr(A_0=a|Y_{j}=0,\overline{L}_j=\overline{l}_j,\forall{i\le j}:A_i=A_0)\\
&\qquad\times \Pr(\overline{L}_j=\overline{l}_j,Y_{j}=0,\forall{i\le j}:A_i=A_0).
\end{align*}

It follows that
\begin{align*}
&\frac{\Exp\Big[\sum_{i=1}^MI(A'_i=0,A_0=1)\Big|Y_K=1,\forall j: (Y_j=0\Rightarrow A_j=A_0),M>0\Big]}{\Exp\Big[\sum_{i=1}^MI(A'_i=1,A_0=0)\Big|Y_K=1,\forall j: (Y_j=0\Rightarrow A_j=A_0),M>0\Big]}\\
&\qquad=\frac{\displaystyle \sum_{m>0}\sum_{u=1}^m\sum_{j=0}^{K-1}\sum_{\overline{l}_j}\begin{array}{l} \Pr(Y_{j+1}=1|\overline{L}_j=\overline{l}_j,A_0=1,Y_{j}=0,\forall{i\le j}:A_i=A_0)\\~~\times
q_{j}(\overline{l}_j,m)
\Pr(Y_K=1,\forall{i}:(Y_i=0\Rightarrow A_i=A_0),M>0)^{-1}\end{array}}
{\displaystyle \sum_{m>0}\sum_{u=1}^m\sum_{j=0}^{K-1}\sum_{\overline{l}_j}\begin{array}{l} \Pr(Y_{j+1}=1|\overline{L}_j=\overline{l}_j,A_0=0,Y_{j}=0,\forall{i\le j}:A_i=A_0)\\~~\times
q_{j}(\overline{l}_j,m)
\Pr(Y_K=1,\forall{i}:(Y_i=0\Rightarrow A_i=A_0),M>0)^{-1}\end{array}} \\
&\qquad = \frac{\sum_{m>0}\sum_{u=1}^m\sum_{j=0}^{K-1}\sum_{\overline{l}_j}\Pr(Y_{j+1}=1|\overline{L}_j=\overline{l}_j,A_0=1,Y_{j}=0,\forall{i\le j}:A_i=A_0)
q_{j}(\overline{l}_j,m)}
{\sum_{m>0}\sum_{u=1}^m\sum_{j=0}^{K-1}\sum_{\overline{l}_j}\Pr(Y_{j+1}=1|\overline{L}_j=\overline{l}_j,A_0=1,Y_{j}=0,\forall{i\le j}:A_i=A_0)
q_{j}(\overline{l}_j,m)} \\
&\qquad = \theta\frac{\sum_{m>0}\sum_{u=1}^m\sum_{j=0}^{K-1}\sum_{\overline{l}_j}\Pr(Y_{j+1}=1|\overline{L}_j=\overline{l}_jY_{j}=0,\forall{i\le j}:A_i=0)
q_{j}(\overline{l}_j,m)}
{\sum_{m>0}\sum_{u=1}^m\sum_{j=0}^{K-1}\sum_{\overline{l}_j}\Pr(Y_{j+1}=1|\overline{L}_j=\overline{l}_j,Y_{j}=0,\forall{i\le j}:A_i=0)
q_{j}(\overline{l}_j,m)} \tag{by \PPHomogeneousRateRatios{}}\\
&\qquad = \theta.
\end{align*}
The desired results follows by consistency.
\end{proof}

\end{myAppendix}

\begin{myAppendix}[: Parametric identification by conditional logistic regression for exact or partial 1:$M$ matching] \label{app:MatchingPartial}
We now allow for the possibility that cases ($Y_K=1$) are matched to $M\ge 0$ controls on only part of $L_0$. That part of $L_0$ on which exact matching is done will be denoted $L^\ast_0$; the other part is denoted $L'_0$, so that $L_0=(L^\ast_0,L'_0)$.
The identification result below rests on the assumption that cases are assigned $M\ge 0$ pairs $(A'_i,L'_i)$ of baseline exposure and baseline covariate data, $i=1,...,M$, subject to
\begin{gather*}
\left.\begin{array}{c}
\Pr(M>0|Y_K=1)>0\text{~~and}\\
M\CI (A_0,L_0)|(L_0^\ast,Y_K=1)\text{~~and}\\
\forall l,l',a:\Pr(A'_i=a,L'_i=l'|L_0^\ast=l,L'_0,A_0,Y_K=1,M,M>0) = \qquad\qquad\qquad\qquad\\\qquad\qquad\qquad\qquad\Pr(A_0=a,L'_0=l'|L_0^\ast=l,Y_K=0)\text{~~and}\\
(L'_0,A_0),(L'_1,A'_1),...,(L'_M,A'_M)\text{~are mutually independent given~} (L_0^\ast,Y_K=1,M>0).
\end{array}\right\}\tag{\SurvivorSamplingMatching{}$^\ast$}
\end{gather*}
It is assumed below that the variables are discrete with finite support for simplicity. The results can however be extended to more general distributions.

\begin{theorem}[\titleConditionalLogisticRegression]\label{th:ConditionalLogisticRegression}
Suppose BCE and \SurvivorSamplingMatching{}$^\ast$ hold.
For some known real-valued functions $f_j$, $j=1,...,p$, assume the following model:
\begin{align*}
\logit\Pr(Y_K(a)=1|L_0) &= \alpha+ \sum_{j=1}^pf_j(a,L_0^\ast,L'_0)\beta_j \tag{Outcome Model}
\end{align*}
For $i=0,...,M$, let $X_{i,j}=f_j(A'_i,L_0^\ast,L'_i)-f_j(A_0,L_0^\ast,L'_0)$, with $A'_0=A_0$, and assume for any $\gamma_1,...,\gamma_p\in\mathbb{R}$, not all zero, that
\begin{align*}
\Pr\Bigg(\bigvee_{i=1}^M\Bigg[\sum_{j=1}^p \gamma_{j} X_{i,j}\ne 0\Bigg]\Bigg|Y_K=1,M>0\Bigg)>0,\tag{Linear Independence}
\end{align*}
where $\bigvee$ denotes the logical OR operator (i.e., given any indexed collection $(P_i)_{i\in I}$ of propositions, $\bigvee_{i\in I}P_i$ is the proposition that $P_i$ is true for at least one $i\in I$).
Then,
\begin{align*}
\Exp\Bigg[-\log \Bigg(1+\sum_{i=1}^M\exp\Bigg[\sum_{j=1}^pX_{i,j}\tilde{\beta}_j\Bigg]\Bigg)^{-1}\Bigg|Y_K=1,M>0\Bigg]
\end{align*}
is uniquely maximized at $\tilde{\beta}=\beta$.
\end{theorem}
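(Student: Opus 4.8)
The integrand is, up to sign, the $1$:$M$ conditional--logistic log-likelihood contribution of a single matched set: for a set whose members are indexed $0,1,\dots,M$ (with $0$ the observed case, $A'_0=A_0$, $L'_0=L'_0$) and carry baseline exposure--covariate values $(A'_i,L'_i)$ sharing the matched covariate $L_0^\ast$, put $\eta_i(\tilde\beta)=\sum_{j=1}^p f_j(A'_i,L_0^\ast,L'_i)\tilde\beta_j$ and $P_i(\tilde\beta)=\exp(\eta_i(\tilde\beta))/\sum_{i'=0}^M\exp(\eta_{i'}(\tilde\beta))$. Then $P_0(\tilde\beta)=\bigl(1+\sum_{i=1}^M\exp[\sum_j X_{i,j}\tilde\beta_j]\bigr)^{-1}$ is exactly the factor in the statement, and the quantity to be maximized is $\Exp[\log P_0(\tilde\beta)\mid Y_K=1,M>0]$. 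The plan is: (i) show that at the truth, $P_i(\beta)$ is the \emph{actual} conditional probability that member $i$ is the case given the matched set's unordered value collection; (ii) conclude by a Kullback--Leibler (Gibbs) inequality, reading uniqueness off \emph{Linear Independence}.

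\textbf{Step 1: the true within-set posterior is a softmax of outcome odds.} Work conditionally on $L_0^\ast=l$, on $M$, and on $\{Y_K=1,M>0\}$. By \SurvivorSamplingMatching{}$^\ast$ the values $(A'_0,L'_0),\dots,(A'_M,L'_M)$ are mutually independent, with $(A'_0,L'_0)$ distributed as $(A_0,L'_0)\mid(L_0^\ast=l,Y_K=1)$ and the rest i.i.d.\ as $(A_0,L'_0)\mid(L_0^\ast=l,Y_K=0)$, and $M$ carries no further information since $M\CI(A_0,L_0)\mid(L_0^\ast,Y_K=1)$. A Bayes flip within the stratum $L_0^\ast=l$ gives, for any value $v=(a,l')$,
\[
\frac{\Pr\bigl((A_0,L'_0)=v\mid L_0^\ast=l,Y_K=1\bigr)}{\Pr\bigl((A_0,L'_0)=v\mid L_0^\ast=l,Y_K=0\bigr)}=\mathrm{Odds}\bigl(Y_K=1\mid A_0=a,L_0=(l,l')\bigr)\times c(l),
\]
with $c(l)=\Pr(Y_K=0\mid L_0^\ast=l)/\Pr(Y_K=1\mid L_0^\ast=l)$ free of $v$. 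By consistency and baseline conditional exchangeability, $\Pr(Y_K=1\mid A_0=a,L_0)=\Pr(Y_K(a)=1\mid A_0=a,L_0)=\Pr(Y_K(a)=1\mid L_0)$, and the Outcome Model then gives $\mathrm{Odds}(Y_K=1\mid A_0=a,L_0)=\exp(\alpha+\sum_j f_j(a,l,l')\beta_j)$. Feeding these density ratios into Bayes' rule over the $M+1$ exchangeable ``member $i$ is the case'' hypotheses --- the product structure makes each posterior weight proportional to the density ratio at $v_i$ times a factor common to all members --- the nuisance factor $c(l)$ and the intercept $\alpha$ cancel in the normalization, so that, conditionally on $L_0^\ast$, $M$ and the unordered collection of values, the probability that member $i$ is the case equals $P_i(\beta)$; in particular the observed case receives $P_0(\beta)$. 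Ties among members cause no difficulty: the same normalization holds with values counted with multiplicity.

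\textbf{Step 2: KL argument and uniqueness.} Condition on $L_0^\ast$, $M$ and the unordered collection of the $M+1$ baseline values. Under the data law the case occupies the slot carrying value $v_i$ with probability $P_i(\beta)$ (Step 1), and on that event the integrand $\log P_0(\tilde\beta)$ of the objective equals $\log P_i(\tilde\beta)$; hence its conditional expectation is the cross-entropy $\sum_{i=0}^M P_i(\beta)\log P_i(\tilde\beta)$. Averaging over $L_0^\ast$, $M$ and the collection,
\[
\Exp[\log P_0(\beta)\mid Y_K=1,M>0]-\Exp[\log P_0(\tilde\beta)\mid Y_K=1,M>0]=\Exp\Bigl[\textstyle\sum_{i=0}^M P_i(\beta)\log\tfrac{P_i(\beta)}{P_i(\tilde\beta)}\,\Big|\,Y_K=1,M>0\Bigr]\ge 0
\]
by nonnegativity of the Kullback--Leibler divergence, so $\beta$ maximizes the objective. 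For uniqueness, equality forces the within-set distributions $(P_i(\beta))_i$ and $(P_i(\tilde\beta))_i$ to coincide almost surely on $\{Y_K=1,M>0\}$; since a softmax pins down its logits up to an additive constant, this means $\eta_i(\beta)-\eta_i(\tilde\beta)$ is constant in $i$, i.e.\ $\sum_j X_{i,j}(\beta_j-\tilde\beta_j)=0$ for every $i=1,\dots,M$ almost surely (as $X_{0,j}=0$). If $\tilde\beta\ne\beta$, applying this with $\gamma=\beta-\tilde\beta\ne0$ contradicts \emph{Linear Independence}, $\Pr\bigl(\bigvee_{i=1}^M[\sum_j\gamma_j X_{i,j}\ne0]\mid Y_K=1,M>0\bigr)>0$. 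Hence $\tilde\beta=\beta$ is the unique maximizer.

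\textbf{Main obstacle.} The substance is Step 1: verifying that the mechanism \SurvivorSamplingMatching{}$^\ast$ --- case drawn from the $Y_K=1$ stratum, controls i.i.d.\ from the $Y_K=0$ stratum, all within $L_0^\ast$, with the stated mutual independence --- collapses, after conditioning on the unordered exposure--covariate values, to exactly the softmax-of-odds posterior, with the stratum factor $c(L_0^\ast)$ and the intercept $\alpha$ both dropping out. Handling ties and justifying the iterated-expectation bookkeeping is routine given discreteness, and the Gibbs step is standard.
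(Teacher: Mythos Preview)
Your proof is correct and takes a genuinely different route from the paper's. Both arguments hinge on the same structural computation---that under \SurvivorSamplingMatching{}$^\ast$, BCE, consistency, and the Outcome Model, the conditional probability that a given member of the matched set is the case, given the unordered collection of $(A'_i,L'_i)$ values, is exactly the softmax $P_i(\beta)$ (with $\alpha$ and the stratum factor $c(L_0^\ast)$ cancelling). The paper then proceeds by calculus: it computes the Hessian of the objective, bounds it below via Cauchy--Schwarz to obtain strict concavity from Linear Independence, and separately verifies that the gradient vanishes at $\tilde\beta=\beta$ by a permutation-symmetry argument. You instead use the softmax identification directly in an information-theoretic step: conditioning on the unordered collection turns the objective into a cross-entropy $\sum_i P_i(\beta)\log P_i(\tilde\beta)$, whence Gibbs' inequality gives both the maximum at $\beta$ and, via the equality case, that any maximizer must satisfy $\sum_j X_{i,j}(\beta_j-\tilde\beta_j)=0$ for all $i$ almost surely---which Linear Independence rules out unless $\tilde\beta=\beta$.

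Your approach is shorter and avoids the somewhat involved Hessian bound; it also makes transparent \emph{why} the conditional likelihood identifies $\beta$ (it is the population Kullback--Leibler projection). The paper's approach, on the other hand, delivers strict concavity as a byproduct, which is useful if one later wants to argue about numerical optimization or asymptotics. One small point of exposition: in Step~2 you silently re-index so that $i$ labels values in the collection rather than positions, which makes the sentence ``on that event the integrand $\log P_0(\tilde\beta)$ equals $\log P_i(\tilde\beta)$'' read awkwardly against your earlier definition of $P_i$ via positions; the argument is sound, but it would be cleaner to state explicitly that, given the collection, $P_0(\tilde\beta)$ depends only on which value sits at position $0$, and that this value equals $v$ with probability proportional to $\exp(\eta(v,\beta))$.
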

\begin{proof}
We first demonstrate that \begin{align*}
\Exp\Bigg[-\log \Bigg(1+\sum_{i=1}^M\exp\Bigg[\sum_{j=1}^pX_{i,j}\tilde{\beta}_j\Bigg]\Bigg)^{-1}\Bigg|Y_K=1,M>0\Bigg]
\end{align*}
has at most one maximum by showing that it is strictly concave as a function of $\tilde{\beta}$. 
Let $X=(X_1,...,X_M)$ and $X_i=(X_{i,1},...,X_{i,p})$, $i=1,...,M$. To show that function $f$,
\begin{align*}
f(\beta) &= \Exp\Bigg[\log \Bigg(1+\sum_{i=1}^M\exp\Bigg[\sum_{j=1}^pX_{i,j}\beta_j\Bigg]\Bigg)^{-1}\Bigg|Y_K=1,M>0\Bigg]\\
&= \sum_{m>0}\sum_{x} \log \Bigg(1+\sum_{i=1}^m\exp\Bigg[\sum_{j=1}^px_{i,j}\beta_j\Bigg]\Bigg)^{-1} \Pr(X=x|Y_K=1,M=m)\Pr(M=m|Y_K=1,M>0),
\end{align*}
is convex (and $-f$ concave) it suffices to show that its Hessian is positive semidefinite, i.e., that $\sum_{t=1}^p\sum_{u=1}^p \beta_k\beta_lH_{k,l}(\beta)\ge 0$ for all $\beta\in\mathbb{R}^p$, where 
\begin{align*}
H_{k,l}(\beta) &= \frac{\partial}{\partial \beta_l}\frac{\partial}{\partial \beta_k} f(\beta).
\end{align*}
Positive definiteness of the Hessian, i.e., $\sum_{k=1}^p\sum_{l=1}^p \beta_k\beta_lH_{k,l}(\beta)> 0$ for all $\beta\in\mathbb{R}^p$ such that $\beta_k\ne 0$ for some $k\in\{1,...,p\}$, implies strict convexity of $f$ (and $-f$ strictly concave).

Letting $g(X_i,\beta)=\exp\big\{\sum_{j=1}^pX_{i,j}\beta_j\big\}$ for $i=1,...,M$, we have
\begin{align*}
H_{k,l}(\beta) &= \frac{\partial}{\partial \beta_l}\frac{\partial}{\partial \beta_k} f(\beta)\\
&= \frac{\partial}{\partial \beta_l} \sum_{m>0}\sum_{x} \frac{\sum_{i=1}^{m}x_{i,k}g(x_i,\beta)}{1+\sum_{i=1}^{m}g(x_i,\beta)} \Pr(X=x|Y_K=1,M=m)\Pr(M=m|Y_K=1,M>0)\\
&= \frac{\partial}{\partial \beta_l} \sum_{m>0}\sum_{x} \frac{\sum_{i=1}^{m}x_{i,k}g(x_i,\beta)}{1+\sum_{i=1}^{m}g(x_i,\beta)} \Pr(X=x|Y_K=1,M=m)\Pr(M=m|Y_K=1,M>0)\\
&= \sum_{m>0}\sum_{x} \Bigg(1+\sum_{i=1}^{m}g(x_i,\beta)\Bigg)^{-2}\Bigg[\Bigg(1+\sum_{i=1}^{m}g(x_i,\beta)\Bigg)\Bigg(\sum_{i=1}^{m}X_{i,k}X_{i,l}g(x_i,\beta)\Bigg)\\&\qquad{}-\Bigg(\sum_{i=1}^{m}X_{i,k}g(x_i,\beta)\Bigg)\Bigg(\sum_{i=1}^{m}X_{i,l}g(x_i,\beta)\Bigg)\Bigg] \\&\qquad{}\times\Pr(X=x|Y_K=1,M=m)\Pr(M=m|Y_K=1,M>0),
\end{align*}
so that, with $v_{i}=\sqrt{g(x_i,\beta)}$ and $w_{i}=\sum_{j=1}^px_{i,j}\beta_j\sqrt{g(x_i,\beta)}$, 
\begin{align*}
&\sum_{k=1}^p\sum_{l=1}^p\beta_k\beta_l H_{k,l}(\beta) \\&\qquad= \sum_{m>0}\sum_{x}\frac{\Pr(X=x|Y_K=1,M=m)\Pr(M=m|Y_K=1,M>0)}{\big(1+\sum_{i=1}^mg(x_i,\beta)\big)^2}\\&\qquad\qquad{}\times\Bigg[\sum_{k=1}^p\sum_{l=1}^p\beta_k\beta_l\Bigg(1+\sum_{i=1}^mg(x_i,\beta)\Bigg)\Bigg(\sum_{i=1}^mx_{i,k}x_{i,l}g(x_i,\beta)\Bigg)\\&\qquad\qquad{}-\sum_{k=1}^p\sum_{l=1}^p\beta_k\beta_l\Bigg(\sum_{i=1}^mx_{i,k}g(x_i,\beta)\Bigg)\Bigg(\sum_{i=1}^mx_{i,l}g(x_i,\beta)\Bigg)\Bigg]\\
&\qquad=\sum_{m>0}\sum_{x}\frac{\Pr(X=x|Y_K=1,M=m)\Pr(M=m|Y_K=1,M>0)}{\big(1+\sum_{i=1}^mg(x_i,\beta)\big)^2}\\&\qquad\qquad{}\times\Bigg[\Bigg(1+\sum_{i=1}^mg(x_i,\beta)\Bigg)\Bigg(\sum_{i=1}^mg(x_i,\beta)\Bigg(\sum_{k=1}^p\beta_kx_{i,k}\Bigg)\Bigg(\sum_{l=1}^p\beta_lx_{i,l}\Bigg)\Bigg)\\&\qquad\qquad{}-\Bigg(\sum_{i=1}^m\sum_{k=1}^p\beta_kx_{i,k}g(x_i,\beta)\Bigg)\Bigg(\sum_{i=1}^m\sum_{l=1}^p\beta_lx_{i,l}g(x_i,\beta)\Bigg)\Bigg]\\
&\qquad=\sum_{m>0}\sum_{x}\frac{\Pr(X=x|Y_K=1,M=m)\Pr(M=m|Y_K=1,M>0)}{\big(1+\sum_{i=1}^mg(x_i,\beta)\big)^2}\\&\qquad\qquad{}\times\Bigg[\Bigg(1+\sum_{i=1}^mg(x_i,\beta)\Bigg)\Bigg(\sum_{i=1}^m\Bigg(\sum_{k=1}^p\beta_kx_{i,k}\sqrt{g(x_i,\beta)}\Bigg)^2\Bigg)-\Bigg(\sum_{i=1}^m\sum_{k=1}^p\beta_kx_{i,k}g(x_i,\beta)\Bigg)^2\Bigg]\\
&\qquad=\sum_{m>0}\sum_{x}\frac{\Pr(X=x|Y_K=1,M=m)\Pr(M=m|Y_K=1,M>0)}{\big(1+\sum_{i=1}^mg(x_i,\beta)\big)^2}\\&\qquad\qquad{}\times\Bigg[\sum_{i=1}^m\Bigg(\sum_{k=1}^p\beta_kx_{i,k}\sqrt{g(x_i,\beta)}\Bigg)^2+\Bigg(\sum_{i=1}^mv_{i,j}^2\Bigg)\Bigg(\sum_{i=1}^mw_{i,j}^2\Bigg)-\Bigg(\sum_{i=1}^mv_{i,j}v_{i,j}\Bigg)^2\Bigg] \\
&\qquad\ge\sum_{m>0}\sum_{x}\frac{\Pr(X=x|Y_K=1,M=m)\Pr(M=m|Y_K=1,M>0)}{\big(1+\sum_{i=1}^mg(x_i,\beta)\big)^2}\sum_{i=1}^m\Bigg(\sum_{k=1}^p\beta_kx_{i,k}\sqrt{g(x_i,\beta)}\Bigg)^2. \tag{by the Cauchy-Schwarz inequality}
\end{align*}
Now, 
\begin{align*}
&\sum_{m>0}\sum_{x}\frac{\Pr(X=x|Y_K=1,M=m)\Pr(M=m|Y_K=1,M>0)}{\big(1+\sum_{i=1}^mg(x_i,\beta)\big)^2}\sum_{i=1}^m\Bigg(\sum_{k=1}^p\beta_kx_{i,k}\sqrt{g(x_i,\beta)}\Bigg)^2 \\&\qquad= \sum_{m>0}\sum_{x}\frac{\Pr(X=x|Y_K=1,M=m)\Pr(M=m|Y_K=1,M>0)}{\big(1+\sum_{i=1}^mg(x_i,\beta)\big)^2}\sum_{i=1}^mg(x_i,\beta)\Bigg(\sum_{k=1}^p\beta_kx_{i,k}\Bigg)^2\\
\\&\qquad= \Exp\Bigg[\Bigg(1+\sum_{i=1}^Mg(X_i,\beta)\Bigg)^{-2}\sum_{i=1}^Mg(X_i,\beta)\Bigg(\sum_{k=1}^p\beta_kX_{i,k}\Bigg)^2\Bigg|Y_K=1,M>0\Bigg]\\
&\qquad\ge0
\end{align*}
with strict inequality under Linear Independence. Thus, \begin{align*}
\Exp\Bigg[-\log \Bigg(1+\sum_{i=1}^M\exp\Bigg[\sum_{j=1}^pX_{i,j}\tilde{\beta}_j\Bigg]\Bigg)^{-1}\Bigg|Y_K=1,M>0\Bigg]
\end{align*}
has at most one maximum.

It remains to be shown that \begin{align*}
\Exp\Bigg[-\log \Bigg(1+\sum_{i=1}^M\exp\Bigg[\sum_{j=1}^pX_{i,j}\tilde{\beta}_j\Bigg]\Bigg)^{-1}\Bigg|Y_K=1,M>0\Bigg]
\end{align*}
is maximized at $\tilde{\beta}=\beta$, i.e., $\partial/\partial \tilde{\beta}_k f(\tilde{\beta}) = 0$ for all $k=1,...,p$ at $\tilde{\beta}=\beta$.

Now,
\begin{align*}
\frac{\partial}{\partial \tilde{\beta}_k} f(\tilde{\beta}) &= \Exp\Bigg[ \frac{\sum_{i=1}^{M}X_{i,k}g(X_i,\tilde{\beta})}{1+\sum_{i=1}^{m}g(X_i,\tilde{\beta})}\Bigg|Y_K=1,M>0\Bigg]\\
&= \sum_{l^\ast}\sum_{m>0}\Exp\Bigg[\frac{\sum_{i=1}^{m}X_{i,k}g(X_i,\tilde{\beta})}{1+\sum_{i=1}^{m}g(X_i,\tilde{\beta})}\Bigg|L^\ast_0=l^\ast,Y_K=1,M=m\Bigg]\Pr(L^\ast_0=l^\ast,M=m|,Y_K=1,M>0),
\end{align*}
where 
\begin{align*}
&\Exp\Bigg[\frac{\sum_{i=1}^{m}X_{i,k}g(X_i,\tilde{\beta})}{1+\sum_{i=1}^{m}g(X_i,\tilde{\beta})}\Bigg|L^\ast_0=l^\ast,Y_K=1,M=m\Bigg]\\
&\qquad= \sum_{l_0,...,l_m}\sum_{a_0,...,a_m}\frac{\sum_{i=1}^{m}[f_k(a_i,l^\ast,l_i)-f_k(a_0,l^\ast,l_0)]\exp\big\{\sum_{k=1}^p[f_k(a_i,l^\ast,l_i)-f_k(a_0,l^\ast,l_0)]\tilde{\beta}_k\big\}}{1+\sum_{i=1}^{m}\exp\big\{\sum_{k=1}^p[f_k(a_i,l^\ast,l_i)-f_k(a_0,l^\ast,l_0)]\tilde{\beta}_k\big\}}\\&\qquad\qquad{}\times\Pr(A_0=a_0,A'_1=a_1,...,A_m=a_m,L'_0=l_0,...,L'_m=l_m|L^\ast_0=l^\ast,Y_K=1,M=m)\\
&\qquad= \sum_{l_0,...,l_m}\sum_{a_0,...,a_m}\frac{\sum_{i=1}^{m}[f_k(a_i,l^\ast,l_i)-f_k(a_0,l^\ast,l_0)]\exp\big\{\sum_{k=1}^p[f_k(a_i,l^\ast,l_i)-f_k(a_0,l^\ast,l_0)]\tilde{\beta}_k\big\}}{1+\sum_{i=1}^{m}\exp\big\{\sum_{k=1}^p[f_k(a_i,l^\ast,l_i)-f_k(a_0,l^\ast,l_0)]\tilde{\beta}_k\big\}}\\&{}\times h(a_0,...,a_M,l_0,...,l_M)\Pr\Big(A_0=a_0,A'_1=a_1,...,A_M=a_M,L'_0=l_0,...,L'_m=l_m\Big|\\&\quad{}\bigvee_{\sigma}\big[(A_0=a_{\sigma(0)},L'_0=_{\sigma(0)},A'_1=a_{\sigma(1)},L'_1=_{\sigma(1)},...,A_m=a_{\sigma(m)},L'_m=_{\sigma(m)})\big],L^\ast_0=l^\ast,Y_K=1,M=m\Big),
\end{align*}
where permutation $\sigma$ denotes a bijection from $\{0,1,...,M\}$ to itself and \begin{align*}
&h(a_0,...,a_M,l_0,...,l_M)\\&\qquad=\Pr\Big(\bigvee_{\sigma}\big[(A_0=a_{\sigma(0)},L'_0=_{\sigma(0)},A'_1=a_{\sigma(1)},L'_1=_{\sigma(1)},...,A_m=a_{\sigma(m)},L'_m=_{\sigma(m)})\big]\Big|\\&\qquad\qquad{}L^\ast_0=l^\ast,Y_K=1,M=m\Big).
\end{align*}

Next, let $B_0=(L'_0,A_0)$ and $B_i=(L'_i,A'_i)$, $i=1,2,...,M$.  Let $b_i=(l_i,a_i)$ for $i=0,...,M$.
We have
\begin{align*}
&\Pr\Bigg(B_0=b_0,,...,B_M=b_M\Bigg|\bigvee_{\sigma}\big[(B_0,...,B_M)=(b_{\sigma(0)},...,b_{\sigma(M)})\big],L_0^\ast,Y_K=1,M,M>0\Bigg)\\
&\qquad=\frac{\Pr(B_0=b_0,...,B_M=b_M|L_0^\ast,Y_K=1,M>0)}{\Pr\Big(\bigvee_{\sigma}\big[B_0=b_{\sigma(0)},...,B_M=a_{\sigma(M)}\big]\Big|L_0^\ast,Y_K=1,M,M>0\Big)}\\
&\qquad\propto\frac{\Pr(B_0=b_0,...,B_M=b_M|L_0^\ast,Y_K=1,M>0)}{\sum_{\sigma}\Pr\Big(B_0=b_{\sigma(0)},...,B_M=a_{\sigma(M)}\Big|L_0^\ast,Y_K=1,M,M>0\Big)}\\
&\qquad=\frac{\prod_{i=0}^M\Pr(B_i=b_i|L_0^\ast,Y_K=1,M,M>0)}{\sum_{\sigma}\prod_{i=0}^M\Pr(B_i=b_{\sigma(i)}|L_0^\ast,Y_K=1,M,M>0)}\tag{by mutual independence of \SurvivorSamplingMatching{}$^\ast$}\\
&\qquad=\frac{\Pr(B_0=b_0|L_0^\ast,Y_K=1)\prod_{i=1}^M\Pr(B_0=b_i|L_0^\ast,Y_K=0)}{\sum_{\sigma}\Pr(B_0=b_{\sigma(0)}|L_0^\ast,Y_K=1)\prod_{i=1}^M\Pr(B_0=b_{\sigma(i)}|L_0^\ast,Y_K=0)}\tag{by \SurvivorSamplingMatching{}$^\ast$}\\
&\qquad=\frac{\Pr(Y_K=1|B_0=b_0,L_0^\ast)\prod_{i=1}^M[1-\Pr(Y_K=1|B_0=b_i,L_0^\ast)]}{\sum_{\sigma}\Pr(Y_K=1|B_0=b_{\sigma(0)},L_0^\ast)\prod_{i=1}^M[1-\Pr(Y_K=1|B_0=b_{\sigma(i)},L_0^\ast)]} \\
&\qquad=\frac{\Pr(Y_K=1|L_0=(L_0^\ast,l_0),A_0=a_0)\prod_{i=1}^M[1-\Pr(Y_K=1|L_0=(L_0^\ast,l_i),A_0=a_i)]}{\sum_{\sigma}\Pr(Y_K=1|L_0=(L_0^\ast,l_{\sigma(0)}),A_0=a_{\sigma(0)})\prod_{i=1}^M[1-\Pr(Y_K=1|L_0=(L_0^\ast,l_{\sigma(i)}),A_0=a_{\sigma(i)})]} \\
&\qquad=\frac{\displaystyle\frac{\Pr(Y_K=1|L_0=(L_0^\ast,l_0),A_0=a_0)}{1-\Pr(Y_K=1|L_0=(L_0^\ast,l_0),A_0=a_0)}\prod_{i=0}^M[1-\Pr(Y_K=1|L_0=(L_0^\ast,l_i),A_0=a_i)]}{\displaystyle\sum_{\sigma}\frac{\Pr(Y_K=1|L_0=(L_0^\ast,l_{\sigma(0)}),A_0=a_{\sigma(0)})}{1-\Pr(Y_K=1|L_0=(L_0^\ast,l_{\sigma(0)}),A_0=a_{\sigma(0)})}\prod_{i=0}^M[1-\Pr(Y_K=1|L_0=(L_0^\ast,l_{\sigma(i)}),A_0=a_{\sigma(i)})]} \\
&\qquad=\frac{\displaystyle\frac{\Pr(Y_K=1|L_0=(L_0^\ast,l_0),A_0=a_0)}{1-\Pr(Y_K=1|L_0=(L_0^\ast,l_0),A_0=a_0)}}{\displaystyle\sum_{\sigma}\frac{\Pr(Y_K=1|L_0=(L_0^\ast,l_{\sigma(0)}),A_0=a_{\sigma(0)})}{1-\Pr(Y_K=1|L_0=(L_0^\ast,l_{\sigma(0)}),A_0=a_{\sigma(0)})}} \\
&\qquad\propto\frac{\displaystyle\frac{\Pr(Y_K=1|L_0=(L_0^\ast,l_0),A_0=a_0)}{1-\Pr(Y_K=1|L_0=(L_0^\ast,l_0),A_0=a_0)}}{\displaystyle\sum_{i=0}^M\frac{\Pr(Y_K=1|L_0=(L_0^\ast,l_i),A_0=a_i)}{1-\Pr(Y_K=1|L_0=(L_0^\ast,l_i),A_0=a_i)}} \\
&\qquad=\frac{\displaystyle\frac{\expit\big\{\alpha+\sum_{j=1}^pf_j(a_0,L_0^\ast,l_0)\beta_j\big\}}{1-\expit\big\{\alpha+\sum_{j=1}^pf_j(a_0,L_0^\ast,l_0)\beta_j\big\}}}{\displaystyle\sum_{i=0}^M\frac{\expit\big\{\alpha+\sum_{j=1}^pf_j(a_i,L_0^\ast,l_i)\beta_j\big\}}{1-\expit\big\{\alpha+\sum_{j=1}^pf_j(a_i,L_0^\ast,l_i)\beta_j\big\}}}\\
&\qquad=\frac{\exp\big[\sum_{j=1}^pf_j(a_0,L_0^\ast,l_0)\beta_j\big]}{\sum_{i=0}^M\exp\big[\sum_{j=1}^pf_j(a_i,L_0^\ast,l_i)\beta_j\big]}\\
&\qquad=\Bigg(\sum_{i=0}^M\exp\Bigg[\sum_{j=1}^p\big[f_j(a_i,L_0^\ast,l_i)-f_j(a_0,L_0^\ast,l_0)\big]\beta_j\Bigg]\Bigg)^{-1}\\
&\qquad=\Bigg(1+\sum_{i=1}^M\exp\Bigg[\sum_{j=1}^p\big[f_j(a_i,L_0^\ast,l_i)-f_j(a_0,L_0^\ast,l_0)\big]\beta_j\Bigg]\Bigg)^{-1}.
\end{align*}

Thus,
\begin{align*}
&\Exp\Bigg[\frac{\sum_{i=1}^{m}X_{i,j}g(X_i,\tilde{\beta})}{1+\sum_{i=1}^{m}g(X_i,\tilde{\beta})}\Bigg|L^\ast_0=l^\ast,Y_K=1,M=m\Bigg]\\
&\qquad\propto \sum_{l_0,...,l_m}\sum_{a_0,...,a_m}\frac{\sum_{i=1}^{m}[f_k(a_i,l^\ast,l_i)-f_k(a_0,l^\ast,l_0)]\exp\big\{\sum_{k=1}^p[f_k(a_i,l^\ast,l_i)-f_k(a_0,l^\ast,l_0)]\tilde{\beta}_k\big\}}{1+\sum_{i=1}^{m}\exp\big\{\sum_{k=1}^p[f_k(a_i,l^\ast,l_i)-f_k(a_0,l^\ast,l_0)]\tilde{\beta}_k\big\}}\\&\qquad\qquad{}\times \frac{1}{1+\sum_{i=1}^{m}\exp\big\{\sum_{k=1}^p[f_k(a_i,l^\ast,l_i)-f_k(a_0,l^\ast,l_0)]\beta_k\big\}} h(a_0,...,a_M,l_0,...,l_M) \\
&\qquad\propto \sum_{l_0,...,l_m}\sum_{a_0,...,a_m}h(a_0,...,a_M,l_0,...,l_M)\\&\qquad\qquad\times\sum_{i=1}^{m}[f_k(a_i,l^\ast,l_i)-f_k(a_0,l^\ast,l_0)]\frac{\exp\big\{\sum_{k=1}^p[f_k(a_i,l^\ast,l_i)-f_k(a_0,l^\ast,l_0)]\tilde{\beta}_k\big\}}{1+\sum_{i=1}^{m}\exp\big\{\sum_{k=1}^p[f_k(a_i,l^\ast,l_i)-f_k(a_0,l^\ast,l_0)]\tilde{\beta}_k\big\}}\\&\qquad\qquad{}\times \frac{1}{1+\sum_{i=1}^{m}\exp\big\{\sum_{k=1}^p[f_k(a_i,l^\ast,l_i)-f_k(a_0,l^\ast,l_0)]\beta_k\big\}}\\
&\qquad\propto \sum_{\{(l_0,a_0),...,(l_m,a_M)\}}h(a_0,...,a_M,l_0,...,l_M)\\&\qquad\qquad\times\sum_{u=1}^m\sum_{i=1}^{m}[f_k(a_i,l^\ast,l_i)-f_k(a_u,l^\ast,l_u)]\frac{\exp\big\{\sum_{k=1}^p[f_k(a_i,l^\ast,l_i)-f_k(a_u,l^\ast,l_u)]\tilde{\beta}_k\big\}}{1+\sum_{i=1}^{m}\exp\big\{\sum_{k=1}^p[f_k(a_i,l^\ast,l_i)-f_k(a_u,l^\ast,l_u)]\tilde{\beta}_k\big\}}\\&\qquad\qquad{}\times \frac{1}{1+\sum_{i=1}^{m}\exp\big\{\sum_{k=1}^p[f_k(a_i,l^\ast,l_i)-f_k(a_u,l^\ast,l_u)]\beta_k\big\}}\\
&\qquad=\sum_{\{(l_0,a_0),...,(l_m,a_M)\}}h(a_0,...,a_M,l_0,...,l_M)\\&\qquad\qquad\times\sum_{u=1}^m\sum_{i=1}^{m}[f_k(a_i,l^\ast,l_i)-f_k(a_u,l^\ast,l_u)]\frac{\exp\big\{\sum_{k=1}^pf_k(a_i,l^\ast,l_i)\tilde{\beta}_k\big\}}{\sum_{i=0}^{m}\exp\big\{\sum_{k=1}^pf_k(a_i,l^\ast,l_i)\tilde{\beta}_k\big\}}\\&\qquad\qquad{}\times \frac{\exp\big\{\sum_{k=1}^pf_k(a_u,l^\ast,l_u)\beta_k\big\}}{\sum_{i=0}^{m}\exp\big\{\sum_{k=1}^pf_k(a_i,l^\ast,l_i)\beta_k\big\}}\\
&\qquad=\sum_{\{(l_0,a_0),...,(l_m,a_M)\}}h(a_0,...,a_M,l_0,...,l_M)\\&\qquad\qquad\times\Bigg[\sum_{u,i\in\{1,...,m\}:i>u}[f_k(a_i,l^\ast,l_i)-f_k(a_u,l^\ast,l_u)]\frac{\exp\big\{\sum_{k=1}^pf_k(a_i,l^\ast,l_i)\tilde{\beta}_k\big\}}{\sum_{i=0}^{m}\exp\big\{\sum_{k=1}^pf_k(a_i,l^\ast,l_i)\tilde{\beta}_k\big\}}\\&\qquad\qquad{}\times \frac{\exp\big\{\sum_{k=1}^pf_k(a_u,l^\ast,l_u)\beta_k\big\}}{\sum_{i=0}^{m}\exp\big\{\sum_{k=1}^pf_k(a_i,l^\ast,l_i)\beta_k\big\}}\\&\qquad\qquad{}+
\sum_{u,i\in\{1,...,m\}:i<u}[f_k(a_i,l^\ast,l_i)-f_k(a_u,l^\ast,l_u)]\frac{\exp\big\{\sum_{k=1}^pf_k(a_i,l^\ast,l_i)\tilde{\beta}_k\big\}}{\sum_{i=0}^{m}\exp\big\{\sum_{k=1}^pf_k(a_i,l^\ast,l_i)\tilde{\beta}_k\big\}}\\&\qquad\qquad{}\times \frac{\exp\big\{\sum_{k=1}^pf_k(a_u,l^\ast,l_u)\beta_k\big\}}{\sum_{i=0}^{m}\exp\big\{\sum_{k=1}^pf_k(a_i,l^\ast,l_i)\beta_k\big\}}\Bigg]\\
&\qquad=\sum_{\{(l_0,a_0),...,(l_m,a_M)\}}h(a_0,...,a_M,l_0,...,l_M)\\&\qquad\qquad\times\Bigg[\sum_{u,i\in\{1,...,m\}:i>u}[f_k(a_i,l^\ast,l_i)-f_k(a_u,l^\ast,l_u)]\frac{\exp\big\{\sum_{k=1}^pf_k(a_i,l^\ast,l_i)\tilde{\beta}_k\big\}}{\sum_{i=0}^{m}\exp\big\{\sum_{k=1}^pf_k(a_i,l^\ast,l_i)\tilde{\beta}_k\big\}}\\&\qquad\qquad{}\times \frac{\exp\big\{\sum_{k=1}^pf_k(a_u,l^\ast,l_u)\beta_k\big\}}{\sum_{i=0}^{m}\exp\big\{\sum_{k=1}^pf_k(a_i,l^\ast,l_i)\beta_k\big\}}\\&\qquad\qquad{}-
\sum_{u,i\in\{1,...,m\}:i>u}[f_k(a_i,l^\ast,l_i)-f_k(a_u,l^\ast,l_u)]\frac{\exp\big\{\sum_{k=1}^pf_k(a_u,l^\ast,l_u)\tilde{\beta}_k\big\}}{\sum_{i=0}^{m}\exp\big\{\sum_{k=1}^pf_k(a_i,l^\ast,l_i)\tilde{\beta}_k\big\}}\\&\qquad\qquad{}\times \frac{\exp\big\{\sum_{k=1}^pf_k(a_i,l^\ast,l_i)\beta_k\big\}}{\sum_{i=0}^{m}\exp\big\{\sum_{k=1}^pf_k(a_i,l^\ast,l_i)\beta_k\big\}}\Bigg]\\
&\qquad=\sum_{\{(l_0,a_0),...,(l_m,a_M)\}}h(a_0,...,a_M,l_0,...,l_M)\\&\qquad\qquad\times\sum_{u,i\in\{1,...,m\}:i>u}[f_k(a_i,l^\ast,l_i)-f_k(a_u,l^\ast,l_u)]\\&\qquad\qquad{}\times\Bigg[\frac{\exp\big\{\sum_{k=1}^pf_k(a_i,l^\ast,l_i)\tilde{\beta}_k\big\}}{\sum_{i=0}^{m}\exp\big\{\sum_{k=1}^pf_k(a_i,l^\ast,l_i)\tilde{\beta}_k\big\}} \frac{\exp\big\{\sum_{k=1}^pf_k(a_u,l^\ast,l_u)\beta_k\big\}}{\sum_{i=0}^{m}\exp\big\{\sum_{k=1}^pf_k(a_i,l^\ast,l_i)\beta_k\big\}}\\&\qquad\qquad{}-
\frac{\exp\big\{\sum_{k=1}^pf_k(a_u,l^\ast,l_u)\tilde{\beta}_k\big\}}{\sum_{i=0}^{m}\exp\big\{\sum_{k=1}^pf_k(a_i,l^\ast,l_i)\tilde{\beta}_k\big\}} \frac{\exp\big\{\sum_{k=1}^pf_k(a_i,l^\ast,l_i)\beta_k\big\}}{\sum_{i=0}^{m}\exp\big\{\sum_{k=1}^pf_k(a_i,l^\ast,l_i)\beta_k\big\}}\Bigg],
\end{align*}
which is clearly zero when $\tilde{\beta}=\beta$. If follows that
\begin{align*}
\frac{\partial}{\partial \tilde{\beta}_k} f(\tilde{\beta}) &= \Exp\Bigg[ \frac{\sum_{i=1}^{M}X_{i,k}g(X_i,\tilde{\beta})}{1+\sum_{i=1}^{m}g(X_i,\tilde{\beta})}\Bigg|Y_K=1,M>0\Bigg] = 0
\end{align*}
for all $k=1,...,p$ if and only if $\tilde{\beta}=\beta$.
\end{proof}

\end{myAppendix}

\end{document}